\newtheorem{theorem}{Theorem}
\newtheorem{lemma}{Lemma}
\newtheorem{corollary}{Corollary}
\newtheorem{assumption}{Assumption}
\newcommand\independent{\protect\mathpalette{\protect\independenT}{\perp}}
\def\independenT#1#2{\mathrel{\rlap{$#1#2$}\mkern2mu{#1#2}}}
\DeclareMathOperator*{\argmin}{arg\,min}
\newcommand{\E}{\mathrm{E}}
\newcommand{\Cov}{\mathrm{Cov}}
\newcommand{\Var}{\mathrm{Var}}
\newcommand{\En}{\mathbbm{E}_n}
\newcommand{\Gn}{\mathbbm{G}_n}
\newcommand{\inv}{^{\text{-}1}}
\def\independenT#1#2{\mathrel{\rlap{$#1#2$}\mkern2mu{#1#2}}}
\newif\ifpublic
\newcommand{\inPublicVersion}[1]{\ifpublic #1\fi}
\newif\ifanonymous
\newcommand{\inAnonymousVersion}[1]{\ifanonymous #1\fi}
\begin{document}
\title{Optimal Categorical Instrumental Variables\inPublicVersion{\thanks{I thank St\'ephane Bonhomme, Bruce Hansen, Christian Hansen, Phillip Heiler, Samuel Higbee, Thibaut Lamadon, Lihua Lei, Jonas Lieber, Elena Manresa, Sendhil Mullainathan, Whitney Newey, Kirill Ponomarev, Guillaume Pouliot, Vitor Possebom, Max Tabord-Meehan, and Alexander Torgovitsky for valuable comments and suggestions, along with participants at the University of Chicago Econometrics advising group, the International Association for Applied Econometrics Conference 2023, and the North American Winter Meetings of the Econometric Society 2024. All remaining errors are my own.}}}

\inPublicVersion{\author{Thomas Wiemann\thanks{University of Chicago, \protect\url{wiemann@uchicago.edu}.}}}

\date{\today}
\maketitle

\vspace{-2.5em}

\begin{abstract}
\noindent This paper discusses estimation with a categorical instrumental variable in settings with potentially few observations per category. The proposed categorical instrumental variable estimator (CIV) leverages a regularization assumption that implies existence of a latent categorical variable with fixed finite support achieving the same first stage fit as the observed instrument. In asymptotic regimes that allow the number of observations per category to grow at arbitrary small polynomial rate with the sample size, I show that when the cardinality of the support of the optimal instrument is known, CIV is root-$n$ asymptotically normal, achieves the same asymptotic variance as the oracle IV estimator that presumes knowledge of the optimal instrument, and is semiparametrically efficient under homoskedasticity. Under-specifying the number of support points reduces efficiency but maintains asymptotic normality. In an application that leverages judge fixed effects as instruments, CIV compares favorably to commonly used jackknife-based instrumental variable estimators.  \\
\textbf{Keywords:} Causal inference, semiparametric efficiency, shrinkage, KMeans, judge iv. 
\end{abstract}

\pagenumbering{gobble}
\newpage
\pagenumbering{arabic}

\section{Introduction}\label{sec:Introduction}

Optimal instrumental variable estimators aim to improve statistical precision by maximizing the strength of the first stage. In the absence of functional form assumptions, optimal instruments need to be nonparametrically estimated which can introduce bias from over-fitting. In response to these challenges, a growing literature considers complexity-reducing assumptions on the first stage reduced form that -- when leveraged appropriately -- allow for second stage estimators with the same asymptotic variance as the infeasible oracle estimator that presumes knowledge of the optimal instrument. Increasingly popular in practice is the post-lasso IV estimator of \citet{belloni2012sparse} that assumes approximate sparsity of the first stage reduced form \citep[see, e.g.,][]{gilchrist2016something, dhar2022reshaping}.\footnote{Informally, approximate sparsity presumes that a slowly increasing unknown subset of instruments suffices to approximate the optimal instrument relative to the reduced form estimation error.} However, while approximate sparsity may be a well-suited assumption for economic settings with continuous instruments, it is often ill-suited when instruments are categorical. Simulations with categorical instruments in \citet{angrist2020machine} and in this paper highlight that even in settings with many more observations than instruments, lasso-based IV estimators can have substantially worse finite sample behavior than even two-stage least squares (TSLS).\footnote{Recently, \citet{kolesar2023fragility} formally characterize disadvantages of applying sparsity assumptions to settings with categorical variables. While their discussion focuses on settings with categorical controls, analogous arguments apply to sparsity assumptions for categorical instruments.} 

This paper proposes a new optimal instrumental variable estimator for settings with a large number of categorical instruments. To approximate the practical settings in which the number of observations per category is small, I characterize the proposed categorical instrumental variable estimator (CIV) in asymptotic regimes that allow the expected number of observations per category to grow at arbitrarily small polynomial rate with the sample size. To obtain root-$n$ normality in these challenging ``moderately many instruments'' settings, I consider a regularization assumption designed specifically for categorical instruments: Fixed finite support of the optimal instrument. When the cardinality of the support of the optimal instrument is known, I show that CIV achieves the same asymptotic variance as the infeasible oracle two-stage least squares estimator that presumes knowledge of the optimal instrument, and is semiparametrically efficient under homoskedasticity. Further, under-specifying the number of support points maintains asymptotic normality but results in efficiency loss.\footnote{An implementation of CIV is provided in the \textsf{R} package \inPublicVersion{\href{https://www.thomaswiemann.com/civ}{\texttt{civ}}}\inAnonymousVersion{\texttt{civ}} available on CRAN.}

The key idea of the categorical instrumental variable estimator is to leverage a latent categorical variable with fewer categories that achieves the same population-level fit in the first stage. Under the assumption that the support of the latent categorical variable is fixed with finite cardinality, it is possible to estimate a mapping from the observed categories to the latent categories. This estimated mapping can then be used to simplify the optimal instrumental variable estimator to a finite dimensional regression problem. Asymptotic properties of the CIV estimator then follow if the first-stage mapping can be estimated at a sufficiently fast rate. I provide sufficient conditions for estimation of the mapping at exponential rate using a $K$-Conditional-Means ($K$CMeans) estimator. The proposed $K$CMeans estimator is exact and computes very quickly with time polynomial in the number of observed categories, thus avoiding heuristic solution approaches otherwise associated with $K$Means-type problems.\footnote{$K$CMeans also has applications outside of instrumental variable settings. For example, $K$CMeans can easily be combined with the double/debiased machine learning framework of \citet{chernozhukov2018double}. An implementation of $K$CMeans is provided in the \textsf{R} package \inPublicVersion{\href{https://www.thomaswiemann.com/kcmeans}{\texttt{kcmeans}}}\inAnonymousVersion{\texttt{kcmeans}} \href{https://www.thomaswiemann.com/kcmeans} available on CRAN.}

The focus on categorical instrumental variables is motivated by the many examples in empirical economics, including leading examples in the many and weak instruments literature, featuring categorical instruments. In their analysis of returns to education, for example, \citet{angrist1991does} consider interactions between quarter of birth indicators and year and place of birth indicators, resulting in an instrument with 180 categories. Extensions of their design consider the fully saturated first stage of all interactions between the three sets of indicators resulting in 1530 categories, each representing a unique combination of quarter, year, and place of birth \citep[see, e.g.,][]{mikusheva2020inference, angrist2020machine}. More recently, a large empirical literature uses so-called examiner fixed effects as instruments \citep[see, e.g.,][]{kling2006incarceration,maestas2013does,aizer2015juvenile,dobbie2018effects,bhuller2020incarceration,agan2023misdemeanor}. For example, \citet{dobbie2018effects} use judge identities as instruments for pre-trial detention to analyze its effect on the probability of conviction.\footnote{Although judge identity is the most widely adopted examiner fixed effect instrument, other quasi-randomly assigned decision makers are also frequently considered. \citet{agan2023misdemeanor}, for example, leverage random assignment of assistant district attorneys to non-violent misdemeanor cases. See \citet{chyn2024examiner} for a comprehensive overview of empirical analyses using examiner fixed effect instruments.} 

The empirical literature on examiner fixed effects also motivates the analysis of the proposed CIV estimator under ``moderately many instruments'' asymptotics. In these economic analyses using examiner fixed effects as instruments, it is common practice to estimate first-stage fitted values via a leave-one-out procedure -- i.e., to compute the fitted value of a particular decision without directly using the decision itself. These jackknifing procedures are often motivated as a solution to a many instruments problem arising due to only few observations per decision maker \citep{chyn2024examiner}. However, unlike the econometric literature on many instruments that often focuses on asymptotic regimes in which the number of observations per instrument \textit{is fixed} \citep[e.g., ][]{bekker1994alternative}, practitioners appear to primarily consider settings that are approximated by regimes in which the number of observations per instruments \textit{grows slowly}. This follows from the combined use of jackknifed fitted values and conventional (heteroskedasticity-robust or clustered) standard errors, the latter resulting in invalid inferential statements under the traditional ``many instruments'' asymptotic regime but leading to correct inference in regimes with a growing number of observations per instrument (even if these are growing very slowly) \citep{chao2012asymptotic}.

In the moderately many instrument setting considered in this paper, traditional jackknife IV estimators (JIVE) \citep{phillips1977bias,angrist1999jackknife} are first-order equivalent to CIV. Indeed, the results of \citet{chao2012asymptotic} imply semiparametric efficiency of JIVE under weaker assumptions than I leverage for CIV as the former does not rely on a finite support restriction. Yet, as illustrated in the simulations and the application, JIVE does not uniformly outperform CIV: JIVE (and its IJIVE \citep{ackerberg2009improved} and UJIVE \citep{kolesar2013estimation} variants) can suffer from high instability and unnecessarily large standard errors -- a well-documented practical caveat of jackknife estimators that has been a topic of substantial discussion in the econometrics literature \citep[e.g.,][]{davidson2006case,ackerberg2006comment}. In contrast, CIV is nearly as stable as TSLS while exhibiting substantially less bias and achieving better coverage.

The key difference between jackknife-based estimators with categorical instruments and CIV is how the bias from a large number of instruments is controlled. For a heuristic illustration, it is useful to consider the expansion  
\begin{align}\label{eq:illustration_expansion}
 \frac{1}{\sqrt{n}}\sum_{i=1}^n\hat{m}_n(Z_i)U_i = \frac{1}{\sqrt{n}}\sum_{i=1}^n m_0(Z_i)U_i + \frac{1}{\sqrt{n}}\sum_{i=1}^n(\hat{m}_n(Z_i) - m_0(Z_i))U_i,
\end{align}
which plays a key role in the convergence of JIVE and CIV estimators. Here, $Z_i$ is the (categorical) instrument of the $i$th observation, $U_i$ is a structural residual, and $m_0(Z_i)$ and $\hat{m}_n(Z_i)$ are the true and estimated optimal instrument, respectively. The first term on the right hand side in \eqref{eq:illustration_expansion} is well-behaved under standard regularity conditions and converges in distribution to a Normal random variable. The second term can potentially diverge due to correlation between the residuals $U_i$ and the estimation error $\hat{m}_n(Z_i) - m_0(Z_i)$. For example, \citet{donald2001choosing} highlight that when $\hat{m}_n$ is a first-stage linear regression estimator (as in TSLS), divergence can occur if the number of instruments grows faster than $\sqrt{n}$. JIVE resolves this problem of potential divergence by enforcing independence between the residuals and estimation errors for any particular observation via leave-one-out estimation. Under this independence, it then suffices for $\hat{m}_n$ to be consistent for $m_0$.\footnote{Note that leave-one-out estimation of the first stage does not immediately imply independence of the estimation error and the residuals if observations are correlated, a point previously highlighted in \citet{frandsen2023cluster}. \citet{frandsen2023cluster} propose a jackknife IV estimator for examiner fixed effect designs in which observations are clustered at a level finer than that of the examiner (e.g., at the shift level). However, if observations are arbitrarily correlated at the level of the examiner -- as suggested by current empirical literature that often clusters at the examiner level -- no applicable jackknife-based IV estimator exists. The regularization approach presented in this paper may be an appealing solution to the problem of correlated examiner decisions since the key concentration inequalities leveraged in the proof extend to weakly dependent data. Formally investigating the robustness of CIV to weakly dependent judge decisions may thus be a potentially fruitful topic of future research.} An alternative solution is to ensure that the $\hat{m}_n$ is not just consistent but to also control the estimation errors via regularization. Sufficiently tight control of estimation errors then allows the second term  on the right hand side in \eqref{eq:illustration_expansion} to vanish without relying on leave-one-out estimation (or alternative sample splitting procedures). For example, \citet{belloni2012sparse} show that under approximate sparsity, a lasso-based estimator of the optimal instrument is sufficiently well controlled to allow for root-$n$ normal inference. Similarly, in this paper, I show that CIV provides sufficiently tight control on the estimation errors for the second term in \eqref{eq:illustration_expansion} to vanish without the need to explicitly enforce independence via jackknifing.\footnote{Sample splitting and regularized estimation of the optimal instrument are not exclusive and can often be used jointly to weaken conditions for the control of estimation errors. For example, \citet{belloni2012sparse} show that sample splitting allows for a denser representation of the optimal instrument. \citet{hk2014jive} combine JIVE and ridge regularization to admit estimation with more instruments than observations in non-sparse regimes.} 

The advantage of the proposed CIV estimator over alternative optimal IV estimators is that the regularization assumption employed for tight control of the estimation errors places restrictions on the data generating process that have straightforward economic interpretations when applied to categorical instruments such as examiner fixed effects. In particular, the regularization assumption I consider presumes existence of an unobserved combination of categories (or: examiners) that fully captures relevant variation from the instruments. In the application of \citet{dobbie2018effects}, for example, this regularization assumption presumes that the leniency of judges is a discrete (unobserved) type. For the restriction of the optimal instrument to two support points, this would correspond to categorizing judges as either being ``strict'' or ``lenient.'' While semiparametric efficiency of CIV only follows if this regularization assumption holds exactly, I show that CIV remains root-$n$ normal even if judges are categorized into fewer ``pseudo'' types than their true types (e.g., if there are also ``moderate'' judges). CIV thus provides a solution to a ``moderately many categorical instruments'' problem under conditions that economists can readily understand and discuss.

\textit{Related Literature.} The paper primarily draws from and contributes to three strands of literature. First, the literature on many instruments that develops estimators robust to asymptotic regimes in which the number of instruments is proportional to the sample size \citep[e.g.,][]{phillips1977bias, bekker1994alternative, angrist1995split, angrist1999jackknife,chao2005consistent, hansen2008estimation, chao2012asymptotic, hausman2012instrumental, kolesar2013estimation}. Most closely related is \citet{bekker2005instrumental} who provide limiting distributions of two-stage least squares (TSLS), limited information maximum likelihood (LIML), and heteroskedasticity-adjusted estimators under group asymptotics that consider replications of categorical instruments with a constant number of observations per category. In the less stringent asymptotic regimes I consider in this paper where the number of categories grows at a slower rate than the sample size, their results imply first-order equivalence of the LIML and the oracle IV estimator in the presence of heteroskedasticity when observations are equally distributed across categories and effects are constant.\footnote{Note also that Lemma 6.A of \citet{donald2001choosing} implies that LIML using categorical instruments achieves first-order oracle equivalence when the number of categories grows below the sample rate and causal effects are constant.} Despite the favorable statistical properties of LIML in settings with categorical instrumental variables and homogeneous effects, its application to causal effects estimation in economics is limited by its strong reliance on constant effects in the linear IV model. \citet{kolesar2013estimation} shows that under the nonparametric causal model of \citet{imbens1994identification}, the LIML estimand cannot generally be interpreted as a positively (weighted) average of causal effects. In the terminology of \citet{blandhol2022tsls}, LIML thus does not generally admit a weakly causal interpretation. In contrast, the proposed CIV estimator falls in the class of two-step estimators of \citet{kolesar2013estimation} and therefore admits a weakly causal interpretation in the presences of unobserved heterogeneity. 

Second, I draw from the literature on optimal instrumental variable estimators. Optimal instruments are conditional expectations that -- in the absence of functional form assumptions -- can be nonparametrically estimated \citep[e.g.,][]{amemiya1974multivariate,chamberlain1987asymptotic,newey1990efficient}. \citet{newey1990efficient} considers approximation of optimal instruments using polynomial sieve regression and characterizes the growth rate of series terms relative to the sample size that allows for root-$n$ consistency. In the setting of categorical variables, the restrictions imply that the number of categories should grow slower than root-$n$ to avoid the many instruments bias. CIV contributes to the literature on optimal IV estimation that leverages regularization assumptions on the first stage to allow for a larger number of considered instruments. In homoskedastic linear IV models, \citet{donald2001choosing} propose instrument selection criteria,  \citet{chamberlain2004random} consider regularization via a random coefficient assumption, and \citet{okui2011instrumental} suggests first stage estimation via Ridge regression ($\ell_2$ regularization). In linear IV models with heteroskedasticity, \citet{carrasco2012regularization} consider $\ell_2$ regularization (including Tikhonov regularization) and provide conditions for asymptotic efficiency of the resulting IV estimator in settings when the number of instruments is allowed to grow at faster rate than the sample size. \citet{belloni2012sparse} apply the lasso and post-lasso ($\ell_1$ regularization) to estimate optimal instruments in the setting with very many instruments. The authors provide sufficient conditions for the asymptotic efficiency of the resulting IV estimator, most notably, an approximate sparsity assumption, which presumes that a slowly increasing unknown subset of instruments suffices to approximate the optimal instrument relative to the reduced form estimation error. A common theme in the regularization approaches of these previous approaches is shrinkage of the first stage coefficients to zero. In the setting of categorical instruments, this corresponds to existence of one large latent base category (i.e., the constant) and only a few small deviating latent categories. Settings in which differing latent categories are approximately proportional are not admitted in these shrinkage-to-zero approaches as observed categories cannot be arbitrarily merged. CIV complements these existing optimal instrument estimators by leveraging an alternative regularization assumption that admits approximately proportional latent categories via arbitrary combination of observed categories.

Finally, I draw from the literature on estimation with finite support restrictions in longitudinal data settings including \citet{hahn2010panel}, \citet{bonhomme2015grouped}, \citet{bester2016grouped} and \citet{su2016identifying}. \citet{hahn2010panel} show that finite support assumptions substantially decrease the incidental parameter problem associated with increasingly many fixed effects. \citet{bester2016grouped} consider grouped fixed effects when the grouping is known. \citet{bonhomme2015grouped} and \citet{su2016identifying}, among others, consider settings with unknown groups and parameters. I adapt the $K$Means fixed effects estimator of \citet{bonhomme2015grouped} for estimation of the optimal categorical instrumental variable. In doing so, I make two contributions to the theoretical analysis of $K$Means. First, I construct and characterize a $K$-\textit{Conditional}-Means estimator suitable for cross-sectional regression. Leveraging arguments from \citet{fisher1958grouping} and \citet{wang2011ckmeans}, this estimator has the advantage of achieving global in-sample optimality in time polynomial in the number of observed categories, by-passing the NP-hard problem of $K$Means in multiple dimensions. I thus do not need to abstract away from optimization error as is usually necessary in applications of $K$Means.\footnote{Computational solutions to $K$Means applications in longitudinal data settings are an active literature. See, in particular, the ongoing work of \citet{chetverikov2022spectral} and \citet{mugnier2022simple}.} Second, I show that the $K$CMeans estimand can serve as an approximation to a conditional expectation function when the number of allowed-for support points is under-specified. In the instrumental variable setting considered in this paper, this property is leveraged to allow for root-$n$ consistent estimation given only a lower-bound on the support points of the optimal instrument. 

\textit{Outline.} The remainder of the paper is organized as follows: Section \ref{sec:estimator} introduces the CIV estimator in the simple setting without additional covariates and discusses its motivating assumptions. Section \ref{sec:theory} states the CIV estimator with covariates and presents the main theoretical result of the paper. Section \ref{sec:simulation} provides a simulation exercise to contrast the finite sample performance of CIV with competing estimators. Section \ref{sec:empirical_example_judges} revisits the application of \citet{dobbie2018effects} and \citet{chyn2024examiner} where bail judge identities are used as instruments for pre-trial release of defendants. Section \ref{sec:conclusion} concludes. 

\textit{Notation.} It is useful to clarify some notation. In the following sections, I characterize the law $P_n$ of the random vector $(Y, D, X^\top, Z^{(n)}, Z^{(0)}, U)$ associated with a single observation. Here, $Y$ denotes the outcome, $D$ is the scalar-valued endogenous variable of interest, $X$ is a $J$ dimensional vector of control variables, $Z^{(n)}$ is the observed instrumental variable, $Z^{(0)}$ is a latent instrumental variable, and $U$ are all other determinants of $Y$ other than $(D, X^\top)$. It is often convenient to use $W\equiv (D, X^\top)^\top$.  Throughout, I maintain focus on a scalar-valued $D$ for ease of exposition but highlight that any fixed number of endogenous variables can easily be accommodated under the presented framework. The joint law $P_n$ is allowed to change with the sample size $n\in \mathbbm{N}$ to approximate settings with relatively few observations per observed category. To permit the study of semiparametric efficiency, however, I keep the marginal law of $(Y, D, X^\top, Z^{(0)}, U)$ fixed. Explicit references to $P_n$ are largely omitted for brevity. Further, for a random vector $S$, let $\mathcal{S}$ denote its support and $\vert \mathcal{S}\vert$ the cardinality of the support. For an i.i.d.\ sample $\{S_i\}_{i=1}^n$ from $S$, define the operators $\En S \equiv \frac{1}{n}\sum_{i=1}^n S_i$ and $\Gn S \equiv \frac{1}{\sqrt{n}}\sum_{i=1}^n (S_i - \E S)$. For a measurable set $\mathcal{A}$, the indicator function $\mathbbm{1}_{\mathcal{A}}(S)$ is equal to one if $S\in \mathcal{A}$ and zero otherwise. For a function $f:\mathcal{A}\to \mathbbm{R}$, let $f(\mathcal{A})$ denote its image. The $\ell_2$-norm is denoted by $\|\cdot\|$ where for a matrix $M$ the norm is $\|M\|=\operatorname{tr}(M^\top M)^{1/2}$.

\section{The Categorical Instrumental Variable Estimator} \label{sec:estimator}

This section introduces the categorical instrumental variable estimator and discusses its key motivating assumptions. I focus on the setting without control variables in this section, leaving the general specification for Section \ref{sec:theory} that provides the formal asymptotic analysis.

The instrumental variable model is \begin{align*}
    Y = D\tau_0 + U, \quad \E[U\vert Z] \overset{a.s.}{=} 0,
\end{align*}
where $\tau_0$ is the parameter of interest. In the linear model with homogeneous effects considered here, the coefficient corresponds to the change in the outcome caused by a marginal change in endogenous variable of interest.

The mean-independence of $U$ and $Z$ implies the moment condition $ \E\big(Y - D\tau_0\big)\big(m(Z) - \E m(Z)\big) = 0$ for any measurable function $m:\mathcal{Z}\to \mathbbm{R}$. If in addition $\Cov(D, m(Z)) \neq 0$, a solution to the moment condition is given by 
\begin{align}\label{eq:alpha0}
    \tau_0 = \frac{\Cov(Y, m(Z))}{\Cov(D, m(Z))}.
\end{align}
Replacing the covariances with their sample analogues, the instrumental variable model thus suggests potentially infinitely many estimators for $\tau_0$ as indexed by the function $m$, each consistent and root-$n$ asymptotically normal under regularity assumptions. To decide between these alternative estimators, econometricians have turned to study their efficiency. 

Suppose the econometrician observes an i.i.d.\ sample $\{(Y_i, D_i, Z_i)\}_{i=1}^n$ from $(Y, D, Z)$ and $U$ is homoskedastic -- i.e., $\E [U^2\vert Z]\overset{a.s.}{=}\sigma^2$. If $f$ is chosen to be the conditional expectation $m_0(z)\equiv \E[D\vert Z=z]$, then the asymptotic variance of the corresponding oracle estimator \begin{align}\label{eq:infeasible_tau}
     \hat{\tau}^* = \frac{\En \big(Y - \En Y\big)\big(m_0(Z) - \En m_0(Z)\big)}{\En\big(D - \En D\big)\big(m_0(Z) - \En m_0(Z)\big)},
\end{align}
achieves the semiparametric efficiency bound $\sigma^2/\Var(m_0(Z))$ (see, e.g., \citealp{chamberlain1987asymptotic}). The transformed instruments $m_0(Z)$ are thus often termed ``optimal instruments.'' 

Formulating estimators based on the moment solution \eqref{eq:alpha0} has additional benefit of falling in the class of ``two-step'' estimators as defined by \citet{kolesar2013estimation}. The author shows that even if the underlying structural model is not additively separable in the structural error $U$ as presumed here, two-step estimators admit interpretation as a convex combination of causal effects under the LATE assumptions of \citet{imbens1994identification}.\footnote{In addition to stronger exogeneity assumptions, the LATE assumptions include a monotonicity assumption that prohibits simultaneous movements in-and-out of treatment for any increment of the optimal instrument.} This starkly contrast the LIML estimator and its variants, which do not generally permit a weakly causal interpretation in the LATE framework \citep{kolesar2013estimation}. Estimation based on \eqref{eq:alpha0} and the optimal instrument $m_0$ thus has both important economic and statistical benefits.  

In economic applications, the conditional expectation $m_0$ is rarely known. The oracle estimator $\hat{\tau}^*$ is thus typically infeasible in practice. A growing literature focuses on estimating the optimal instruments such that the asymptotic distribution of the resulting estimator for $\tau_0$ achieves the same asymptotic variance as the infeasible estimator. For example, \citet{newey1990efficient} considers nearest-neighbor and series regression to approximate $m_0$. In settings with growing numbers of instruments, \citet{belloni2012sparse} and \citet{carrasco2012regularization} consider regularized regression estimators. 

This paper is concerned with estimation of optimal instruments in settings where the observed instrument is categorical. To provide a better asymptotic approximation to settings with relatively few observations per category, I allow the number of categories to grow with the sample size. Letting $Z^{(n)}$ denote the observed instrument to highlight this dependence on the sample size index $n \in \mathbbm{N}$, Assumption \ref{assumption:rate_condition} formally specifies the rate at which the number of categories is allowed to grow. In particular, the number of categories can increase such that the expected number of observations per category (i.e., $n\times \Pr(Z^{(n)}=z)$) grows at arbitrarily slow polynomial rate with the sample size.

\begin{assumption}\label{assumption:rate_condition}
$\forall z \in \bigcup \mathcal{Z}^{(n)}, \exists \lambda_z \in (0, 1],  a_z > 0,$ such that $(P_n)_{n\in\mathbbm{N}}$ satisfies \begin{align*}
    \Pr(Z^{(n)} = z) n^{1 - \lambda_z} \to a_z.
\end{align*}
\end{assumption}

If all $\lambda_z\in(0.5, 1]$ the number of categories grows sufficiently slowly such that the optimal instrument can be estimated at a sufficiently fast rate by simple least squares of $D$ on the (increasing) set of indicators $(\mathbbm{1}_z(Z^{(n)}))_{z\in\mathcal{Z}^{(n)}}$. The more interesting settings are thus if for some categories $\lambda_z\in(0, 0.5]$. Since $\lambda_z$ can be arbitrarily close to $0$, this regime can be viewed as approximating settings in which the number of observations per category is small or moderate. These settings seem of particular practical importance in economic applications. Indeed, as discussed in the introduction, the joint use of jackknife-based instrumental variable estimators and conventional (heteroskedasticity-robust or clustered) standard errors in the examiner fixed effects literature (see, e.g., \citealp{chyn2024examiner}) suggests these ``moderately many instruments'' regimes are the leading asymptotic approximation in current applied research that leverages judge identities as instruments.

To accommodate efficient estimation in these more challenging settings with growing number of categories, I make a complexity-reducing assumption on the structure of the optimal instrument. Assumption \ref{assumption:Z0_setup} asserts that the optimal instrument, denoted $Z^{(0)}$, has finite support of cardinality $K_0\in \mathbbm{N}$. Unlike the observed instrument $Z^{(n)}$, the cardinality of the support of optimal instrument is thus fixed as the sample size increases.\footnote{The application of a finite support restriction as in Assumption \ref{assumption:Z0_setup} along with a rate condition as in Assumption \ref{assumption:rate_condition} to cross-sectional regression on categorical variables appears novel, however, finite support assumptions have grown increasingly popular in longitudinal data settings where the categories are individual identifiers (see, in particular, \citealp{bonhomme2015grouped}). In these longitudinal data settings, Assumption \ref{assumption:Z0_setup} corresponds to a group-fixed effects assumption and Assumption \ref{assumption:rate_condition} regulates the rates at which the cross-section and the time dimension grow.}

\begin{assumption}\label{assumption:Z0_setup}
$ $

\vspace{-0.5\baselineskip}

\begin{enumerate}[start=1,label={(\alph*)}]\setlength\itemsep{0.25em}
   \item $\exists K_0 \in \mathbbm{N}$ such that $\vert \mathcal{Z}^{(0)}\vert = K_0$.
\item $\forall n \in \mathbbm{N},$ $P_n$ is such that $\exists m_0^{(n)}: \mathcal{Z}^{(n)}\to  \mathcal{Z}^{(0)},\: m_0^{(n)}(Z^{(n)}) \overset{a.s.}{=} Z^{(0)}.$
\end{enumerate}
\end{assumption}

Assumption \ref{assumption:Z0_setup} implies that all relevant information on the endogenous variable included in $Z^{(n)}$ is also captured by the latent instrument $Z^{(0)}$. That is, there exists a deterministic map from values of $Z^{(n)}$ to values of the optimal instrument $Z^{(0)}$ -- or -- a partition $(\mathcal{Z}^{(n,0)}_k)_{k=1}^{K_0}$ of $\mathcal{Z}^{(n)}$ such that the optimal instrument is constant within each $\mathcal{Z}^{(n,0)}_k$: $m_0^{(n)}(z) = m_0^{(n)}(z'), \forall z, z' \in \mathcal{Z}^{(n,0)}_k$. When this map is known, efficient estimation of $\tau_0$ simplifies to TSLS of $Y$ on $D$ using the (non-increasing) set of indicators $(\mathbbm{1}_{\mathcal{Z}^{(n,0)}_k}(Z^{(n)}))_{k \in \{1, \ldots, K_0\}}$ as instruments. In practice, the map is unknown and needs to be estimated. 

To estimate the optimal instrument, I propose a $K$-Conditional-Means ($K$CMeans) estimator given by\begin{align}\label{eq:definition_m_hat_simple}
    \hat{m}_K^{(n)}\equiv \argmin_{\substack{m: \mathcal{Z}^{(n)}\to \mathcal{M} \\ \vert m(\mathcal{Z}^{(n)})\vert \leq K}} \: \En(D - m(Z^{(n)}))^2, 
\end{align}
where $\mathcal{M}\subset \mathbbm{R}$ is compact and $K\in\mathbbm{R}$ is the number of support points allowed-for by the researcher. In contrast to (unconditional) $K$Means which clusters observations into $K$ groups, $K$CMeans creates a partition of the support of the categorical variable $Z^{(n)}$ allowing its application to reduced form regression problems as required here. In practice, $K$CMeans is solved via a dynamic programming algorithm adapted from the algorithm for $K$Means discussed in \citet{wang2011ckmeans}. An important feature of this approach is that $K$CMeans can be solved to a global minimum in time polynomial in $\vert \mathcal{Z}^{(n)}\vert$, avoiding the heuristic solution approaches to $K$Means problems in multiple dimensions. I thus do not need to abstract away from optimization error as is common in other applications of $K$Means estimators (see, e.g., \citealp{bonhomme2015grouped}).

The categorical instrumental variable (CIV) estimator is then simply given by \begin{align}
    \hat{\tau} = \frac{\En\left(Y - \En Y\right)\left(\hat{m}_{K}^{(n)}(Z) - \En \hat{m}_{K}^{(n)}(Z)\right)}{\En \left(D - \En D\right)\left(\hat{m}_{K}^{(n)}(Z) - \En \hat{m}_{K}^{(n)}(Z)\right)}.
\end{align}

When $K_0$ is known and $\hat{m}_{K_0}^{(n)}$ is the $K$CMeans estimator \eqref{eq:definition_m_hat_simple} with $K = K_0$, CIV is a feasible analogue to the infeasible oracle estimator $\hat{\tau}^*$ in \eqref{eq:infeasible_tau}. As discussed in detail in Section \ref{sec:theory} and given the therein stated assumptions, $\hat{\tau}$ has the same asymptotic distribution as the infeasible estimator and is semiparametrically efficient under homoskedasticity. 

The result of semiparametric efficiency of CIV depends on the correct choice of $K=K_0$. In applications, economic insight can occasionally provide concrete suggestions for a value of $K_0$. For example, Appendix \ref{app:ak_app} provides a rationale for why $K_0=2$ in the application of \citet{angrist1991does}. In other applications, knowledge of $K_0$ is less certain. In judge fixed effects applications, for example, a researcher may consider the classification of judges as ``lenient'' and ``strict'' as only an approximation to the true latent types of judges. In these settings where a concrete suggestions for $K_0$ is not available, the CIV estimator that uses $\hat{m}_{K}^{(n)}$ as an approximate optimal instrument maintains root-$n$ normality for any $K\in\{2,\ldots, K_0\}$ provided that the approximate optimal instrument remains relevant. In particular, for $K \in \{2, \ldots, K_0\}$, $\hat{m}_K^{(n)}$ estimates the approximate optimal instrument \begin{align}\label{eq:definition_mK}
    m_K^{(n)} \equiv \argmin_{\substack{m: \mathcal{Z}^{(n)}\to \mathcal{M} \\ \vert m(\mathcal{Z}^{(n)})\vert \leq K}} \: \E(Z^{(0)} - m(Z^{(n)}) )^2.
\end{align}
As stated in Theorem \ref{theorem:CIV_w_covariates_mk}, when $m_K^{(n)}(Z^{(n)})$ satisfies the usual instrumental variable rank condition, the corresponding CIV estimator remains asymptotically normal even when $K<K_0$. Albeit at a loss of statistical efficiency, this implies that knowledge of a lower-bound on $K_0$ is often sufficient for inference (e.g., choosing $K=2$).\footnote{Importantly, under-specification of $K$ has no implication on the interpretation of the CIV estimand as a convex combination of causal effects. As illustrated in the proof of Lemma \ref{lemma:mK_characteristics}, the $K$CMeans solution is contiguous in the sense of \citet{fisher1958grouping} -- i.e., only adjacent support points of $Z^{(0)}$ are combined when constructing the approximate optimal instrument. As a consequence, if a monotonicity assumption applies to $Z^{(0)}=m^{(n)}_0(Z^{(n)})$ to warrant interpretation of the TSLS as a convex combination of causal effects as in \citet{angrist1995two}, then a monotonicity assumption will also apply to approximate optimal instrument $m^{(n)}_K(Z^{(n)})$. Analogous arguments apply in a setting with covariates, albeit under the qualification of saturated controls which applies broadly to two-step estimators. See, for example, \citet{blandhol2022tsls} for a discussion on the importance of saturated controls for the interpretation of TSLS estimands.}

\section{Asymptotic Theory}\label{sec:theory}

This section provides a formal discussion of the CIV estimator. After defining the $K$CMeans and CIV estimators in the presence of additional exogenous variables of fixed dimension, I state the main result of the paper in Theorem \ref{theorem:CIV_w_covariates_mk}. Corollary \ref{corollary:semiparametric_efficiency} provides the statement of semiparametric efficiency for known $K_0$. 

Assumption \ref{assumption:iv_setup} defines the instrumental variable model with $Z^{(0)}$ being the unobserved optimal instrument previously characterized by Assumption \ref{assumption:Z0_setup}. The parameter of interest is the vector $\theta_0 \equiv (\tau_0, \beta_0^\top)^\top$. 

\begin{assumption}\label{assumption:iv_setup}
$ $

\vspace{-0.5\baselineskip}

    \begin{enumerate}[start=1,label={(\alph*)}]\setlength\itemsep{0.25em}
    \item $Y = D\tau_0 + X^\top\beta_0 + U,\: \E[U\vert X, Z^{(0)}]  \overset{a.s.}{=} 0$.
    \item $\E[D\vert X, Z^{(0)}]  \overset{a.s.}{=} Z^{(0)} + X^\top\pi_0$.
\end{enumerate}
\end{assumption}

For $K\in \mathbbm{N}$, define the CIV estimator as \begin{align*}
    \hat{\theta}^K \equiv \left(\En\hat{F}_KW^\top\right)\inv \En \hat{F}_KY
\end{align*}
where $\hat{F}_K \equiv (\hat{g}_K(Z,X), X^\top)^\top$ with $\hat{g}_K(Z,X) \equiv \hat{m}_K^{(n)}(Z) + X^\top \hat{\pi}$ and 
\begin{align}\label{eq:definition_m_hat}
    \hat{m}_K^{(n)}\equiv \argmin_{\substack{m: \mathcal{Z}^{(n)}\to \mathcal{M} \\ \vert m(\mathcal{Z}^{(n)})\vert \leq K}} \: \En(D - X^\top \hat{\pi} - m(Z^{(n)}))^2, 
\end{align}
with $\hat{\pi}$ being a root-$n$ consistent first-step estimator for $\pi_0$. Note that because $X$ is relatively low-dimensional, root-$n$ consistent estimators for $\pi_0$ can be obtained as within-category regression estimators of $D$ on $X$ with fixed effects defined by the categories of $Z^{(n)}$.

Assumption \ref{assumption:Pn_setup} ensures that the tail probabilities of the conditional expectation function residual $V \equiv D - Z^{(0)} - X^\top\pi_0$ and the exogenous variables $X$ decay at exponential rate. Analogously to \citet{bonhomme2015grouped}, Assumption \ref{assumption:Pn_setup} along with a dependence restriction (see Assumption \ref{assumption:asymptotics_setup} (d)) allows for application of exponential inequalities that are key to bound the probability of misclassifying values of $Z^{(n)}$ in estimation of the (approximate) optimal instruments. 

\begin{assumption}\label{assumption:Pn_setup}
$\exists b_1, b_2 > 0$ and $\tilde{b}_{1j},\tilde{b}_{2j}>0, \forall j \in [J]$, such that for all $n\in \mathbbm{N}$, $P_n$ satisfies
    \begin{enumerate}[start=1,label={(\alph*)}]\setlength\itemsep{0.25em}
\item $\Pr(\vert V\vert >v\vert Z^{(n)})\overset{a.s.}{\leq} \exp\left\{1 - \left(\frac{v}{b_1}\right)^{b_2}\right\}, \forall v >0$.
\item $\Pr(\vert X_j\vert >x\vert Z^{(n)})\overset{a.s.}{\leq} \exp\left\{1 - \left(\frac{x}{\tilde{b}_{1j}}\right)^{\tilde{b}_{2j}}\right\}, \forall x >0, j \in [J]$.
\end{enumerate}
\end{assumption}

Finally, Assumption \ref{assumption:asymptotics_setup} (a) places moment restrictions on the second stage error used, in particular, for consistent estimation of standard errors. Assumption \ref{assumption:asymptotics_setup} (b) requires compactness of the first stage coefficients. Assumption \ref{assumption:asymptotics_setup} (c) requires $\hat{\pi}$ to be a root-$n$ consistent estimator for $\pi_0$. Assumption \ref{assumption:asymptotics_setup} (d) asserts that the econometrician observes independent samples from $(Y, D, Z^{(n)}, X^\top)$.

\inPublicVersion{\clearpage}

\begin{assumption}\label{assumption:asymptotics_setup}
$ $

\vspace{-0.5\baselineskip}

\begin{enumerate}[start=1,label={(\alph*)}]\setlength\itemsep{0.25em}
\item $\exists L < \infty, \epsilon_2>2$ such that $\frac{1}{L}\leq \E U^{4}\leq L$. 
\item $\mathcal{Z}^{(0)}\subset \mathcal{M}$ and $\pi_0\in \Pi$ where $\mathcal{M}\subset \mathbbm{R}$ and  $\Pi\subset \mathbbm{R}^J$ are compact.
\item $\|\hat{\pi} - \pi_0\| = O_p(n^{-1/2}).$
\item The data is an i.i.d.\ sample \{$(Y_i, D_i, Z^{(n)}_i, X_i^\top)\}_{i=1}^n$ from $(Y, D, Z^{(n)}, X^\top)$. 
\end{enumerate}
\end{assumption}

Theorem \ref{theorem:CIV_w_covariates_mk} states the main result of the paper. In particular, it shows that when the infeasible oracle estimator \begin{align*}
    \tilde{\theta}^K &\equiv \left(\En F_KW^\top\right)\inv \En F_KY,
\end{align*}
where $F_K \equiv (g_K(Z^{(n)},X), X^\top)^\top$ with $g_K(Z^{(n)},X) \equiv m^{(n)}_K(Z^{(n)}) + X^\top \pi_0$, is root-$n$ normal, then assumptions \ref{assumption:rate_condition}-\ref{assumption:asymptotics_setup}  are sufficient for the CIV estimator $\hat{\theta}^K$ to be root-$n$ normal with the same asymptotic covariance matrix as long as $K\in \{2, \ldots, K_0\}$.

\begin{theorem}\label{theorem:CIV_w_covariates_mk}
    Let assumptions \ref{assumption:rate_condition}-\ref{assumption:asymptotics_setup} hold. If in addition $\E F_KW^\top$ for $K\leq K_0$ is non-singular, then, as $n\to \infty$,\begin{align*}
        \sqrt{n}\Sigma_K^{-\frac{1}{2}}(\hat{\theta}^K - \theta_0) \overset{d}{\to} N(0, \mathbbm{I}_J),
    \end{align*}
    where $\Sigma_K = \E[F_KW^\top]^{-1}\E[U^2F_KF_K^\top]\E[WF_K^\top]^{-1}$. This result continues to hold if $\Sigma_K$ is replaced with the consistent estimator \begin{align*}
        \hat{\Sigma}_K \equiv \En[\hat{F}_KW^\top]^{-1}\En[\hat{U}^2\hat{F}_K\hat{F}_K^\top]\En[W\hat{F}_K^\top]^{-1},
    \end{align*}
    where $\hat{U}\equiv Y - W^\top \hat{\theta}^K$.
\end{theorem}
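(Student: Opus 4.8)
The plan is to show that $\hat\theta^K$ is asymptotically equivalent to the infeasible oracle $\tilde\theta^K$ and that the latter is root-$n$ normal by standard linear-IV arguments. Using $Y = W^\top\theta_0 + U$, whenever the relevant matrices are invertible,
\begin{align*}
\sqrt n(\hat\theta^K - \theta_0) &= \left(\En\hat F_K W^\top\right)^{-1}\frac{1}{\sqrt n}\sum_{i=1}^n\hat F_{K,i}U_i, \\
\sqrt n(\tilde\theta^K - \theta_0) &= \left(\En F_K W^\top\right)^{-1}\frac{1}{\sqrt n}\sum_{i=1}^n F_{K,i}U_i.
\end{align*}
By Lemma \ref{lemma:mK_characteristics}, $m_K^{(n)}(Z^{(n)})$ depends on $Z^{(n)}$ only through $Z^{(0)}$, so $F_K$ is $\sigma(X,Z^{(0)})$-measurable and Assumption \ref{assumption:iv_setup}(a) gives $\E[F_KU]=0$; with the moment bounds implied by Assumption \ref{assumption:asymptotics_setup}(a), Assumption \ref{assumption:Pn_setup}, and compactness of $\mathcal M,\Pi$, the law of large numbers and the Lindeberg--L\'evy CLT yield $\En F_KW^\top\overset{p}{\to}\E F_KW^\top$ and $n^{-1/2}\sum_i F_{K,i}U_i\overset{d}{\to}N(0,\E[U^2F_KF_K^\top])$, hence $\sqrt n(\tilde\theta^K - \theta_0)\overset{d}{\to}N(0,\Sigma_K)$ under non-singularity of $\E F_KW^\top$. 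It therefore suffices to establish (i) $\En\hat F_KW^\top = \En F_KW^\top + o_p(1)$ and (ii) $n^{-1/2}\sum_i\hat F_{K,i}U_i = n^{-1/2}\sum_i F_{K,i}U_i + o_p(1)$, since with $n^{-1/2}\sum_i F_{K,i}U_i = O_p(1)$ and continuity of matrix inversion these give $\sqrt n(\hat\theta^K - \tilde\theta^K) = o_p(1)$.

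Both (i) and (ii) reduce to controlling the single nonzero coordinate of $\hat F_K - F_K$, namely $\hat g_K - g_K = \big(\hat m_K^{(n)}(Z) - m_K^{(n)}(Z^{(n)})\big) + X^\top(\hat\pi - \pi_0)$. The $X^\top(\hat\pi-\pi_0)$ piece is handled by Assumption \ref{assumption:asymptotics_setup}(c): by Cauchy--Schwarz its contribution to (ii) is at most $\|\hat\pi-\pi_0\|\cdot\|n^{-1/2}\sum_i X_iU_i\| = O_p(n^{-1/2})O_p(1)$ (using $\E[XU]=0$), and its contribution to (i) is at most $\|\hat\pi-\pi_0\|\,\En(\|X\|\,\|W\|) = O_p(n^{-1/2})O_p(1)$. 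For the $\hat m_K^{(n)}(Z) - m_K^{(n)}(Z^{(n)})$ piece I invoke the preceding classification argument for $K$CMeans (which, as in \citealp{bonhomme2015grouped}, exploits the exponential tails of Assumption \ref{assumption:Pn_setup}, the growth rate of Assumption \ref{assumption:rate_condition}, and Assumption \ref{assumption:asymptotics_setup}(d)): with probability approaching one $\hat m_K^{(n)}$ induces the same partition $(\mathcal Z^{(n,0)}_k)_{k=1}^{K}$ of $\mathcal Z^{(n)}$ as $m_K^{(n)}$. On that event $\hat m_K^{(n)}$ assigns to group $k$ the value $\bar\mu_k \equiv n_k^{-1}\sum_{j\,:\,Z^{(n)}_j\in\mathcal Z^{(n,0)}_k}(D_j - X_j^\top\hat\pi)$, while $m_K^{(n)}$ assigns $\mu_k \equiv \E[Z^{(0)}\mid Z^{(n)}\in\mathcal Z^{(n,0)}_k] = \E[D - X^\top\pi_0\mid Z^{(n)}\in\mathcal Z^{(n,0)}_k]$. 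Since the marginal law of $Z^{(0)}$ is fixed with $\vert\mathcal Z^{(0)}\vert = K_0$ and, by the contiguity property of Lemma \ref{lemma:mK_characteristics}, each $\mathcal Z^{(n,0)}_k$ is a union of level sets of $Z^{(0)}$, the group frequencies $n_k/n$ are bounded away from zero with probability approaching one; together with $\|\hat\pi - \pi_0\| = O_p(n^{-1/2})$ this yields $\max_k\vert\bar\mu_k - \mu_k\vert = O_p(n^{-1/2})$, which delivers (i).

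The crux is the contribution of $\hat m_K^{(n)}(Z) - m_K^{(n)}(Z^{(n)})$ to (ii) --- the analogue of the second term in \eqref{eq:illustration_expansion}, and the step where CIV dispenses with leave-one-out estimation. Let $k(i)$ denote the group of observation $i$ and $R_j \equiv D_j - X_j^\top\pi_0 - \mu_{k(j)}$. On the partition-recovery event,
\begin{align*}
\frac{1}{\sqrt n}\sum_{i=1}^n\big(\bar\mu_{k(i)} - \mu_{k(i)}\big)U_i &= \frac{1}{\sqrt n}\sum_{k}\frac{1}{n_k}\left(\sum_{j\,:\,k(j)=k}R_j\right)\left(\sum_{i\,:\,k(i)=k}U_i\right) \\
&\quad - (\hat\pi-\pi_0)^\top\frac{1}{\sqrt n}\sum_{k}\frac{1}{n_k}\left(\sum_{j\,:\,k(j)=k}X_j\right)\left(\sum_{i\,:\,k(i)=k}U_i\right).
\end{align*}
Because $k(\cdot)$ is a function of $Z^{(0)}$, Assumption \ref{assumption:iv_setup}(a) gives $\E[U\mid Z^{(0)}]=0$ and, by the definition of $\mu_k$, $\E[R\,\mathbbm{1}\{k(\cdot)=k\}]=0$ and $\E[U\,\mathbbm{1}\{k(\cdot)=k\}]=0$; hence each $\sum_{j\,:\,k(j)=k}R_j$ and $\sum_{i\,:\,k(i)=k}U_i$ is $O_p(\sqrt n)$ by the CLT, $\sum_{j\,:\,k(j)=k}X_j = O_p(n)$ by the LLN, and $n_k$ is of order $n$. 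Summing over the $K\le K_0$ groups, the first term on the right is $O_p(n^{-1/2})$ and the second is $\|\hat\pi-\pi_0\|O_p(1) = O_p(n^{-1/2})$. Off the partition-recovery event the left-hand side is nonzero only on an event of vanishing probability while its summand lies in the compact set $\mathcal M$, so it too is $o_p(1)$; this proves (ii) and hence the asymptotic normality. This is the step I expect to be the main obstacle: it works precisely because the $K$CMeans estimation error for observation $i$ is a group average, so summing these errors against the residuals $U_i$ collapses to a sum over only $K_0$ groups of products of two $O_p(\sqrt n)$ partial sums divided by group sizes of order $n$ --- the underlying partition-recovery fact, though the technical heart of the paper, is isolated in the earlier lemmas and used here as a black box.

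Consistency of $\hat\Sigma_K$ uses the same ingredients: $\En\hat F_KW^\top\overset{p}{\to}\E F_KW^\top$ and $\En W\hat F_K^\top\overset{p}{\to}\E WF_K^\top$ by (i), and, expanding $\hat U = U - W^\top(\hat\theta^K - \theta_0)$ and $\hat F_K = F_K + (\hat F_K - F_K)$ in $\En[\hat U^2\hat F_K\hat F_K^\top]$, the leading term converges to $\E[U^2F_KF_K^\top]$ by the LLN (finite fourth moments from Assumption \ref{assumption:asymptotics_setup}(a), Assumption \ref{assumption:Pn_setup}, and compactness) while every remainder term carries a factor $\hat\theta^K - \theta_0 = o_p(1)$ or a factor that is $o_p(1)$ uniformly on the partition-recovery event (and negligible off it). Continuous mapping then gives $\hat\Sigma_K\overset{p}{\to}\Sigma_K$, and a final application of Slutsky's lemma replaces $\Sigma_K$ by $\hat\Sigma_K$ in the limiting statement.
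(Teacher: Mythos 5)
Your proposal is correct in outline and shares the paper's overall architecture: the same sandwich decomposition $\sqrt n(\hat\theta^K-\theta_0) = (\En\hat F_KW^\top)^{-1}\sqrt n\,\En\hat F_KU$, a CLT for the oracle part using that $F_K$ is a function of $(X,Z^{(0)})$ so that $\E[F_KU]=0$, and a plug-in argument for $\hat\Sigma_K$. Where you genuinely differ is the treatment of the key remainder $\frac{1}{\sqrt n}\sum_i(\hat m_K(Z_i)-m_K(Z_i))U_i$: the paper never conditions on exact recovery of the partition, but instead introduces the infeasible group-means estimator $\tilde m_K$ on the true partition (see \eqref{eq:definition_alpha_tilde}), proves the super-polynomial rate $\En(\hat m_K(Z)-\tilde m_K(Z))^2=o_p(n^{-\delta})$ for every $\delta>0$ (Lemma \ref{lemma:rate_hat_tilde}, built on the misclassification-rate Lemma \ref{lemma:rate_g}), and kills the remainder by Cauchy--Schwarz, $\bigl(n\En(\hat m_K-\tilde m_K)^2\bigr)^{1/2}(\En U^2)^{1/2}=o_p(1)$, plus a $\max_k\vert\tilde\alpha^K_k-\alpha^K_k\vert\cdot O_p(1)$ bound for the $\tilde m_K-m_K$ piece; you instead work on the event that $\hat m_K$ recovers the partition exactly and exploit that the estimation error is then a group average, so the error-against-$U$ sum collapses to $K$ bilinear products of $O_p(\sqrt n)$ partial sums divided by group sizes of order $n$ -- an equally valid and arguably more transparent way to see why no jackknifing is needed, and your identification $\mu_k=\E[D-X^\top\pi_0\mid\gamma^K(Z)=k]$ is legitimate because the groups are unions of $Z^{(0)}$-level sets by the contiguity established in Lemma \ref{lemma:mK_characteristics}. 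Two caveats. First, the exact-recovery statement you use as a black box is not literally what the classification lemmas assert: Lemma \ref{lemma:rate_g} (transferred to $(\hat\alpha^K,\hat\pi)$ via Lemma \ref{lemma:consistent_alpha} and Assumption \ref{assumption:asymptotics_setup}(c)) gives $\En\mathbbm{1}\{\hat\gamma^K(Z)\ne\gamma^K(Z)\}=o_p(n^{-\delta})$, and exact recovery for all sampled observations follows only by taking $\delta\geq 1$ and noting that the empirical misclassification frequency is either $0$ or at least $1/n$; your argument therefore implicitly needs the full super-polynomial strength of that lemma (mere consistency of the classification would not do), and this step should be made explicit. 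Second, your consistency argument for $\hat\Sigma_K$ is stated loosely (``every remainder term carries a factor $\hat\theta^K-\theta_0=o_p(1)$''); as in Lemma \ref{lemma:SimgaK_consistency}, the quadratic and cross terms in $\hat U^2=\bigl(U-W^\top(\hat\theta^K-\theta_0)\bigr)^2$ require either the $\frac1n\max_{i\le n}W_{ij}^2$-type maximal bounds the paper uses or, equivalently, the root-$n$ rate of $\hat\theta^K$ combined with fourth-moment bounds (available from Assumption \ref{assumption:Pn_setup} and Assumption \ref{assumption:asymptotics_setup}(a)); the ingredients are all present, but the bookkeeping is not free.
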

\begin{proof}
    See Appendix \ref{app:proofs}.
\end{proof}

For the CIV estimator that uses $K=K_0$, Corollary \ref{corollary:semiparametric_efficiency} further provides a semiparametric efficiency result under homoskedasticity.\footnote{Note that in the heteroskedastic setting, weighting observations proportional to their variance can improve the asymptotic variance. Since this approach follows standard general method of moments arguments, I omit further discussion here.} 

 \begin{corollary}\label{corollary:semiparametric_efficiency}
     Let the assumptions of Theorem \ref{theorem:CIV_w_covariates_mk} hold. If in addition $\E[U^2\vert X, Z^{(0)}] \overset{a.s.}{=} \sigma^2$, then the asymptotic covariance $\Sigma_{K_0}$ of $\hat{\theta}^{K_0}$ achieves the semiparametric efficiency bound: \begin{align*}
        \Sigma_{K_0} = \sigma^2\E[h_0(Z^{(0)},X)h_0(Z^{(0)},X)^\top]^{-1},
     \end{align*}
     where $h_0(Z^{(0)},X) \equiv \E[W\vert X, Z^{(0)}].$
 \end{corollary}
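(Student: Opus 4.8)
The plan is to show that when $K=K_0$ the CIV instrument coincides almost surely with the optimal instrument $h_0(Z^{(0)},X)$, reduce the sandwich formula for $\Sigma_{K_0}$ from Theorem \ref{theorem:CIV_w_covariates_mk} to $\sigma^2\E[h_0h_0^\top]^{-1}$ by iterated expectations, and then match this against the classical semiparametric efficiency bound for the conditional moment restriction. First I would argue that $m_{K_0}^{(n)}(Z^{(n)})\overset{a.s.}{=}Z^{(0)}$: by Assumption \ref{assumption:Z0_setup}, $Z^{(0)}=m_0^{(n)}(Z^{(n)})$ for a map $m_0^{(n)}:\mathcal{Z}^{(n)}\to\mathcal{Z}^{(0)}$ whose image has cardinality $K_0$, so $m_0^{(n)}$ is feasible in the program \eqref{eq:definition_mK} defining $m_{K_0}^{(n)}$ and attains objective value zero; since that objective is nonnegative, any minimizer satisfies $\E(Z^{(0)}-m_{K_0}^{(n)}(Z^{(n)}))^2=0$, which gives the claim (this is also recorded in Lemma \ref{lemma:mK_characteristics}).

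Next I would identify $F_{K_0}$ with $h_0$. By definition $F_{K_0}=(m_{K_0}^{(n)}(Z^{(n)})+X^\top\pi_0,\,X^\top)^\top$, so by the previous step $F_{K_0}\overset{a.s.}{=}(Z^{(0)}+X^\top\pi_0,\,X^\top)^\top$; on the other hand $h_0(Z^{(0)},X)=\E[W\vert X,Z^{(0)}]=(\E[D\vert X,Z^{(0)}],\,X^\top)^\top=(Z^{(0)}+X^\top\pi_0,\,X^\top)^\top$ by Assumption \ref{assumption:iv_setup}(b). Hence $F_{K_0}\overset{a.s.}{=}h_0(Z^{(0)},X)$, and in particular $F_{K_0}$ is $\sigma(X,Z^{(0)})$-measurable. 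Using this, the law of iterated expectations gives $\E[F_{K_0}W^\top]=\E[h_0\,\E[W^\top\vert X,Z^{(0)}]]=\E[h_0h_0^\top]$ and likewise $\E[WF_{K_0}^\top]=\E[h_0h_0^\top]$, while homoskedasticity gives $\E[U^2F_{K_0}F_{K_0}^\top]=\E[\E[U^2\vert X,Z^{(0)}]\,h_0h_0^\top]=\sigma^2\E[h_0h_0^\top]$. Substituting into the expression for $\Sigma_{K_0}$ from Theorem \ref{theorem:CIV_w_covariates_mk} yields $\Sigma_{K_0}=\E[h_0h_0^\top]^{-1}(\sigma^2\E[h_0h_0^\top])\E[h_0h_0^\top]^{-1}=\sigma^2\E[h_0h_0^\top]^{-1}$, where non-singularity of $\E[h_0h_0^\top]$ is the $K=K_0$ case of the rank condition already imposed in Theorem \ref{theorem:CIV_w_covariates_mk}.

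Finally I would invoke the classical characterization of the semiparametric efficiency bound for $\theta_0$ under the conditional moment restriction $\E[Y-W^\top\theta_0\vert X,Z^{(0)}]\overset{a.s.}{=}0$ with homoskedastic errors \citep[e.g.,][]{chamberlain1987asymptotic}: the bound equals $\sigma^2\big(\E[\E[W\vert X,Z^{(0)}]\E[W\vert X,Z^{(0)}]^\top]\big)^{-1}=\sigma^2\E[h_0h_0^\top]^{-1}$, which is exactly $\Sigma_{K_0}$. Since this bound is computed in the more informative model in which the latent optimal instrument $Z^{(0)}$ is observed — the marginal law of $(Y,D,X^\top,Z^{(0)},U)$ being held fixed along the sequence $(P_n)$ — attaining it establishes semiparametric efficiency of $\hat{\theta}^{K_0}$.

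The algebra here is routine; the two points requiring care are (i) the exact recovery $m_{K_0}^{(n)}(Z^{(n)})\overset{a.s.}{=}Z^{(0)}$, which relies on the $K_0$-support program attaining value zero under Assumption \ref{assumption:Z0_setup}, and (ii) stating the efficiency bound for the correct semiparametric model, i.e. the conditional moment restriction given $(X,Z^{(0)})$ rather than given $(X,Z^{(n)})$. I expect (ii) to be the main conceptual obstacle, since it requires being explicit that the bound is that of the model with the latent optimal instrument observed and that CIV attains it even though $Z^{(0)}$ is never used in estimation.
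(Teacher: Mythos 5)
Your proposal is correct and follows essentially the same route as the paper's proof: identify the first component of $F_{K_0}$ with $\E[D\vert X,Z^{(0)}]=Z^{(0)}+X^\top\pi_0$ (so $F_{K_0}\overset{a.s.}{=}h_0(Z^{(0)},X)$), collapse the sandwich $\Sigma_{K_0}$ to $\sigma^2\E[h_0h_0^\top]^{-1}$ via iterated expectations and homoskedasticity, and cite \citet{chamberlain1987asymptotic} for the bound in the model with the fixed law of $(Y,D,X^\top,Z^{(0)},U)$. Your extra explicitness about $m_{K_0}^{(n)}(Z^{(n)})\overset{a.s.}{=}Z^{(0)}$ (the $K=K_0$ case of Lemma \ref{lemma:mK_characteristics}) and about which conditional moment model the bound refers to is a faithful elaboration of the same argument, not a different one.
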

\begin{proof}
    See Appendix \ref{app:semiparametric_proof}.
\end{proof}

\enlargethispage{\baselineskip}

\section{Monte Carlo Simulation}\label{sec:simulation}

This section discusses a Monte Carlo simulation exercise to illustrate finite sample behavior of the proposed CIV estimator and highlight key challenges of alternative optimal IV estimators for estimation with categorical instrumental variables. 

For $i = 1, \ldots, n$, the data generating process is given by \begin{align*}
    Y_i &= D_i \pi_{0}(X_i) + X_i\beta_0 + U_i,\\
    D_i&= m_0(Z_i) + X_i\gamma_0 + V_i,
\end{align*}
where $(U_i, V_i)\sim \mathcal{N}(0, \left[\begin{smallmatrix}
1 & 0.6 \\
0.6 & \sigma_V^2 \end{smallmatrix}\right])$, $D_i$ is a scalar-valued endogenous variable, $X_i\sim\textrm{Bernoulli}(\frac{1}{2})$ is a binary covariate and $\beta_0 = \gamma_0 = 0$, and $Z_i$ is the categorical instrument taking values in $\mathcal{Z} = \{1, \ldots, 40\}$ with equal probability. To introduce correlation between $Z_i$ and $X_i$, I further set $\Pr(Z_i \text{ is odd}\vert X_i = 0) = \Pr(Z_i \text{ is even}\vert X_i = 1) = 0$. The optimal instrument $m_0$ is constructed by first partitioning $\mathcal{Z}$ into $K_0$ equal subsets and then assigning evenly-spaced values in the interval $[0, C]$.\footnote{For example, for $K_0=2$, $m_0(z)=0$ for $z \in \{1,\ldots, 20\}$ and $m_0(z)=C$ for $z \in \{21,\ldots, 40\}$.} I choose the scalars $\sigma_V^2$ and $C$ such that the variance of the first stage variable is fixed to 1 and the concentration parameter in the smallest considered sample is $\mu^2 = 180$.\footnote{In particular, $\sigma_V^2=0.9$, and $C\approx 0.85$ for $K_0=2$ and $C\approx 1.153$ for $K_0=4$ so that with $n=800$, the concentration parameter $nM_0^\top (\Cov(\mathbbm{1}_z(Z))_{z \in \mathcal{Z}}) M_0/\sigma_V^2 = 180$ where $M_0$ is the 40-dimensional vector of first stage coefficients associated with every category. Choosing $\sigma_V^2$ and $C$ in this manner is akin to the simulation setup in \citet{belloni2012sparse}.} As in the simulation considered in \citet{kolesar2013estimation}, the data generating process allows for individual treatment effects $\pi_0(X_i)$ to differ with covariates. Here, $\pi_0(X_i) = \tau_0 + 0.5(1 - 2X_i)$ so that the expected treatment effect is simply $\E\pi_0(X) = \tau_0.$ As a consequence, the second stage is heteroskedastic and -- unlike two-step IV estimators like CIV -- the LIML estimator is inconsistent for the average treatment effect.

I compare properties of thirteen estimators in the simulation: An infeasible oracle estimator $\tilde{\theta}^{K_0}$ with known optimal instrument, CIV with $K=2$ and $K=4$, TSLS, JIVE, IJIVE, UJIVE, and LIML that use the observed instruments, and five machine-learning based IV estimators that use lasso with cross-validated or plug-in penalty parameters, ridge regression, gradient tree boosting, or random forests to estimate the optimal instrument.

Table \ref{tab:app_sim_resK2} provides the bias, median absolute error (MAE), rejection probabilities of a 5\% significance test (rp(0.05)),\footnote{Standard errors used for construction of rp(0.05) are heteroskedasitcity robust and do not include additional variance terms associated with traditional many instrument asymptotics. Inclusion of these additional terms has no qualitative consequence for the rejection probabilities of JIVE, IJIVE, or UJIVE.} and the inter-quantile range between 10th and 90th empirical quantiles (iqr(10, 90)) for a DGP in which $\tau_0 = 0$ and the optimal instrument has $K_0=2$ support points. All estimates are computed on sample sizes with 20, 25, 100, and 150 expected observations per observed instrument. As expected in the strong instruments setting considered here, the oracle estimator achieves small bias and nominal false rejection rates across all sample sizes. Its feasible analogues that attempt to estimate the optimal instrument in a first step, on the other hand, vary substantially across sample sizes. 

\begin{table}[!htbp]\small
        \begin{threeparttable}
        \centering
  \caption{Simulation Results ($K_0=2$)}\label{tab:app_sim_resK2}
\begin{tabular}{lccccccccc}
\toprule
\midrule
$K_0=2$ & \multicolumn{4}{c}{$\E N_z=20$ } &       & \multicolumn{4}{c}{$\E N_z= 25$ } \\
\cmidrule{2-5}\cmidrule{7-10}      & Bias  & MAE   & rp(0.05) & iqr(10,90) &       & Bias  & MAE   & rp(0.05) & iqr(10,90) \\ \midrule
Oracle & -0.006 & 0.058 & 0.050 & 0.238 &       & -0.008 & 0.058 & 0.037 & 0.215 \\
CIV (K=2) & 0.031 & 0.065 & 0.085 & 0.234 &       & 0.011 & 0.055 & 0.041 & 0.216 \\
CIV (K=4) & 0.101 & 0.104 & 0.269 & 0.237 &       & 0.078 & 0.088 & 0.232 & 0.223 \\
Lasso-IV (cv) & 0.184 & 0.190 & 0.587 & 0.331 &       & 0.156 & 0.157 & 0.530 & 0.278 \\
Lasso-IV (plug-in) & 0.261 & 0.306 & 0.526 & 0.941 &       & 0.251 & 0.306 & 0.524 & 0.866 \\
Ridge-IV (cv) & 0.120 & 0.123 & 0.353 & 0.249 &       & 0.097 & 0.104 & 0.333 & 0.225 \\
xboost-IV & 0.120 & 0.123 & 0.354 & 0.249 &       & 0.097 & 0.104 & 0.333 & 0.226 \\
ranger-IV & 0.129 & 0.132 & 0.394 & 0.258 &       & 0.107 & 0.113 & 0.368 & 0.235 \\
TSLS  & 0.120 & 0.123 & 0.353 & 0.249 &       & 0.097 & 0.104 & 0.333 & 0.226 \\
JIVE  & -0.022 & 0.087 & 0.118 & 0.342 &       & -0.020 & 0.079 & 0.085 & 0.295 \\
IJIVE & -0.013 & 0.085 & 0.118 & 0.335 &       & -0.012 & 0.080 & 0.083 & 0.291 \\
UJIVE & -0.013 & 0.085 & 0.119 & 0.336 &       & -0.013 & 0.080 & 0.083 & 0.291 \\
LIML  & -0.142 & 0.142 & 0.189 & 0.294 &       & -0.145 & 0.139 & 0.259 & 0.257 \\
      &       &       &       &       &       &       &       &       &  \\
$K_0=2$ & \multicolumn{4}{c}{$\E N_z=100$ } &       & \multicolumn{4}{c}{$\E N_z= 150$ } \\
\cmidrule{2-5}\cmidrule{7-10}      & Bias  & MAE   & rp(0.05) & iqr(10,90) &       & Bias  & MAE   & rp(0.05) & iqr(10,90) \\ \midrule
Oracle & -0.003 & 0.029 & 0.045 & 0.106 &       & -0.001 & 0.021 & 0.048 & 0.085 \\
CIV (K=2) & -0.003 & 0.029 & 0.045 & 0.106 &       & -0.001 & 0.021 & 0.048 & 0.085 \\
CIV (K=4) & 0.016 & 0.033 & 0.112 & 0.122 &       & 0.013 & 0.025 & 0.099 & 0.097 \\
Lasso-IV (cv) & 0.036 & 0.045 & 0.234 & 0.143 &       & 0.026 & 0.032 & 0.194 & 0.111 \\
Lasso-IV (plug-in) & 0.062 & 0.067 & 0.385 & 0.178 &       & 0.025 & 0.032 & 0.181 & 0.109 \\
Ridge-IV (cv) & 0.025 & 0.038 & 0.163 & 0.130 &       & 0.019 & 0.028 & 0.134 & 0.101 \\
xboost-IV & 0.025 & 0.038 & 0.162 & 0.129 &       & 0.019 & 0.028 & 0.134 & 0.101 \\
ranger-IV & 0.032 & 0.043 & 0.207 & 0.139 &       & 0.025 & 0.032 & 0.186 & 0.109 \\
TSLS  & 0.025 & 0.038 & 0.162 & 0.129 &       & 0.019 & 0.028 & 0.134 & 0.101 \\
JIVE  & -0.007 & 0.036 & 0.125 & 0.138 &       & -0.002 & 0.027 & 0.090 & 0.105 \\
IJIVE & -0.006 & 0.036 & 0.122 & 0.138 &       & -0.001 & 0.027 & 0.088 & 0.105 \\
UJIVE & -0.006 & 0.036 & 0.122 & 0.138 &       & -0.001 & 0.027 & 0.088 & 0.105 \\
LIML  & -0.141 & 0.141 & 0.826 & 0.129 &       & -0.139 & 0.138 & 0.940 & 0.102 \\ 
\midrule \bottomrule
\end{tabular}
  \begin{tablenotes}[para,flushleft]
  \scriptsize
  \item \textit{Notes.} Simulation results are based on 1000 replications using the DGP described in Section \ref{sec:simulation} with $K_0=2$. For each replication, $\E \pi_0(X) = 0$ but potentially $\En \pi_0(X) \neq 0$. Results are thus normalized by $\En \pi_0(X).$ $\E N_z$ denotes the expected number of observations per observed category, MAE denotes the median absolute error, rp(0.05) denotes the false rejection probability at a 5\% significance level, and iqr(10, 90) denotes the inter-quantile range between the 10\% and 90\% quantile. ``Oracle'' denotes the infeasible oracle estimator $\tilde{\theta}^{K_0}$ with known optimal instrument, ``CIV ($K=2$)'' and ``CIV ($K=4$)'' correspond to the proposed categorical IV estimators restricted to 2 and 4 support points in the first stage, ``Lasso-IV (cv)'' and ``Lasso-IV (plug-in)'' denotes IV estimators that use lasso to estimate the optimal instrument using penalty parameters chosen via 10-fold cross-validation or via the plug-in rule of \citet{belloni2012sparse} , ``Ridge-IV (cv)'' denotes an IV estimator that uses ridge regression to estimate the optimal instrument using a penalty parameter chosen via 10-fold cross-validation, ``\texttt{xgboost}-IV'' and ``\texttt{ranger}-IV'' denote IV estimators that use gradient tree boosting as implemented by the \texttt{xgboost} package and random forests as implemented by the \texttt{ranger} package to estimate the optimal instrument, ``TSLS'' denotes the two-stage least squares estimator using the observed instruments, ``JIVE'', ``IJIVE'', and ``UJIVE'' denote the jackknife instrumental variable estimators of \citet{angrist1999jackknife}, \citet{ackerberg2006comment}, and \citet{kolesar2013estimation}, respectively, and ``LIML'' denotes the limited information maximum likelihood estimator using the observed instruments.  
  \end{tablenotes}
    \end{threeparttable}
\end{table}

The CIV estimator restricted to two support points achieves near-oracle performance at a moderate number of observations per category. This is in strong contrast to the alternative optimal instrument estimators in this setting. In particular, even at the much larger sample size with 150 observations per category, TSLS has a false rejection rate of 0.15, far above the 5\% nominal level. Further, none of the considered machine-learning based optimal instrument estimators improve upon TSLS. Given that there are only 40 first stage instruments in a total sample size of up to 6000 ($=40 \times 150$), it may be surprising that application of the lasso does not result in better empirical performance. However, note that the shrinkage assumptions (implicitly) leveraged by any of the machine-learning based estimators are not suitable approximations of the categorical instrumental variable design considered here. Only the CIV estimator with over-specified number of support points has slightly lower bias and false rejection rates than TSLS, yet, remains inferior to the CIV estimator with correct number of groups. Note that the theory provided in this paper does not provide results for CIV estimators with $K>K_0$. Finally, all jackknife-based estimators exhibit small bias, but are substantially more dispersed than CIV. This dispersion prohibits JIVE, IJIVE, and UJIVE to accurately control size even at 150 expected observations per instrument. This dispersion seems in large part be due to the treatment effect heterogeneity present in the considered DGP: Replicating the results for a DGP \textit{without} treatment effect heterogeneity shows that the jackknife-based estimators control size at all sample sizes (see Appendix \ref{app:additional_sims}).

Table \ref{tab:app_sim_resK4} replicates the simulation in a DGP where $K_0=4$. In contrast to the previous results on CIV with \textit{over}-specified support points, the results show that estimated confidence intervals for CIV with \textit{under}-specified number of support points can achieve correct coverage. While the first-stage estimation problem is substantially more challenging with $K_0=4$ as all support points are closer together, CIV with both $K=2$ and $K=4$ achieves near-oracle performance for the larger sample sizes. This is again in strong contrast to any of the competing optimal instrument estimators, whose biases are an order of magnitude larger and whose false rejection probabilities are two to three times those of the two CIV estimators. Similarly, the jackknife-based estimators are more dispersed than CIV and fail to control size in the data generating process with treatment effect heterogeneity.

Finally, as expected given the heterogeneous second-stage effects, LIML is heavily biased for the average treatment effect throughout in both Table \ref{tab:app_sim_resK2} and  \ref{tab:app_sim_resK4}.

\begin{table}[!htbp]\small
        \begin{threeparttable}
        \centering
  \caption{Simulation Results ($K_0=4$)}\label{tab:app_sim_resK4}
\begin{tabular}{lccccccccc}
\toprule
\midrule
$K_0=4$ & \multicolumn{4}{c}{$\E N_z=20$ } &       & \multicolumn{4}{c}{$\E N_z= 25$ } \\
\cmidrule{2-5}\cmidrule{7-10}      & Bias  & MAE   & rp(0.05) & iqr(10,90) &       & Bias  & MAE   & rp(0.05) & iqr(10,90) \\ \midrule
Oracle & 0.003 & 0.064 & 0.040 & 0.229 &       & 0.005 & 0.060 & 0.034 & 0.227 \\
CIV (K=2) & 0.093 & 0.102 & 0.190 & 0.236 &       & 0.077 & 0.086 & 0.161 & 0.218 \\
CIV (K=4) & 0.113 & 0.115 & 0.326 & 0.244 &       & 0.092 & 0.096 & 0.270 & 0.225 \\
Lasso-IV (cv) & 0.168 & 0.159 & 0.452 & 0.352 &       & 0.131 & 0.133 & 0.416 & 0.308 \\
Lasso-IV (plug-in) & 0.190 & 0.249 & 0.475 & 0.763 &       & 0.183 & 0.219 & 0.492 & 0.657 \\
Ridge-IV (cv) & 0.122 & 0.125 & 0.365 & 0.258 &       & 0.103 & 0.105 & 0.333 & 0.234 \\
xboost-IV & 0.122 & 0.125 & 0.364 & 0.258 &       & 0.103 & 0.105 & 0.333 & 0.234 \\
ranger-IV & 0.125 & 0.128 & 0.378 & 0.263 &       & 0.106 & 0.107 & 0.347 & 0.241 \\
TSLS  & 0.122 & 0.125 & 0.364 & 0.258 &       & 0.103 & 0.105 & 0.333 & 0.234 \\
JIVE  & -0.018 & 0.095 & 0.096 & 0.343 &       & -0.012 & 0.080 & 0.106 & 0.299 \\
IJIVE & -0.009 & 0.092 & 0.100 & 0.337 &       & -0.005 & 0.078 & 0.106 & 0.295 \\
UJIVE & -0.010 & 0.093 & 0.098 & 0.337 &       & -0.005 & 0.078 & 0.106 & 0.295 \\
LIML  & -0.142 & 0.143 & 0.209 & 0.297 &       & -0.136 & 0.138 & 0.244 & 0.281 \\
      &       &       &       &       &       &       &       &       &  \\
$K_0=4$ & \multicolumn{4}{c}{$\E N_z=100$ } &       & \multicolumn{4}{c}{$\E N_z= 150$ } \\
\cmidrule{2-5}\cmidrule{7-10}      & Bias  & MAE   & rp(0.05) & iqr(10,90) &       & Bias  & MAE   & rp(0.05) & iqr(10,90) \\ \midrule
Oracle & 0.000 & 0.027 & 0.051 & 0.104 &       & 0.000 & 0.023 & 0.055 & 0.090 \\
CIV (K=2) & 0.008 & 0.031 & 0.064 & 0.119 &       & 0.002 & 0.027 & 0.069 & 0.103 \\
CIV (K=4) & 0.010 & 0.028 & 0.075 & 0.111 &       & 0.003 & 0.024 & 0.063 & 0.093 \\
Lasso-IV (cv) & 0.031 & 0.041 & 0.192 & 0.142 &       & 0.022 & 0.034 & 0.201 & 0.120 \\
Lasso-IV (plug-in) & 0.022 & 0.038 & 0.143 & 0.140 &       & 0.015 & 0.032 & 0.161 & 0.116 \\
Ridge-IV (cv) & 0.026 & 0.036 & 0.158 & 0.124 &       & 0.018 & 0.031 & 0.169 & 0.110 \\
xboost-IV & 0.026 & 0.036 & 0.157 & 0.124 &       & 0.018 & 0.031 & 0.169 & 0.110 \\
ranger-IV & 0.027 & 0.038 & 0.166 & 0.131 &       & 0.019 & 0.032 & 0.177 & 0.115 \\
TSLS  & 0.026 & 0.036 & 0.157 & 0.124 &       & 0.018 & 0.031 & 0.169 & 0.110 \\
JIVE  & -0.006 & 0.035 & 0.110 & 0.133 &       & -0.003 & 0.031 & 0.119 & 0.116 \\
IJIVE & -0.004 & 0.034 & 0.109 & 0.133 &       & -0.002 & 0.031 & 0.115 & 0.116 \\
UJIVE & -0.004 & 0.034 & 0.109 & 0.133 &       & -0.002 & 0.031 & 0.115 & 0.116 \\
LIML  & -0.141 & 0.142 & 0.837 & 0.124 &       & -0.140 & 0.139 & 0.944 & 0.102 \\
\midrule \bottomrule
\end{tabular}%
  \begin{tablenotes}[para,flushleft]
  \scriptsize
  \item \textit{Notes.} Simulation results are based on 1000 replications using the DGP described in Section \ref{sec:simulation} with $K_0=4$. See the notes of Table \ref{tab:app_sim_resK2} for a description of the estimators.
  \end{tablenotes}
    \end{threeparttable}
\end{table}

For additional insights on the empirical performance of the considered estimators, Figure \ref{fig:simMK_1} plots power curves for the hypothesis test $H_0: \tau_0 = 0$ at significance level $\alpha = 0.05$ in the design with $K_0=2$. Panels (a) and (b) keep the second stage coefficients constant, while panels (c) and (d) mirror the heterogeneous second stage effect design above. For brevity, the figure focuses on only a subset of estimators considered previously: CIV with $K=2$, the oracle estimator, TSLS, JIVE, LIML, and the lasso-based IV estimator with cross-validated penalty level. Appendix \ref{app:additional_sims} provides the figures corresponding to the remaining estimators.

\begin{figure}[!htbp] 
\caption{Power Curves with and without Treatment Effect Heterogeneity}\label{fig:simMK_1}
    \centering
    \begin{subfigure}[b]{0.375\textwidth}
    \centering
    \includegraphics[width=1\textwidth]{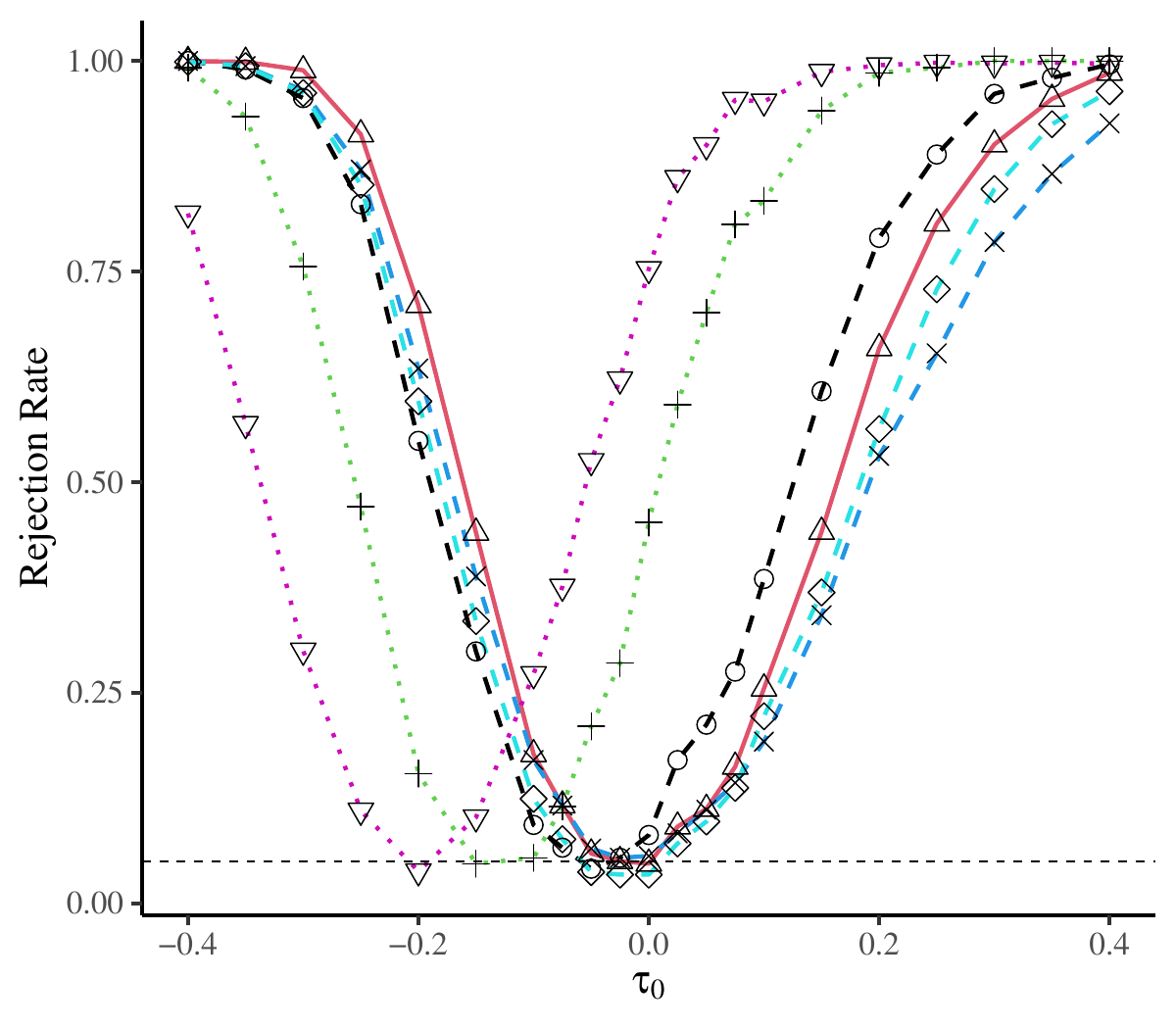}
    \subcaption{constant $\pi_0$, $\E N_z= 20$}
    \end{subfigure}
    \begin{subfigure}[b]{0.375\textwidth}
    \centering
    \includegraphics[width=1\textwidth]{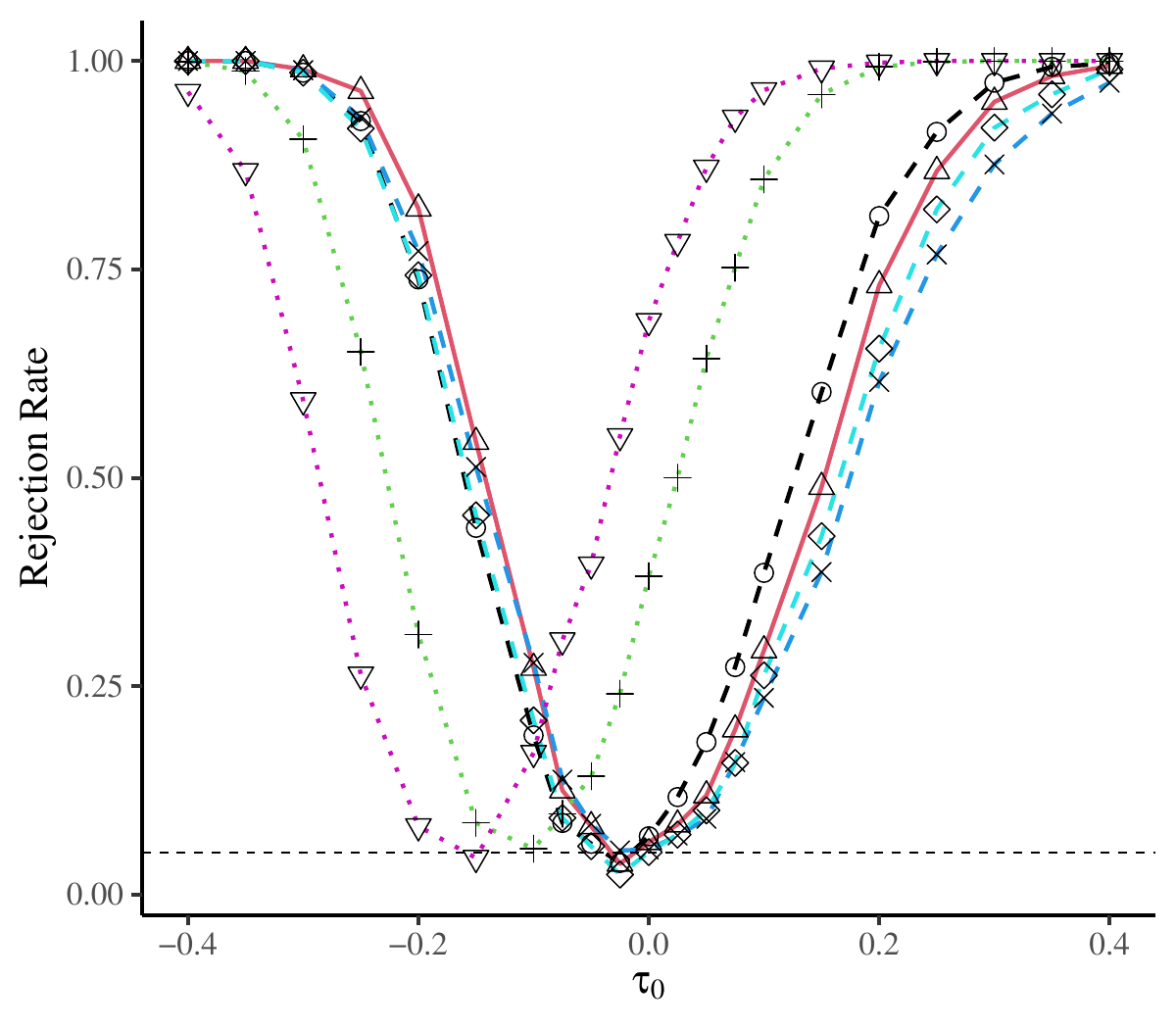}
    \subcaption{constant $\pi_0$, $\E N_z= 25$}
    \end{subfigure}
    \begin{subfigure}[b]{0.15\textwidth}
    \centering
    \includegraphics[width=1\textwidth]{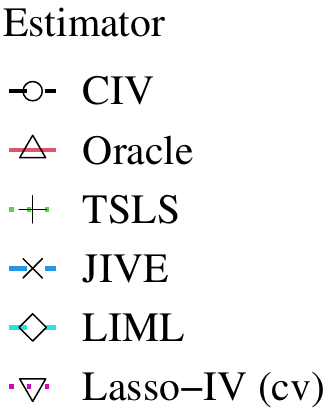}\vspace{7em}
    \end{subfigure}
    \\
    \begin{subfigure}[b]{0.375\textwidth}
    \centering
    \includegraphics[width=1\textwidth]{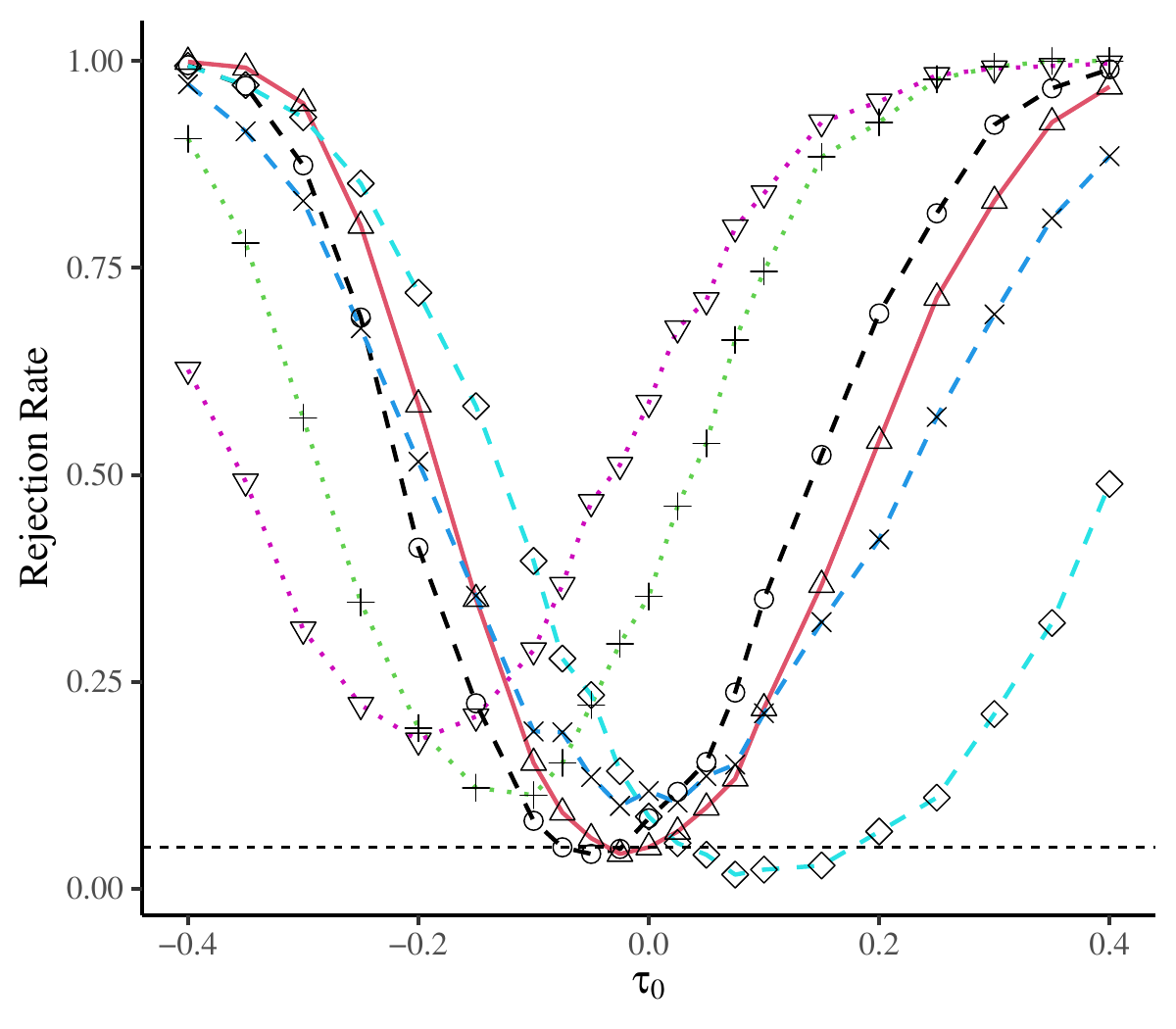}
    \subcaption{heterogeneous $\pi_0$, $\E N_z= 20$}
    \end{subfigure}
    \begin{subfigure}[b]{0.375\textwidth}
    \centering
    \includegraphics[width=1\textwidth]{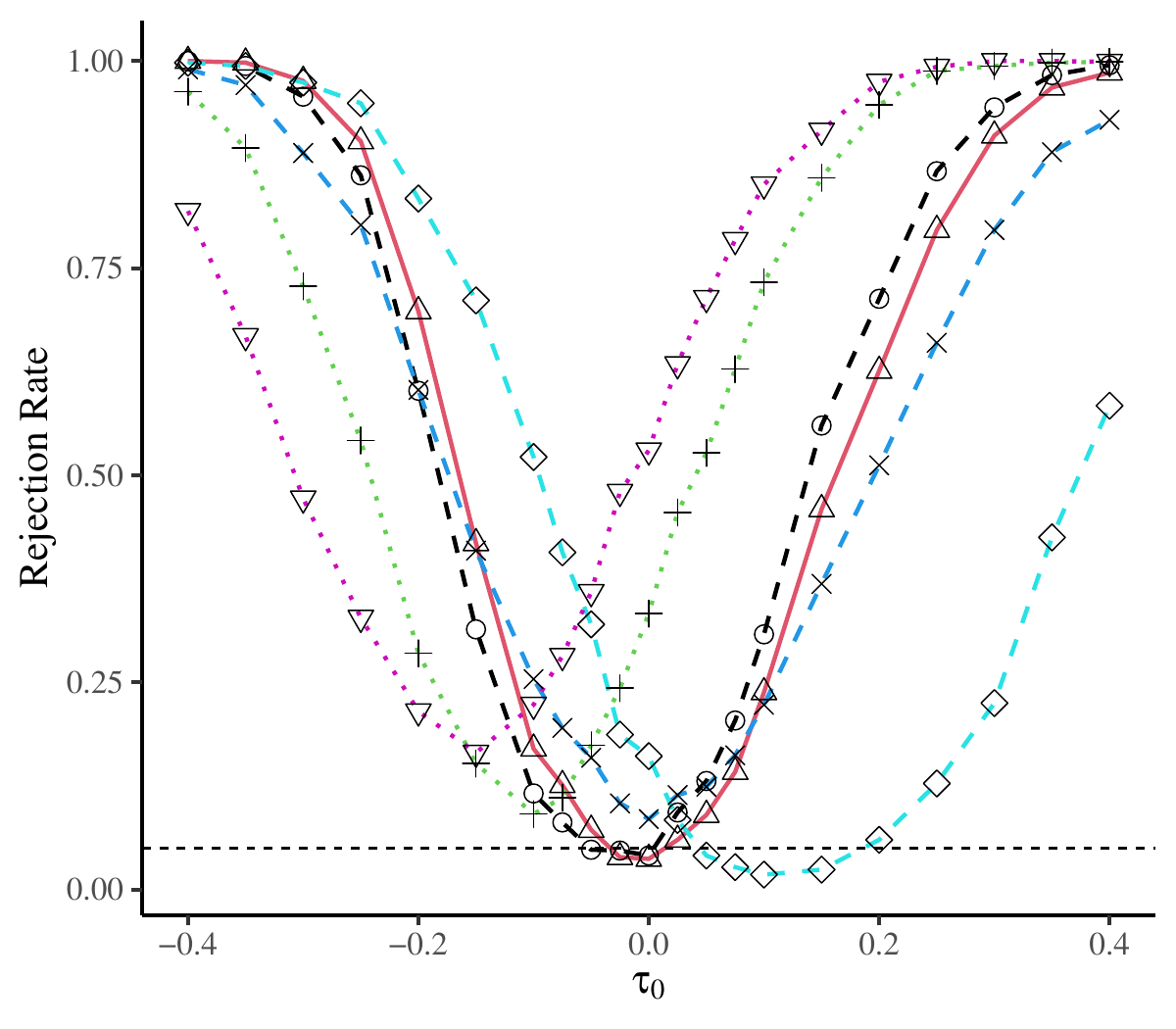}
    \subcaption{heterogeneous $\pi_0$, $\E N_z= 25$}
    \end{subfigure}\hspace{6em}
     \vskip0.25em
 \begin{minipage}{0.9\textwidth} 
 {\scriptsize \textit{Notes.} Simulation results are based on 1000 replications using the DGP described in Section \ref{sec:simulation} with $K_0=2$. Panels (a) and (b) are with constant effects so that $\pi_0(X_i) = \tau_0$. Panels (c) and (d) allow for covariate-dependent effects with $\pi_0(X_i) = 1 - 2X_i +  \tau_0$. The power curves plot the rejection rate of testing $H_0: \tau_0 = 0$ at significance level $\alpha=0.05$. ``CIV'' correspond to the proposed categorical IV estimator with $K=2$, ``Oracle'' denotes the infeasible oracle estimator $\tilde{\theta}^{K_0}$ with known optimal instrument, ``TSLS'' denotes the two-stage least squares estimator using the observed instruments, ``JIVE'' denotes the jackknife IV estimator of \citet{angrist1999jackknife}, ``LIML'' denotes the limited information maximum likelihood estimator using the observed instruments, and ``Lasso-IV (cv)'' denotes IV estimator that use lasso to estimate the optimal instrument using a cross-validated penalty parameter.} 
 \end{minipage}
\end{figure} 

When there is no treatment effect heterogeneity (panels (a) and (b)), the LIML and JIVE estimators achieves near oracle performance at even the smallest sample size considered. This mirrors the insights of \citet{donald2001choosing}, \citet{bekker2005instrumental}, and \citet{chao2012asymptotic}, as well as the excellent empirical performance of the LIML estimator in the simulations of \citet{angrist2020machine}. CIV with $K=2$ results in similar rejection rates for the moderate sample with 25 expected observations per category. In contrast to LIML and JIVE, CIV retains its near-oracle performance when second stage effects are heterogeneous (panels (c) and (d)). In all panels, TSLS and Lasso-IV (cv) are heavily biased.

\section{Effects of Pre-Trial Release on Conviction}\label{sec:empirical_example_judges}

This section applies the CIV estimator to a judge fixed effects instrumental variable analysis of the effect of pre-trial release on conviction. As in \citet{dobbie2018effects} and \citet{chyn2024examiner}, I consider a 2006-14 sample of misdemeanor and felony cases assigned to weekend bail hearings in the Miami-Dade County, Florida.\footnote{The data is publicly available from \citet{chyn2024examiner}.} The data covers 94,355 cases for 186 bail judges. The goal of the example is to illustrate the application of CIV and alternative IV estimators in settings with potentially noisy estimates of judge leniency.

Following the previous literature analyzing the Miami-Dade data, the relationship of interest is the causal effect of pre-trial release on the conviction of the defendant. Pre-trial detention occurs if a defendant who has been assigned to a bail hearing does not post bail thereafter.\footnote{As described in \citet{chyn2024examiner}, bail can also be posted after an initial bail value is set immediately following arrest. These observations are not in the sample.} Because the bail judge can change the bail amount, whether or not a judge is lenient may have a direct effect on the pre-trial release of a defendant. Using bail judge identity as an instrument is then motivated by the fact that bail judges are quasi-randomly assigned conditional on the court and date of the hearing. See, in particular, \citet{chyn2024examiner} for a comprehensive discussion of the IV assumptions in the Miami-Dade setting.

I focus on the analysis of a simple instrumental variable specification:\begin{align*}
    Y &= D\tau_0 + X^\top \beta_0 + \varepsilon,\\
    D &= m_0(Z) + X^\top \pi_0 + \nu, 
\end{align*}
with $\E[\varepsilon\vert Z, X] = \E[\nu\vert Z, X]=0$, where $Y$ is an indicator equal to one if the defendant is convicted, $D$ is an indicator equal to one if they have met bail, $X$ is a set of court-by-time fixed effects, $Z$ is the categorical instrument capturing bail judge identities, and $m_0(z)$ is the (unobserved) leniency of judge $z\in \mathcal{Z}$. The parameter of interest is the parameter $\tau_0$, which captures the causal effect of pre-trial release on conviction under correct model misspecification. Under stronger distributional independence and monotonicity assumptions, $\tau_0$ can also capture a convex combination of local average treatment effects in a data generating process with unobserved treatment effect heterogeneity \citep[see, in particular,][]{frandsen2023judging,blandhol2022tsls}.

A potential concern with estimating $\tau_0$ given an i.i.d.\ sample $\{(Y_i, D_i, X_i, Z_i)\}_{i=1}^n$ via TSLS is that the number of cases per judge can be moderate so that a many instrument bias arises. As shown in Figure \ref{fig:hist_ncases}, the distribution of cases per judge indeed varies substantially, with most judges working on around 500 cases but a few completing less than 200 cases. The fact that leniency of judges may be estimated noisily may then motivate the use of jackknife-based estimation in \citet{dobbie2018effects} and \citet{chyn2024examiner}.\footnote{In contrast, the dimension of the court-by-time fixed effects in the Miami-Dade County setting is not a first order concern. Appendix \ref{app:additional_plots_judges} shows that that almost all fixed effects are associated with more than 600 cases with no large left tail.} 

\begin{figure}[!h] 
\caption{Distribution of Cases per Judge}\label{fig:hist_ncases}
    \centering
    \includegraphics[width=0.5\textwidth]{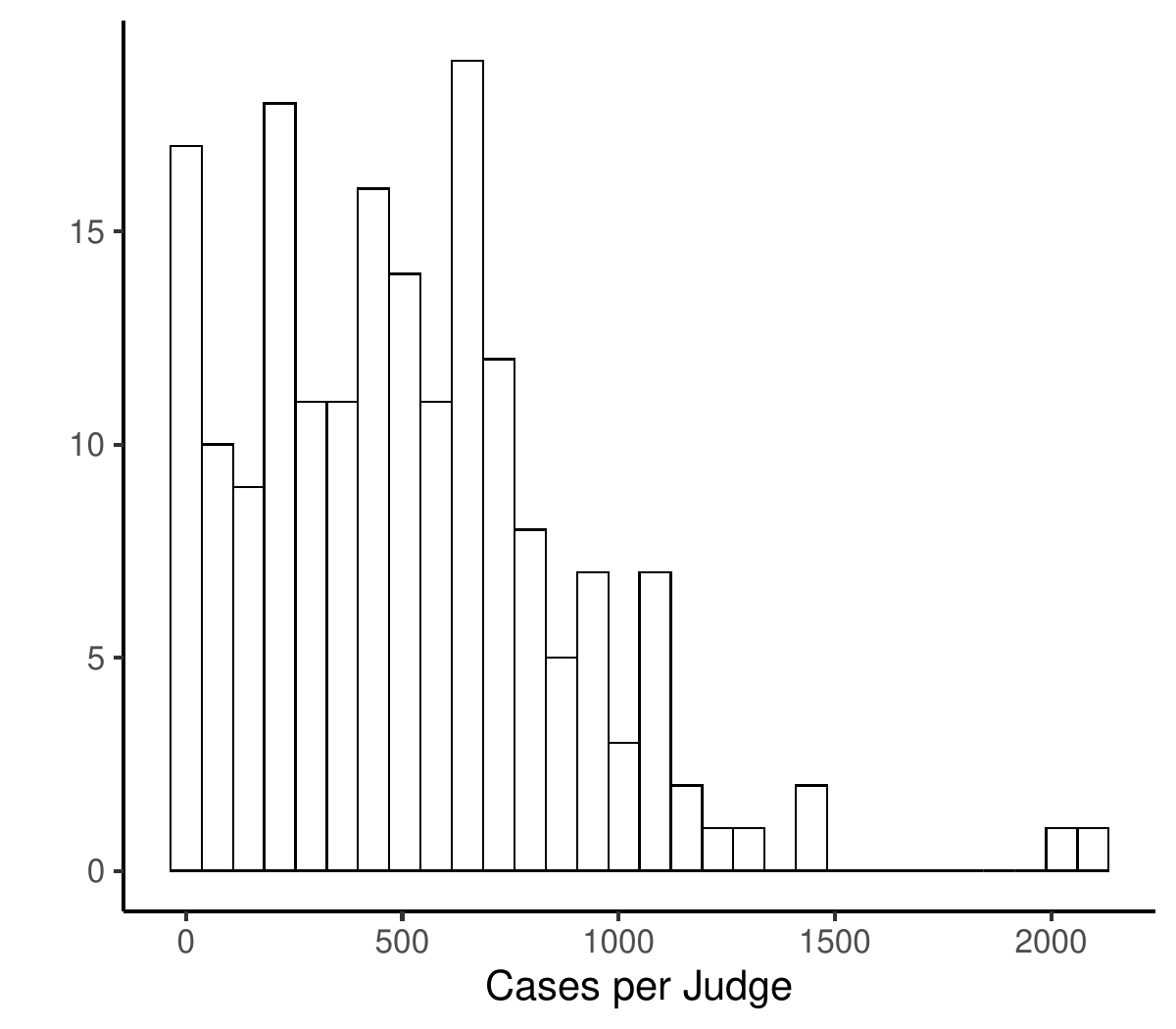}
     \vskip0.25em
 \begin{minipage}{0.9\textwidth} 
 {\scriptsize \textit{Notes.} The histogram plots the distribution of cases per judge in the Miami-Dade data. The vertical axis corresponds to the number of judges (not the number of observations).}
 \end{minipage}
\end{figure} 

Column (6) in Table \ref{tab:judge_res} presents estimates for $\tau_0$ using the sample of in \citet{chyn2024examiner} which is restricted to judges with at least 200 cases. The considered estimators are the proposed CIV estimator with $K=2$ and $K=5$ support points, TSLS, JIVE, IJIVE, UJIVE, and the post-Lasso IV estimator of \citet{belloni2012sparse} that selects among the judge fixed effects. I also provide OLS estimates for comparison. With exception of the post-Lasso IV estimator which is estimated with high variance,\footnote{The post-Lasso IV estimator using the plug-in penalty parameter of \citet{belloni2012sparse} selects only three judge fixed effects and removes any others from the specification. The poor performance of the lasso-based estimator mirrors its poor performance in categorical instruments settings in Section \ref{sec:simulation} and in \citet{angrist2020machine}.} the point estimates of TSLS differ only marginally from those of the jackknife-based estimators or CIV with any level of regularization. There are substantial differences in the standard errors, yet, given the large share of observations associated with judges who have more than 500 cases, it might not be surprising that any potential bias from many instruments is small.

To gain a better understanding of estimator differences in settings with potentially noisy leniency estimates, I thus also consider artificially restricted sub-samples of the Miami-Dade data. These sub-samples are meant to represent examiner fixed effects settings with fewer decisions per decision maker so that bias from many instruments may be more likely to arise. For this purpose, columns (1)-(4) restrict the sample to judges with \textit{at most} 400, 500, 600, and 700 cases, respectively. Throughout, I also restrict the sample to judges with at least 30 cases. Columns (1)-(4) may thus be well approximated by the ``moderately many instruments'' asymptotics outlined in this paper. For comparison, column (5) also provides results for judges with at least 30 cases but where the number of cases per judge has no upper bound. Throughout, I focus on comparison of estimators \textit{within} a particular column. Within a column, TSLS and the jackknife-based estimators target the same estimand as CIV with a correct choice of $K$. However, because the distribution of judges changes across columns, the targeted estimand can also differ across columns when treatment effects are heterogeneous. These changes in estimands can further complicate comparisons across columns.

\begin{table}[!htb]\small
\centering
        \begin{threeparttable}
  \caption{Estimates of the Effect of Pre-Trial Release on Conviction}\label{tab:judge_res}
\begin{tabular}{lcccccc}
\toprule
\midrule
      & \multicolumn{1}{c}{(1)} & \multicolumn{1}{c}{(2)} & \multicolumn{1}{c}{(3)} & \multicolumn{1}{c}{(4)} & \multicolumn{1}{c}{(5)} & \multicolumn{1}{c}{(6)} \\
\midrule
\multicolumn{1}{l}{OLS} & \multicolumn{1}{c}{-0.243} & \multicolumn{1}{c}{-0.248} & \multicolumn{1}{c}{-0.243} & \multicolumn{1}{c}{-0.241} & \multicolumn{1}{c}{-0.232} & \multicolumn{1}{c}{-0.232} \\
      & \multicolumn{1}{c}{(0.009)} & \multicolumn{1}{c}{(0.007)} & \multicolumn{1}{c}{(0.006)} & \multicolumn{1}{c}{(0.005)} & \multicolumn{1}{c}{(0.003)} & \multicolumn{1}{c}{(0.003)} \\
\multicolumn{1}{l}{TSLS} & \multicolumn{1}{c}{-0.358} & \multicolumn{1}{c}{-0.403} & \multicolumn{1}{c}{-0.303} & \multicolumn{1}{c}{-0.312} & \multicolumn{1}{c}{-0.267} & \multicolumn{1}{c}{-0.275} \\
      & \multicolumn{1}{c}{(0.092)} & \multicolumn{1}{c}{(0.083)} & \multicolumn{1}{c}{(0.075)} & \multicolumn{1}{c}{(0.064)} & \multicolumn{1}{c}{(0.049)} & \multicolumn{1}{c}{(0.052)} \\
\multicolumn{1}{l}{CIV (K=2)} & \multicolumn{1}{c}{-0.508} & \multicolumn{1}{c}{-0.549} & \multicolumn{1}{c}{-0.407} & \multicolumn{1}{c}{-0.392} & \multicolumn{1}{c}{-0.276} & \multicolumn{1}{c}{-0.271} \\
      & \multicolumn{1}{c}{(0.133)} & \multicolumn{1}{c}{(0.102)} & \multicolumn{1}{c}{(0.09)} & \multicolumn{1}{c}{(0.081)} & \multicolumn{1}{c}{(0.063)} & \multicolumn{1}{c}{(0.066)} \\
\multicolumn{1}{l}{CIV (K=5)} & \multicolumn{1}{c}{-0.407} & \multicolumn{1}{c}{-0.431} & \multicolumn{1}{c}{-0.324} & \multicolumn{1}{c}{-0.294} & \multicolumn{1}{c}{-0.279} & \multicolumn{1}{c}{-0.292} \\
      & \multicolumn{1}{c}{(0.098)} & \multicolumn{1}{c}{(0.085)} & \multicolumn{1}{c}{(0.077)} & \multicolumn{1}{c}{(0.066)} & \multicolumn{1}{c}{(0.051)} & \multicolumn{1}{c}{(0.054)} \\
\multicolumn{1}{l}{CIV (K=20)} & \multicolumn{1}{c}{-0.362} & \multicolumn{1}{c}{-0.409} & \multicolumn{1}{c}{-0.302} & \multicolumn{1}{c}{-0.312} & \multicolumn{1}{c}{-0.268} & \multicolumn{1}{c}{-0.274} \\
      & \multicolumn{1}{c}{(0.092)} & \multicolumn{1}{c}{(0.083)} & \multicolumn{1}{c}{(0.074)} & \multicolumn{1}{c}{(0.064)} & \multicolumn{1}{c}{(0.049)} & \multicolumn{1}{c}{(0.052)} \\
\multicolumn{1}{l}{JIVE} & \multicolumn{1}{c}{0.671} & \multicolumn{1}{c}{2.280} & \multicolumn{1}{c}{-1.614} & \multicolumn{1}{c}{-0.627} & \multicolumn{1}{c}{-0.310} & \multicolumn{1}{c}{-0.329} \\
      & \multicolumn{1}{c}{(0.93)} & \multicolumn{1}{c}{(3.285)} & \multicolumn{1}{c}{(2.782)} & \multicolumn{1}{c}{(0.373)} & \multicolumn{1}{c}{(0.115)} & \multicolumn{1}{c}{(0.129)} \\
\multicolumn{1}{l}{IJIVE} & \multicolumn{1}{c}{-0.484} & \multicolumn{1}{c}{-0.607} & \multicolumn{1}{c}{-0.382} & \multicolumn{1}{c}{-0.387} & \multicolumn{1}{c}{-0.289} & \multicolumn{1}{c}{-0.299} \\
      & \multicolumn{1}{c}{(0.184)} & \multicolumn{1}{c}{(0.194)} & \multicolumn{1}{c}{(0.161)} & \multicolumn{1}{c}{(0.129)} & \multicolumn{1}{c}{(0.08)} & \multicolumn{1}{c}{(0.085)} \\
\multicolumn{1}{l}{UJIVE} & \multicolumn{1}{c}{-0.481} & \multicolumn{1}{c}{-0.599} & \multicolumn{1}{c}{-0.378} & \multicolumn{1}{c}{-0.386} & \multicolumn{1}{c}{-0.289} & \multicolumn{1}{c}{-0.299} \\
      & \multicolumn{1}{c}{(0.183)} & \multicolumn{1}{c}{(0.189)} & \multicolumn{1}{c}{(0.158)} & \multicolumn{1}{c}{(0.128)} & \multicolumn{1}{c}{(0.08)} & \multicolumn{1}{c}{(0.084)} \\
\multicolumn{1}{l}{post-Lasso IV} & \multicolumn{1}{c}{-0.733} & \multicolumn{1}{c}{-0.677} & \multicolumn{1}{c}{-0.631} & \multicolumn{1}{c}{-0.454} & \multicolumn{1}{c}{-0.453} & \multicolumn{1}{c}{-0.456} \\
      & \multicolumn{1}{c}{(1.054)} & \multicolumn{1}{c}{(1.109)} & \multicolumn{1}{c}{(0.995)} & \multicolumn{1}{c}{(0.579)} & \multicolumn{1}{c}{(0.261)} & \multicolumn{1}{c}{(0.259)} \\
      &       &       &       &       &       &  \\
min \# cases & 30    & 30    & 30    & 30    & 30    & 200 \\
max \# cases & 400   & 500   & 600   & 700   & \multicolumn{1}{c}{$\infty$} & \multicolumn{1}{c}{$\infty$} \\
Observations & 13,259 & 23,505 & 31,093 & 46,612 & 94,253 & 91,421 \\
\midrule
\bottomrule
\end{tabular}%
  \begin{tablenotes}[para,flushleft]
  \scriptsize
  \item \textit{Notes.} ``CIV (K=2)'' and ``CIV (K=5)''  denote the proposed categorical IV estimators restricted to 2 and 5 support points in the first stage, ``TSLS'' denotes two-stage least squares, ``JIVE'', ``IJIVE'', and ``UJIVE'' denotes the jackknife-based IV estimators of \citet{angrist1999jackknife}, \citet{ackerberg2009improved}, and \citet{kolesar2013estimation}, respectively, ``post-Lasso IV'' denotes the post-lasso IV estimator of \citet{belloni2012sparse}, and ``OLS'' denotes ordinary least squares. Heteroskedasticity-robust standard errors are in parentheses.
  \end{tablenotes}
    \end{threeparttable}
\end{table}

In the settings of columns (1)-(4) with moderately many cases per judge, regularization of the first stage via a support point restriction has a substantial impact on the estimates. The first considered CIV estimator restricts the estimated judge leniency to $K=2$ support points. This restriction can correspond to the assumption that all judges are either ``lenient'' or ``strict.'' Alternatively, $K=2$ can be viewed as optimally approximating a higher-dimensional vector of judge types with a binary ``pseudo'' type. For insights into the estimated binary (pseudo) type, Figure \ref{fig:judge_types} presents the empirical distribution of estimated judge leniency measures as well as how the leniency measures map to the binary (pseudo) type estimate. Here, the dotted vertical lines correspond to the leniency of the ``lenient'' and ``strict'' judges, respectively, and the dashed vertical line indicates the cut-off for assigning judges to these types. 

The CIV estimator with $K=2$ effectively regularizes the first stage problem of estimating judge leniency. In columns (1)-(4), the CIV estimate differs from the (unregularized) TSLS estimate by about one standard error. Further, the CIV estimate is negative and statistically significant throughout, suggesting that pre-trial release has a negative causal effect on conviction of a defendant regardless of the particular subsample. This is in strong contrast to the estimates associated with JIVE, which does not result in statistically significant coefficients at conventional levels. The IJIVE and UJIVE estimates are numerically similar to the CIV (K=2) estimates but are associated with substantially larger standard errors throughout. For the smaller subsamples, IJIVE and UJIVE standard errors are between 30-60\% larger than CIV standard errors.

The CIV estimator with $K=5$ (pseudo) types is included here to illustrate that lower regularization (i.e., higher $K$) results in the CIV estimator to tend towards TSLS. Indeed, for all considered subsamples, CIV with $K=5$ is similar to (unregularized) TSLS.

\begin{figure}[!h] 
\caption{Estimated Judge Leniency and Binary (Pseudo) Types}\label{fig:judge_types}
    \centering
    \includegraphics[width=0.5\textwidth]{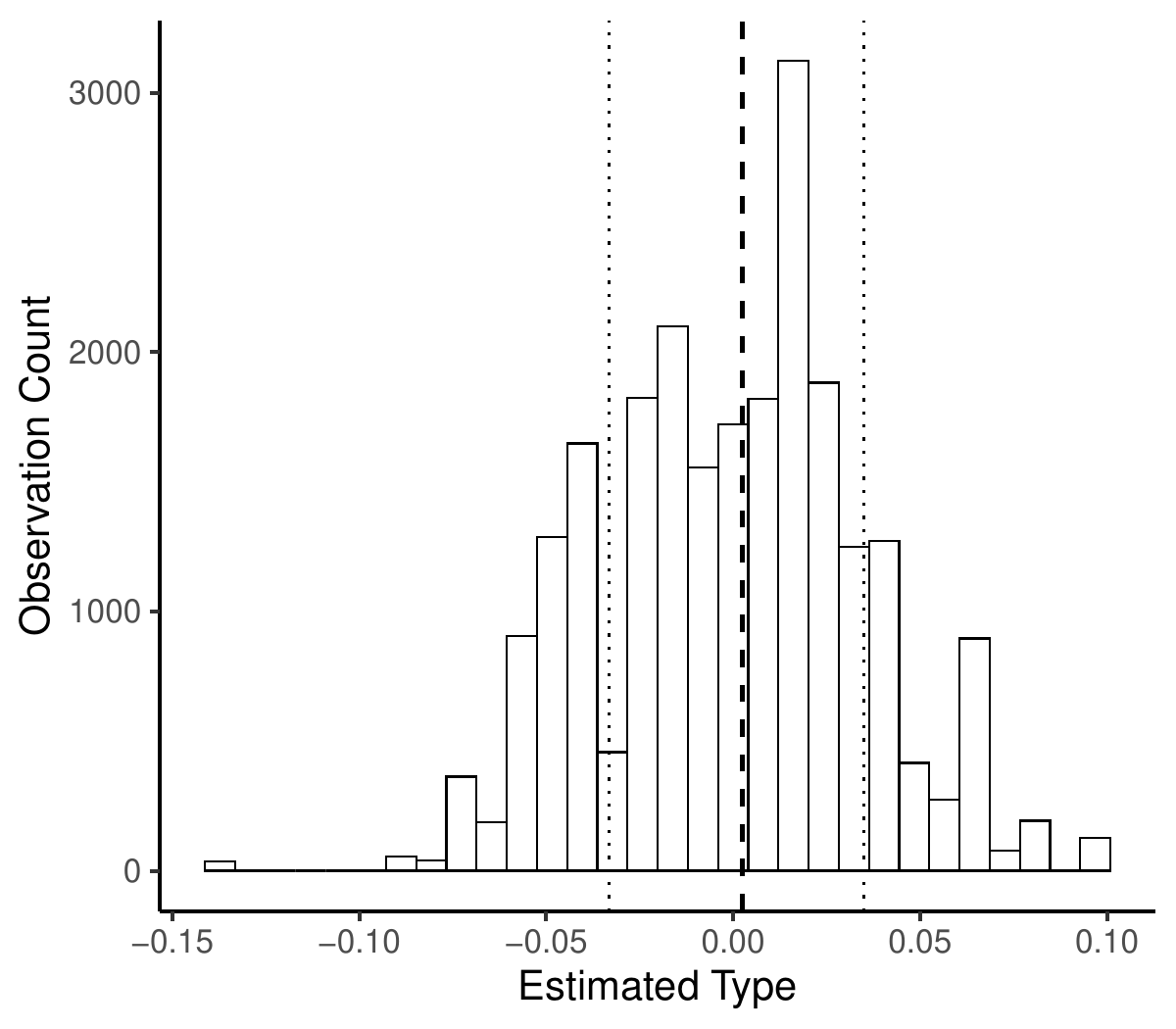}
     \vskip0.25em
 \begin{minipage}{0.9\textwidth} 
 {\scriptsize \textit{Notes.} The histogram plots the empirical distribution of estimated judge leniency for the sample of judges with at most 500 and at least 30 cases. The dotted lines represent the estimated binary (pseudo) types underlying the CIV ($K=2$) estimator. The dashed line represents the cut-off value. All observations to its left are assigned to the ``lenient'' (pseudo) type, all observations to its right are assigned to the ``strict'' (pseudo) type.}
 \end{minipage}
\end{figure}

In summary, the presented results using the Miami-Dade County data suggest that the jackknife-estimators are either highly subsample-dependent (as is the case for JIVE) or are similar to highly regularized CIV with a binary (pseudo) type but with larger standard errors. The proposed CIV estimator (e.g., with a binary (pseudo) type) may thus provide a useful alternative to currently popular jackknife-based estimators in examiner fixed effects settings with moderately many decisions per decision maker.

\section{Conclusion} \label{sec:conclusion}

This paper considers estimation with categorical instrumental variables when the number of observations per category is relatively small. The proposed categorical instrumental variable estimator is motivated by a first-stage regularization assumption that restricts the unknown optimal instrument to have fixed finite support. In a ``moderately many instruments'' regime that allows the number of observations to grow at arbitrarily slow polynomial rate with the sample, I show that when the number of support points of the optimal instrument is known, CIV achieves the same asymptotic variance as the infeasible oracle two-stage least squares estimator that presumes knowledge of the optimal instrument and is semiparametrically efficient under homoskedasticity. Further, under-specifying the number of support points maintains asymptotic normality but results in efficiency loss. A simulation exercise illustrates the finite sample performance of the proposed CIV estimator and highlights pitfalls associated with alternative optimal instrument estimators in the setting of categorical instruments. Similar to results in \citet{angrist2020machine}, lasso-based IV estimators do not improve upon the bias of TSLS and fail to control size. In contrast, CIV successfully leverages the low-dimensional structure of the optimal instrument to obtain near-oracle estimates. Finally, the analysis of pre-trial release on conviction using bail judge identities as instruments as in \citet{dobbie2018effects} and \citet{chyn2024examiner} illustrates potential practical advantages of CIV over conventional jackknife-based instrumental variable estimators. In particular, highly regularized CIV with a binary (pseudo) type achieves lower standard errors than IJIVE and UJIVE while resulting in numerically similar point estimates.


\interlinepenalty=10000
\addcontentsline{toc}{section}{References}
\bibliographystyle{apalike}
\bibliography{biblio}
\interlinepenalty=10

\clearpage
\appendix

\newpage
\pagenumbering{gobble}

\begin{titlepage}

	\newgeometry{left=1in, right=1in,top=2.25in, bottom=1in}
	
	{\begin{spacing}{1}  \LARGE \centering Optimal Categorical Instrumental Variables \end{spacing}}

    \vspace{1cm}
	{\Large \centering Supplemental Appendices\par}
	
	\vspace{1cm}
	\vspace{2.5cm}

\end{titlepage}
\restoregeometry

\newpage

\pagenumbering{arabic}
\section{Proofs}\label{app:proofs}

The proof of Theorem \ref{theorem:CIV_w_covariates_mk} and Corollary \ref{corollary:semiparametric_efficiency} proceeds in four steps: First, I begin the proof with a set of lemmas to characterize the asymptotic properties of the first stage estimator $\hat{m}_K^{(n)}$. The proof of lemmas \ref{lemma:consistency_m_covariates}-\ref{lemma:rate_g} heavily leverages the arguments of \citet{bonhomme2015grouped}. Novel arguments provided here include the characterization of the approximation $m^{(n)}_K$ of $m^{(n)}_{0}$ for $K\leq K_0$ in Lemma \ref{lemma:mK_characteristics} which leverages the dynamic programming characterization of KMeans in $\mathbbm{R}$ of \citet{wang2011ckmeans}, and accommodations to unbalanced categorical variables including Lemma \ref{lemma:binom_bound}. Second, I characterize the asymptotic distribution of $\hat{\theta}^K$. Third, I prove consistency of the covariance estimator $\hat{\Sigma}_K$ in Lemma \ref{lemma:SimgaK_consistency}. Finally, I prove semiparametric efficiency of $\hat{\theta}^{K_0}$ under homoskedasticity.

\textbf{Notation.} For notational brevity, the proof defines $Z \equiv Z^{(n)}$, $\mathcal{Z}\equiv \mathcal{Z}^{(n)}$, $m_K \equiv m_K^{(n)}$, and $\hat{m}_K^{(n)} \equiv \hat{m}_K$, omitting the explicit dependence on $n$.

\subsection{Convergence of $\hat{m}_K$}

I begin by highlighting two important properties of the optimal instrument $Z^{(0)}$. Note that by Assumption \ref{assumption:Z0_setup}, it follows from the definition of the support of a random variable that \begin{align}\label{eq:Z0_nonvanishing}
    \exists \underline{p}>0: \: \Pr( Z^{(0)} = d_z)\geq \underline{p}, \forall d_z \in \mathcal{Z}^{(0)}.
\end{align}
Further, since $\mathcal{Z}^{(0)}$ is a finite collection of points in $\mathbbm{R}$, \begin{align}\label{eq:Z0_wellseperated}
    \exists c>0:\: (d_z - \tilde{d}_z)^2 \geq c,\forall d_z \neq \tilde{d}_z \in \mathcal{Z}^{(0)}.
\end{align}

Lemma \ref{lemma:mK_characteristics} shows that the approximation of $Z^{(0)}$ with $K\in\{2, \ldots, K_0\}$ unique support points also satisfies \eqref{eq:Z0_nonvanishing} and \eqref{eq:Z0_wellseperated}.

\begin{lemma}\label{lemma:mK_characteristics}
    Let the assumptions of Theorem \ref{theorem:CIV_w_covariates_mk} hold. Then, $\forall K \in \{2, \ldots, K_0\}$, $m_K$ satisfies  \begin{enumerate}[start=1,label={(\alph*)}]\setlength\itemsep{0.25em}
\item$\Pr(m_K(Z) = \alpha)\geq \underline{p}, \forall \alpha \in m_K(\mathcal{Z})$, and 
\item $(\alpha - \Tilde{\alpha})^2 \geq c,\forall \alpha \neq \tilde{\alpha} \in m_K(\mathcal{Z}),$
\end{enumerate}
where $\underline{p}$ and $c$ are positive constants from conditions \eqref{eq:Z0_nonvanishing} and \eqref{eq:Z0_wellseperated}.
\end{lemma}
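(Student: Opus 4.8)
The plan is to reduce the population problem defining $m_K$ in \eqref{eq:definition_mK} to a weighted one-dimensional $K$-means problem on the $K_0$ support points of $Z^{(0)}$, and then to exploit the contiguity of one-dimensional $K$-means optima. \textbf{(Reduction.)} Because $Z^{(0)}\overset{a.s.}{=}m_0^{(n)}(Z^{(n)})$ by Assumption \ref{assumption:Z0_setup}(b), the objective in \eqref{eq:definition_mK} equals $\sum_{z\in\mathcal{Z}^{(n)}}\Pr(Z^{(n)}=z)\big(m_0^{(n)}(z)-m(z)\big)^2$ for any $m:\mathcal{Z}^{(n)}\to\mathcal{M}$. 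For a fixed image $\mathcal{V}\subset\mathcal{M}$ with $|\mathcal{V}|\le K$, the minimizing $m$ sends each $z$ to the point of $\mathcal{V}$ nearest $m_0^{(n)}(z)$, so it depends on $z$ only through $m_0^{(n)}(z)$. Writing $\mathcal{Z}^{(0)}=\{d_1<\dots<d_{K_0}\}$ and $w_j\equiv\Pr(Z^{(0)}=d_j)$, it follows that $m_K=\phi\circ m_0^{(n)}$ for some $\phi:\mathcal{Z}^{(0)}\to\mathcal{M}$ with $|\phi(\mathcal{Z}^{(0)})|\le K$ minimizing $\sum_{j=1}^{K_0}w_j\big(d_j-\phi(d_j)\big)^2$, and by \eqref{eq:Z0_nonvanishing} the weights satisfy $w_j\ge\underline{p}>0$. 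This is exactly the weighted one-dimensional $K$-means problem solved by the $K$CMeans recursion, and (since the marginal law of $Z^{(0)}$ is held fixed) it does not depend on $n$.

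\textbf{(Structure of the optimum.)} I would then invoke the two structural facts behind \citet{fisher1958grouping} and \citet{wang2011ckmeans}. First, since nearest-point cells on the line are intervals, the partition of $\{d_1,\dots,d_{K_0}\}$ induced by $\phi$ may be taken contiguous: there are blocks of consecutive points $C_1,\dots,C_{K'}$, $K'\le K$, with $\max C_l<\min C_{l+1}$. Second, the value assigned to a block $C$ is the point of $\mathcal{M}$ closest to the weighted centroid $\mu_C=\sum_{d_j\in C}w_jd_j/\sum_{d_j\in C}w_j\in[\min C,\max C]$, and that closest point — call it $v_C$ — itself lies in $[\min C,\max C]$, because $\min C,\max C\in\mathcal{Z}^{(0)}\subset\mathcal{M}$ so no point of $\mathcal{M}$ outside this interval can be nearer $\mu_C$. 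Hence $v_{C_1}\le\max C_1<\min C_2\le v_{C_2}\le\cdots$, so $v_{C_1}<\dots<v_{C_{K'}}$ and $m_K(\mathcal{Z}^{(n)})=\{v_{C_1},\dots,v_{C_{K'}}\}$ has exactly $K'$ elements.

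\textbf{(Conclusions.)} For part (a), fix $\alpha=v_{C_l}\in m_K(\mathcal{Z}^{(n)})$; distinctness of the $v_C$'s gives $\{z:m_K(z)=\alpha\}=\{z:m_0^{(n)}(z)\in C_l\}$, so $\Pr(m_K(Z)=\alpha)=\Pr(Z^{(0)}\in C_l)=\sum_{d_j\in C_l}w_j\ge\underline{p}$. For part (b), take distinct $\alpha=v_{C_l}<v_{C_{l'}}=\tilde\alpha$ with $l<l'$; monotonicity of the $v_C$'s and $v_{C_{l+1}}\ge\min C_{l+1}$, $v_{C_l}\le\max C_l$ give $\tilde\alpha-\alpha\ge v_{C_{l+1}}-v_{C_l}\ge\min C_{l+1}-\max C_l\ge\sqrt{c}$, the last inequality from \eqref{eq:Z0_wellseperated} since $\max C_l\ne\min C_{l+1}$ both lie in $\mathcal{Z}^{(0)}$; thus $(\alpha-\tilde\alpha)^2\ge c$.

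The main obstacle is the structural claim in the second step — that an optimal one-dimensional $K$-means partition is contiguous and that each cluster's value stays within the range of the points it represents, so that the cluster values inherit the minimal spacing $\sqrt{c}$ of $\mathcal{Z}^{(0)}$. This is classical for one-dimensional $K$-means and is precisely the property the $K$CMeans dynamic program of \citet{wang2011ckmeans} relies on; everything else is bookkeeping with the tower property and the fixed marginal of $Z^{(0)}$.
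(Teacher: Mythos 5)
Your proposal is correct and follows essentially the same route as the paper: reduce \eqref{eq:definition_mK} via Assumption \ref{assumption:Z0_setup}(b) to a weighted one-dimensional $K$-means problem on the support points of $Z^{(0)}$, use contiguity of the optimal clustering so that cluster masses are sums of the original masses (giving (a) via \eqref{eq:Z0_nonvanishing}) and cluster values stay within their blocks' ranges so consecutive values inherit the spacing $\sqrt{c}$ (giving (b) via \eqref{eq:Z0_wellseperated}). The only difference is presentational: the paper expresses contiguity through the \citet{wang2011ckmeans} dynamic-programming sub-structure, whereas you argue it directly from the interval structure of nearest-point cells and the projection-onto-$\mathcal{M}$ observation, which if anything makes the separation step more explicit.
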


\begin{proof}

Note that by Assumption \ref{assumption:Z0_setup} (b) we have $Z^{(0)} = m_0(Z)$ so that the result follows directly for $K = K_0$. Further, note that $m_K$ is equivalently defined by 
\begin{align*}
    m_K \equiv \argmin_{\substack{m: \mathcal{Z}\to \mathcal{M} \\ \vert m(\mathcal{Z})\vert \leq K}} \: \E(m_0(Z) - m(Z) )^2.
\end{align*}

Let $\alpha^0_1 \leq \ldots \leq \alpha^0_{K_0}$ and $\alpha^K_1 \leq \ldots \leq \alpha^K_K$ denote the support points of $m_0(Z)$ and $m_K(Z)$, respectively, in non-decreasing order, and let $p^0_1, \ldots, p^0_K$ and $p^K_1, \ldots, p^K_K$ denote the corresponding point masses. 

    Let $G[k, M]$ denote the minimum objective function in the sub-problem of approximating the first $2\leq k\leq K_0$ support points of $m_0(Z)$ by $1<M\leq k$ clusters -- i.e., \begin{align*}
        G[k, M] \equiv \min_{\substack{m:\mathcal{Z}\to \mathcal{M}\\ \vert m(\mathcal{Z})\vert \leq M}}\E\left[\left(m_0(Z) - m(Z)\right)^2\big\vert m_0(Z) \leq \alpha^0_k\right],
    \end{align*}
    so that $G[K_0, K]$ is the minimum objective function of the original problem. Now following the arguments of \citet{wang2011ckmeans}, let $M\leq j\leq k$ denote the smallest support point in the $M$th cluster of the solution to $G[k,M]$. Then $G[j-1, M-1]$ must be the minimum objective function of clustering the first $j-1$ support points into $M-1$ cluster as otherwise $G[k,M]$ would not be optimal. As a consequence, the problem admits the following sub-structure:\begin{align*}
        G[k, M] = \min_{M\leq j \leq k} & \:\bigg\{ G[j-1, M-1]\Pr(m_0(Z) \leq \alpha^0_{j-1} \vert m_0(Z) \leq \alpha^0_k) \\
        & + \min_{a_M^K \in \mathbbm{M}}\bigg[\E\left[\left(m_0(Z) - a_M^K\right)^2\big\vert \alpha^0_{j} \leq m_0(Z) \leq \alpha^0_k\right] \\
        & \qquad \quad \quad \times \Pr(m_0(Z) \geq \alpha^0_{j} \vert m_0(Z) \leq \alpha^0_k)\bigg] \bigg\},
    \end{align*}
    for all $2 \leq M \leq k \leq K_0$, where the second term characterizes the optimal within-cluster points. The desired properties of $m_K$ now follow directly from the sub-structure. 
    
    First, for any $\ell \geq 2$, we have \begin{align*}
        \vert \alpha^K_{\ell -1} - \alpha^K_{\ell} \vert \geq \vert \alpha^0_{j_{\ell} - 1} - \alpha^0_{j_{\ell}} \vert \geq c, 
    \end{align*}
    where $j_\ell$ is the index of the smallest support point in the $\ell$th cluster and the second inequality follows from \eqref{eq:Z0_nonvanishing}. 

    Second, it is clear from the sub-structure that support points are combined so that \begin{align*}
        \min \{p_k^K\}_{k=1}^{K} \geq \min \{p_k^0\}_{k=1}^{K_0} \geq \underline{p},
    \end{align*}
    where the second inequality follows from \eqref{eq:Z0_wellseperated}.
\end{proof}

Note that, because $\vert \mathcal{Z} \vert$ is finite for every $n\in \mathbbm{N}$ by Assumption \ref{assumption:rate_condition} and $m_K$ takes $K$ unique values, each such function $m:\mathcal{Z} \to \mathcal{M}, \vert m(\mathcal{Z})\vert \leq K$ is fully characterized by a set of coefficients $(\alpha_k)_{k=1}^K$ and a map $\gamma: \mathcal{Z} \to \{1,\ldots, K\}$. It is thus possible to re-cast $m_K$ as \begin{align*}
   (\alpha^K, \gamma^K)\equiv\argmin_{(\alpha, \gamma) \in \mathcal{M}^K \times \Gamma^K} \: \E(Z^{(0)} - \alpha_{\gamma(Z)}) )^2.
\end{align*}
where $\Gamma^K = \{\gamma: \mathcal{Z} \to \{1,\ldots, K\}\}$ so that \begin{align*}
    m_K(z) = \alpha^K_{\gamma}(z), \: \forall z \in \mathcal{Z}.
\end{align*}
Similarly, for the corresponding estimator $\hat{m}_K$ we have \begin{align}\label{eq:definition_alphagamma_hat}
    (\hat{\alpha}^K, \hat{\gamma}^K)\equiv \argmin_{(\alpha, \gamma) \in \mathcal{M}^K \times \Gamma^K} \: \En(D - X^\top \hat{\pi} - \alpha_{\gamma(Z)})^2, 
\end{align}
so that \begin{align*}
    \hat{m}_K(z) = \hat{\alpha}^K_{\hat{\gamma}^K(z)}, \: \forall z \in \mathcal{Z}.
\end{align*}
This representation of $m_K$ and $\hat{m}_K$ follows the group fixed effects estimator of \citet{bonhomme2015grouped}. I adopt it here because it allows for separate analysis of estimation properties of the partition and the coefficients.

\begin{lemma}\label{lemma:consistency_m_covariates}
    Let the assumptions of Theorem \ref{theorem:CIV_w_covariates_mk} hold. Then, $\forall K \in \{2, \ldots, K_0 \}$, we have \begin{align*}
      \En \left(m_K(Z) - \hat{\alpha}^K_{\hat{\gamma}^K(Z)}\right)^2 = o_p(1).
    \end{align*}
\end{lemma}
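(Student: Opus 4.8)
The plan is to follow the consistency argument for grouped fixed effects in \citet{bonhomme2015grouped}, adapted to the $K$CMeans setting with an unbalanced categorical variable and a first-step estimate $\hat\pi$ of $\pi_0$. The basic idea is that $\hat m_K = \hat\alpha^K_{\hat\gamma^K}$ is the in-sample least-squares projection of $D - X^\top\hat\pi$ onto functions of $Z$ with at most $K$ values, while $m_K$ is the population least-squares projection of $Z^{(0)} = D - V - X^\top\pi_0$ onto the same class (using Assumption \ref{assumption:iv_setup}(b) and the definition of $V$). So the difference $\En(m_K(Z) - \hat\alpha^K_{\hat\gamma^K(Z)})^2$ should be controlled by (i) the gap between the empirical and population objectives, uniform over the (finite-dimensional, for fixed $n$, but growing) class of candidate functions, and (ii) the error from replacing $\pi_0$ by $\hat\pi$ and from the presence of the noise term $V$.

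First I would write out the identity $D - X^\top\hat\pi - \alpha_{\gamma(Z)} = (Z^{(0)} - \alpha_{\gamma(Z)}) + V - X^\top(\hat\pi - \pi_0)$ and expand the square in the definition \eqref{eq:definition_alphagamma_hat} of $(\hat\alpha^K,\hat\gamma^K)$. Using optimality of $(\hat\alpha^K,\hat\gamma^K)$ against the feasible candidate $(\alpha^K,\gamma^K)$ (the population minimizer, which is a valid competitor since it has at most $K$ values), one obtains an inequality of the form
\begin{align*}
\En(Z^{(0)} - \hat\alpha^K_{\hat\gamma^K(Z)})^2 \leq \En(Z^{(0)} - m_K(Z))^2 + (\text{cross terms in } V \text{ and } \hat\pi - \pi_0).
\end{align*}
The cross terms involve $\En V(\hat\alpha^K_{\hat\gamma^K(Z)} - m_K(Z))$, $\En X^\top(\hat\pi-\pi_0)(\cdots)$, and $\En V^2$-type remainders; these are handled by $\|\hat\pi-\pi_0\| = O_p(n^{-1/2})$ (Assumption \ref{assumption:asymptotics_setup}(c)), the sub-exponential tail bounds on $V$ and $X$ (Assumption \ref{assumption:Pn_setup}), boundedness of $\mathcal{M}$ and $\Pi$ (Assumption \ref{assumption:asymptotics_setup}(b)), and a uniform (over $\gamma \in \Gamma^K$) law of large numbers. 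The number of candidate partitions $\gamma$ is at most $K^{\vert\mathcal{Z}\vert}$, and since $\vert\mathcal{Z}^{(n)}\vert$ grows only polynomially-slowly relative to $n$ under Assumption \ref{assumption:rate_condition}, a union bound combined with the exponential concentration inequalities (this is exactly the mechanism flagged after Assumption \ref{assumption:Pn_setup}, and where Lemma \ref{lemma:binom_bound} and the unbalancedness accommodations enter) yields uniform control that vanishes in probability. Then, adding and subtracting $m_K(Z)$ inside $\En(Z^{(0)} - \hat\alpha^K_{\hat\gamma^K(Z)})^2$ and using that $m_K$ is the population projection — together with the fact that the empirical and population second moments of these bounded functions of $Z$ are uniformly close — converts the objective-value gap into the desired bound on $\En(m_K(Z) - \hat\alpha^K_{\hat\gamma^K(Z)})^2$.

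The main obstacle is step (ii) combined with the uniformity over the growing partition class: one must show that the stochastic terms $\En V(\hat\alpha_{\hat\gamma(Z)} - m_K(Z))$ and the analogous $X$-terms are $o_p(1)$ \emph{uniformly} over all $\gamma \in \Gamma^K$, even though $\vert\Gamma^K\vert = K^{\vert\mathcal{Z}^{(n)}\vert} \to \infty$. This is exactly the role of the exponential tail assumptions: a Bernstein-type bound gives, for each fixed $\gamma$ and each coefficient vector in a fine grid of the compact $\mathcal{M}^K$, a failure probability exponentially small in $n \cdot \min_z \Pr(Z=z)$, and Assumption \ref{assumption:rate_condition} guarantees this beats the $\vert\Gamma^K\vert$ union bound. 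Handling categories with very few observations (the $\lambda_z$ close to $0$ regime) is the delicate part and is where the binomial bound of Lemma \ref{lemma:binom_bound} is needed to ensure no category is empty (or too sparse) with high probability; everything else is a routine expansion plus Cauchy–Schwarz.
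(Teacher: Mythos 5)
Your overall skeleton (basic inequality from optimality of $(\hat{\alpha}^K,\hat{\gamma}^K)$ against the population minimizer, plus control of the cross terms in $V$ and $\hat{\pi}-\pi_0$) matches the paper's proof, but the mechanism you propose for the key uniformity step has a genuine gap. You want uniformity over $\gamma\in\Gamma^K$ via a union bound over the $K^{\vert\mathcal{Z}^{(n)}\vert}$ partitions, with a per-partition Bernstein bound whose failure probability you describe as exponentially small in $n\cdot\min_z\Pr(Z^{(n)}=z)\asymp n^{\min_z\lambda_z}$. Under Assumption \ref{assumption:rate_condition} the number of categories can be of order $n^{1-\min_z\lambda_z}$, so the entropy term in the union bound is of order $n^{1-\min_z\lambda_z}\log K$; when $\min_z\lambda_z\leq 1/2$ — precisely the ``moderately many instruments'' regime the paper is built for — this entropy is \emph{not} dominated by an exponent of order $n^{\min_z\lambda_z}$, and the union bound does not close. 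Even if you correct the exponent to reflect that the cross term $\En V\,\alpha_{\gamma(Z)}$ is a full-sample average (concentration at scale $n$ rather than $n\Pr(Z=z)$), Assumption \ref{assumption:Pn_setup} only gives Weibull-type tails with exponent $b_2>0$ possibly below one, so the per-partition bound degrades to roughly $\exp\{-c\,n^{b_2}\}$ at fixed tolerance, and the comparison with $n^{1-\min_z\lambda_z}\log K$ again fails unless $b_2>1-\min_z\lambda_z$, which is not implied by the assumptions. So as stated, the step ``Assumption \ref{assumption:rate_condition} guarantees this beats the $\vert\Gamma^K\vert$ union bound'' would not survive scrutiny in the regime of interest.

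The paper avoids this entirely: writing $\En V_K\alpha_{\gamma(Z)}=\frac{1}{K_Z}\sum_{z\in\mathcal{Z}}\alpha_{\gamma(z)}\bigl(\frac{K_Z}{n}\sum_i\mathbbm{1}_z(Z_i)V_{iK}\bigr)$ (with $V_K\equiv V+m_0(Z)-m_K(Z)$ and $K_Z=\vert\mathcal{Z}^{(n)}\vert$) and applying Cauchy--Schwarz decouples the bounded coefficients $\alpha_{\gamma(z)}$ from the category-wise noise averages, so the supremum over $(\alpha,\gamma)$ is controlled by a single expectation, $\frac{K_Z}{n}\E V_K^2=o(1)$, using only second moments, compactness of $\mathcal{M}$, and $K_Z=o(n)$ (which follows from Assumption \ref{assumption:rate_condition} and $\sum_z\Pr(Z=z)=1$). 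No exponential inequality, grid over $\mathcal{M}^K$, or union bound is needed at this stage; the exponential tails and the binomial bound of Lemma \ref{lemma:binom_bound} that you invoke here are reserved for the much stronger misclassification-rate result in Lemma \ref{lemma:rate_g}. I would also flag a smaller imprecision: the paper's expansion is around $D=m_K(Z)+X^\top\pi_0+V_K$ rather than around $Z^{(0)}$, which is what lets the final objective-value gap reduce directly to $\En\bigl(m_K(Z)-\hat{\alpha}^K_{\hat{\gamma}^K(Z)}\bigr)^2$ plus a cross term handled by Cauchy--Schwarz and $\|\hat{\pi}-\pi_0\|=o_p(1)$; with your decomposition in terms of $Z^{(0)}$ you would still need an extra argument to pass from the population-projection property of $m_K$ to the empirical statement.
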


\begin{proof}
Fix an arbitrary $K \in \{2, \ldots, K_0 \}$ and define \begin{align*}
    \hat{Q}_K(\alpha, \gamma, \pi) = \En \left(D - \alpha_{\gamma(Z)} - X^\top\pi\right)^2 = \En \left(m_K(Z) + X^\top \pi_0 + V_{K} - \alpha_{\gamma(Z)} -  X^\top\pi\right)^2
\end{align*}
and \begin{align*}
    \tilde{Q}_K(\alpha, \gamma, \pi) = \En \left(m_K(Z) - \alpha_{\gamma(Z)} + X^\top(\pi_0 - \pi)\right)^2 + \En V_{K}^2,
\end{align*}
where \begin{align*}
    V_K \equiv V + m_0(Z) - m_K(Z).
\end{align*}

Now consider
\begin{align}\label{eq:eq:proof_lemma2_0}
\begin{aligned}
            \hat{Q}_K(\alpha, \gamma, \hat{\pi}) - \tilde{Q}_K(\alpha, \gamma, \hat{\pi}) &= 2\En  V_K(m_K(Z) - \alpha_{\gamma(Z)}+ X^\top(\pi_0 - \hat{\pi})) \\
        &= 2\En V_{K}m_K(Z) + 2\En V_KX^\top(\pi_0 - \hat{\pi})  - 2\En V_{K}\alpha_{\gamma(Z)},
\end{aligned}
    \end{align}
where only the third term depends on $(\alpha, \gamma)$. In particular, we have  \begin{align*}
    \En V_{K}\alpha_{\gamma(Z)} & =\En\left[\sum_{z \in \mathcal{Z}}\mathbbm{1}_z(Z)V_{K}\alpha_{\gamma(z)}\right] =\frac{1}{K_Z}\sum_{z \in \mathcal{Z}}\alpha_{\gamma(z)}\left(\frac{K_Z}{n}\sum_{i=1}^n \mathbbm{1}_z(Z_i)V_{iK}\right),
\end{align*}
so by Cauchy-Schwarz \begin{align}\label{eq:proof_lemma2_1}
    \left(\frac{1}{K_Z}\sum_{z \in \mathcal{Z}}\alpha_{\gamma(z)}\left(\frac{K_Z}{n}\sum_{i=1}^n \mathbbm{1}_z(Z_i)V_{iK}\right)\right)^2 \leq \left(\frac{1}{K_Z}\sum_{z \in \mathcal{Z}}\alpha_{\gamma(z)}^2\right)\left(\frac{1}{K_Z}\sum_{z \in \mathcal{Z}}\left(\frac{K_Z}{n}\sum_{i=1}^{n} \mathbbm{1}_z(Z_i)V_{iK}\right)^2\right).
\end{align}
The first term in \eqref{eq:proof_lemma2_1} is uniformly bounded by Assumption \ref{assumption:asymptotics_setup} (b). For the second term, we have \begin{align*}
    \frac{1}{K_Z}\sum_{z \in \mathcal{Z}}\left(\frac{K_Z}{n}\sum_{i=1}^{n} \mathbbm{1}_z(Z_i)V_{iK}\right)^2 = \frac{K_Z}{n^2}\sum_{z \in \mathcal{Z}}\sum_{i=1}^{n} \sum_{j=1}^{n} V_{iK}V_{jK}\mathbbm{1}_z(Z_i)\mathbbm{1}_z(Z_j).
\end{align*}
Taking expectations results in \begin{align*}
   & \E\left[\frac{K_Z}{n^2}\sum_{z \in \mathcal{Z}}\sum_{i=1}^{n} \sum_{j=1}^{n} V_{iK}V_{jK}\mathbbm{1}_z(Z_i)\mathbbm{1}_z(Z_j)\right]\\
   \overset{[1]}{=}& \frac{K_Z}{n} \E V_{K}^2 \\
   \overset{[2]}{\leq}&\frac{K_Z}{n}(\E V^2+ \| m_0 - m_K\|_{\infty}^2)  \overset{[3]}{=} o(1),
\end{align*}
where {\footnotesize [1]} follows from Assumption \ref{assumption:asymptotics_setup} (d) and the fact that $\E V_K = 0$, {\footnotesize [2]} follows from Assumption \ref{assumption:iv_setup} (b), and {\footnotesize [3]} is a consequence of and Assumption \ref{assumption:Pn_setup} (a), Assumption \ref{assumption:asymptotics_setup} (b), Assumption \ref{assumption:rate_condition} and the fact that $\sum_{z \in \mathcal{Z}} \Pr(Z=z) = 1$ which implies $K_Z=o(n)$. 

The first and second term in \eqref{eq:eq:proof_lemma2_0} do not depend on $(\alpha, \gamma)$ and converge in probability to zero. In particular, the first term in \eqref{eq:eq:proof_lemma2_0} is $o_p(1)$ by arguments analogous to arguments presented above for the third term by replacing the constants $\alpha_{\gamma(z)}$ with the constants $m_K(z)$. For the second term in \eqref{eq:eq:proof_lemma2_0}, convergence to zero follows from $\En V_k X^\top = O_p(1)$ by Assumption \ref{assumption:asymptotics_setup} (d), Assumption \ref{assumption:Pn_setup}, and Assumption \ref{assumption:asymptotics_setup} (b), and $\pi_0 - \hat{\pi} = o_p(1)$ by Assumption \ref{assumption:asymptotics_setup} (c). 

We thus have that \begin{align}\label{eq:uniform_convergence}
    \sup_{\substack{(\alpha, \gamma) \in \mathcal{M}^{K}\times\Gamma^K}} \big\vert\hat{Q}_K(\alpha, \gamma, \hat{\pi}) - \tilde{Q}_K(\alpha, \gamma, \hat{\pi}) \big\vert = o_p(1).
\end{align}

Consider now
\begin{align}\label{eq:pain_convergence}
\begin{aligned}
   \tilde{Q}_K(\hat{\alpha}^K, \hat{\gamma}^K, \hat{\pi}) &\overset{[1]}{=} \hat{Q}_K(\hat{\alpha}^K, \hat{\gamma}^K,\hat{\pi}) + o_p(1) \overset{[2]}{\leq} \hat{Q}_K(\alpha^K, \gamma^K, \hat{\pi}) + o_p(1)  \overset{[3]}{=} \tilde{Q}_K(\alpha^K, \gamma^K, \hat{\pi}) + o_p(1),
\end{aligned}
\end{align}
where {\footnotesize [1]} and {\footnotesize [3]} follow from \eqref{eq:uniform_convergence}, and {\footnotesize [2]} follows from the definition of $(\hat{\alpha}^K, \hat{\gamma}^K)$.

The desired result then follows from \begin{align*}
   o_p(1) &= \tilde{Q}_K(\hat{\alpha}^K, \hat{\gamma}^K, \hat{\pi}) - \tilde{Q}_K(\alpha^K, \gamma^K, \hat{\pi}) \\
   &= \En(m_K(Z) -\hat{\alpha}^K_{\hat{\gamma}^K(Z)})^2  + 2\En (m_K(Z) -\hat{\alpha}^K_{\hat{\gamma}^K(Z)})(X^\top(\pi_0 - \hat{\pi}))\\
   & = \En(m_K(Z) -\hat{\alpha}^K_{\hat{\gamma}^K(Z)})^2  + o_p(1),
\end{align*}
where the last equality follows from Cauchy-Schwarz applied to the second term \begin{align*}
    \left(\En (m_K(Z) -\hat{\alpha}^K_{\hat{\gamma}^K(Z)})(X^\top(\pi_0 - \hat{\pi}))\right)^2 &\leq\En (m_K(Z) -\hat{\alpha}^K_{\hat{\gamma}^K(Z)})^2 \En \|X\|^2 \|\pi_0 - \hat{\pi}\|^2\\
    & \overset{[1]}{=} O_p(1)o_p(1) = o_p(1),
\end{align*}
where {\footnotesize [1]} follows from Assumption \ref{assumption:Pn_setup} (b) and Assumption \ref{assumption:asymptotics_setup} (b)-(c).

\end{proof}


Since the estimator $(\hat{\alpha}^K, \hat{\gamma}^K)$ of $m_K$ is invariant to relabeling of the coefficients $\hat{\alpha}^K$ and clustering $\hat{\gamma}^K$, it is useful to consider the Hausdorff distance $d_H$ in $\mathbbm{R}^{K}$ to characterize the asymptotic properties of the estimators $\hat{\alpha}^K$. In particular, define  \begin{align*}
    d_H(a, b)^2 = \max\bigg\{&\max_{k \in \{1, \ldots, K\}}\left(\min_{\tilde{k} \in \{1, \ldots, K\}}\left(\alpha_{\tilde{k}} - b_k\right)^2\right), \max_{\tilde{k} \in \{1, \ldots, K\}}\left(\min_{k \in \{1, \ldots, K\}}\left(\alpha_{\tilde{k}} - b_k\right)^2\right)\bigg\}.
\end{align*}

Define $\alpha^K = (\alpha_1^K, \ldots, \alpha_K^K)$ to be a vector of the $K$ support points of $m_K$ and let $\gamma^K$ be the corresponding clustering so that \begin{align*}
    m_K(Z) = \alpha^K_{\gamma^K(Z)}.
\end{align*}

\begin{lemma}\label{lemma:consistent_alpha}
    Let the assumptions of Theorem \ref{theorem:CIV_w_covariates_mk} hold. Then, $ \forall K \in \{2, \ldots, K_0 \}$,\begin{align*}
      d_H(\hat{\alpha}^K, \alpha^K) = o_p(1).
    \end{align*}
\end{lemma}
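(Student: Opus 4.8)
The plan is to combine Lemma~\ref{lemma:consistency_m_covariates} with the two structural facts about $m_K$ from Lemma~\ref{lemma:mK_characteristics}: each of its $K$ support points carries mass at least $\underline p$, and distinct support points are separated by at least $\sqrt c$. Fix $K \in \{2,\dots,K_0\}$ and, for labels $k,\tilde k \in \{1,\dots,K\}$, write $\mathcal Z_k \equiv (\gamma^K)^{-1}(\{k\})$ and $\hat{\mathcal Z}_{\tilde k}\equiv (\hat\gamma^K)^{-1}(\{\tilde k\})$, so that $m_K(z) = \alpha^K_k$ and $\hat m_K(z) = \hat\alpha^K_{\tilde k}$ for $z$ in $\mathcal Z_k$ and $\hat{\mathcal Z}_{\tilde k}$, respectively. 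For each fixed $k$, Lemma~\ref{lemma:consistency_m_covariates} gives
\begin{align*}
    o_p(1) \;=\; \En\big(m_K(Z) - \hat m_K(Z)\big)^2 \;\geq\; \En\!\Big[\mathbbm 1_{\mathcal Z_k}(Z)\big(\alpha^K_k - \hat m_K(Z)\big)^2\Big] \;=\; \sum_{\tilde k=1}^{K}\big(\alpha^K_k - \hat\alpha^K_{\tilde k}\big)^2\,\En\big[\mathbbm 1_{\mathcal Z_k \cap \hat{\mathcal Z}_{\tilde k}}(Z)\big].
\end{align*}

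For one direction of the Hausdorff distance, note that $\Pr(Z\in\mathcal Z_k) = p^K_k \geq \underline p$ by Lemma~\ref{lemma:mK_characteristics}(a), so by Chebyshev's inequality $\En\mathbbm 1_{\mathcal Z_k}(Z) \geq \underline p/2$ with probability approaching one; since $\sum_{\tilde k}\En[\mathbbm 1_{\mathcal Z_k\cap\hat{\mathcal Z}_{\tilde k}}(Z)] = \En\mathbbm 1_{\mathcal Z_k}(Z)$, the pigeonhole principle yields some label $\tilde k(k)$ with $\En[\mathbbm 1_{\mathcal Z_k\cap\hat{\mathcal Z}_{\tilde k(k)}}(Z)]\geq \underline p/(2K)$ w.p.a.1, and the display then forces $(\alpha^K_k - \hat\alpha^K_{\tilde k(k)})^2 = o_p(1)$. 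As $K$ is fixed this holds jointly in $k$, so $\max_k\min_{\tilde k}(\alpha^K_k - \hat\alpha^K_{\tilde k})^2 = o_p(1)$. For the reverse direction I use separation: if $\tilde k(k) = \tilde k(k')$ for some $k\neq k'$, then $|\alpha^K_k - \alpha^K_{k'}| \leq |\alpha^K_k - \hat\alpha^K_{\tilde k(k)}| + |\hat\alpha^K_{\tilde k(k')} - \alpha^K_{k'}| = o_p(1)$, contradicting $|\alpha^K_k - \alpha^K_{k'}|\geq\sqrt c$ on an event of probability tending to one. Hence $k\mapsto\tilde k(k)$ is injective w.p.a.1, thus a bijection onto $\{1,\dots,K\}$ (in particular $\hat\gamma^K$ uses all $K$ labels), so every $\tilde k$ satisfies $\min_k(\hat\alpha^K_{\tilde k} - \alpha^K_k)^2 \leq (\hat\alpha^K_{\tilde k(k)} - \alpha^K_k)^2 = o_p(1)$. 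Combining the two directions gives $d_H(\hat\alpha^K,\alpha^K)^2 = o_p(1)$, hence the claim.

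I expect the pigeonhole/counting step to be the only delicate part: converting the population $L^2$ bound of Lemma~\ref{lemma:consistency_m_covariates} --- an average over \emph{all} observations --- into the statement that \emph{each} true coefficient $\alpha^K_k$ is matched by \emph{some} estimated coefficient relies on each cluster $\mathcal Z_k$ retaining mass bounded away from zero and on the finitely many estimated coefficients being unable to move that mass away from $\alpha^K_k$ without incurring an $L^2$ cost of order (mass)$\times$(distance)$^2$. The remaining ingredients --- Chebyshev's inequality for $\En\mathbbm 1_{\mathcal Z_k}(Z)$ and the triangle inequality with separation for injectivity --- are routine, as is the harmless convention that the coordinates of $\hat\alpha^K$ are unambiguous on the probability-$\to 1$ event where all $K$ labels are used.
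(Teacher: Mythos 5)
Your proof is correct and follows essentially the same route as the paper's: the first direction of the Hausdorff distance comes from Lemma \ref{lemma:consistency_m_covariates} combined with the non-vanishing cluster mass of Lemma \ref{lemma:mK_characteristics}(a), and the second direction from the separation in Lemma \ref{lemma:mK_characteristics}(b) plus a triangle-inequality argument that makes the matching of labels injective, hence bijective, with probability approaching one. The only difference is cosmetic: you locate the matched label via a pigeonhole over the intersections $\mathcal{Z}_k \cap \hat{\mathcal{Z}}_{\tilde{k}}$, whereas the paper bounds $\min_{\tilde{k}}(\hat{\alpha}^K_{\tilde{k}}-\alpha^K_k)^2$ directly by evaluating at $\tilde{k}=\hat{\gamma}^K(Z)$ and factoring out $\En\mathbbm{1}\{\gamma^K(Z)=k\}$.
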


\begin{proof}

The proof proceeds in two steps. I first show that for all $k \in \{1,\ldots, K\}$, \begin{align}\label{eq:proof_consistent_alpha_1}
    \min_{\tilde{k} \in \{1, \ldots, K\}}\left(\hat{\alpha}^K_{\tilde{k}} - \alpha^K_k\right)^2 = o_p(1).
\end{align}

Let $k \in \{1, \ldots, K\}$. We have \begin{align*}
    \En \left(\min_{\tilde{k}\in\{1,\ldots, K\}}\mathbbm{1}\{\gamma^K(Z)=k\}(\hat{\alpha}^K_{\tilde{k}} - \alpha^K_k)^2\right) = \left(\En \mathbbm{1}\{\gamma^K(Z)=k\}\right)\left(\min_{\tilde{k}\in\{1,\ldots, K\}}(\hat{\alpha}^K_{\tilde{k}} - \alpha^K_k)^2\right).
\end{align*}
Since the first term of the right-hand side is non-vanishing by Lemma \ref{lemma:mK_characteristics} (a), it suffices to show that the left-hand side is $o_p(1)$ for all $k \in \{1, \ldots, K\}$. In particular, \begin{align*}
  \En \left(\min_{\tilde{k}\in\{1,\ldots, K\}}\mathbbm{1}\{\gamma^K(Z)=k\}(\hat{\alpha}^K_{\tilde{k}} - \alpha^K_k)^2\right) 
    \leq& \En \left(\mathbbm{1}\{\gamma^K(Z)=k\}(\hat{\alpha}^K_{\hat{\gamma}^K(Z)} - \alpha^K_{k})^2\right)\\
    \leq&\En (\hat{\alpha}^K_{\hat{\gamma}^K(Z)} - \alpha^K_{\gamma^K(Z)})^2 \\
    =& \En (\hat{\alpha}^K_{\hat{\gamma}^K(Z)} - m_K(Z))^2 \\
    =& o_p(1),
\end{align*}
where the final equality follows from Lemma \ref{lemma:consistency_m_covariates}.

Next, I show that for all $\tilde{k}\in \{1,\ldots, K\}$, \begin{align}\label{eq:proof_consistent_alpha_2}
    \min_{k \in \{1, \ldots, K\}}\left(\hat{\alpha}^K_{\tilde{k}} - \alpha^K_k\right)^2 = o_p(1).
\end{align}

Define \begin{align*}
    \sigma_K(k) \equiv \argmin_{\tilde{k}\in\{1, \ldots, K\}} \left(\hat{\alpha}^K_{\tilde{k}} - \alpha^K_k\right)^2.
\end{align*}
By the triangle inequality, it holds that \begin{align*}
    \left\vert \hat{\alpha}^K_{\sigma_K(k)} - \hat{\alpha}^K_{\sigma_K(\tilde{k})}\right\vert &\geq \left\vert\alpha^K_k - \alpha^K_{\tilde{k}}\right\vert - \left\vert\hat{\alpha}^K_{\sigma_K(k)} - \alpha^K_k\right\vert - \left\vert\hat{\alpha}^K_{\sigma_K(\tilde{k})} - \alpha^K_{\tilde{k}}\right\vert
\end{align*}
where $\left\vert\hat{\alpha}^K_{\sigma_K(k)} - \alpha^K_k\right\vert$ and $\left\vert\hat{\alpha}^K_{\sigma_K(\tilde{k})} - \alpha^K_{\tilde{k}}\right\vert$ are $o_p(1)$ by the first result \eqref{eq:proof_consistent_alpha_1}, and $\left\vert\alpha^K_k - \alpha^K_{\tilde{k}}\right\vert > 0$ by Lemma \ref{lemma:mK_characteristics} (b). Thus $\sigma_K(k)\neq \sigma_K(\tilde{k})$ with probability approaching one, implying that the inverse $\sigma_K^{-1}$ is well-defined.

Now, with probability approaching one, we have \begin{align*}
    \min_{k\in \{1,\ldots, K\}} \left(\hat{\alpha}^K_{\tilde{k}} - \alpha^K_k\right)^2 &\leq \left(\hat{\alpha}^K_{\tilde{k}} - \alpha^K_{\sigma_K^{-1}(\tilde{k})}\right)^2\\
    &\overset{[1]}{=}\min_{h\in\{1,\ldots, K\}}\left(\hat{\alpha}^K_{h} - \alpha^K_{\sigma_K^{-1}(\tilde{k})}\right)^2\\
    &\overset{[2]}{=}o_p(1),
\end{align*}
where {\footnotesize [1]} follows from $\tilde{k}=\sigma_K(\sigma_K^{-1}(\tilde{k}))$, and  {\footnotesize [2]} follows from the first result \eqref{eq:proof_consistent_alpha_1}.

Combining \eqref{eq:proof_consistent_alpha_1} and \eqref{eq:proof_consistent_alpha_2} completes the proof.

\end{proof}


The proof of Lemma \ref{lemma:consistent_alpha} shows that for every $K\in \{2, \ldots, K_0\}$ there exists a permutation $\sigma_K:\{1,\ldots,K\}\to\{1,\ldots,K\}$ such that $ \left(\hat{\alpha}^K_{\sigma_K(k)} - \alpha^K_k\right)^2 = o_p(1).$ It is thus possible take $\sigma_K(k) = k$ by simply relabeling the elements of $\hat{\alpha}^K$. The remainder of the proof adopts this convention.

Further, let an $\eta$-neighborhood around a vector $\alpha^K$ be defined as $\mathcal{N}_{\alpha^K}(\eta) \equiv \{\alpha \in \mathcal{M}^{K}: \|\alpha - \alpha^K\| < \eta\}.$

\begin{lemma}\label{lemma:rate_g}
For $\eta>0$ small enough, we have for all $K\in \{2, \ldots, K_0\}$ and $\delta >0$\begin{align*}
        \sup_{(\alpha, \pi) \in \mathcal{N}_{(\alpha^K, \pi^0)}(\eta)} \En \mathbbm{1}\{\hat{\gamma}^K(Z; \alpha, \pi) \not= \gamma^K(Z)\} = o_p(n^{-\delta}).
    \end{align*}
\end{lemma}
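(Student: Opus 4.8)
The plan is to show that the estimated partition $\hat\gamma^K(\cdot;\alpha,\pi)$ agrees with the population partition $\gamma^K$ with very high probability, uniformly over a shrinking neighborhood of $(\alpha^K,\pi_0)$. The key observation is that a value $z\in\mathcal{Z}$ gets misclassified only if the within-category average of $D-X^\top\pi$ over the observations with $Z_i=z$ is closer to the wrong coefficient $\alpha_{k'}$ than to the right one $\alpha_{k}$, where $k=\gamma^K(z)$. Since the population coefficients $\alpha^K_k$ are separated by at least $\sqrt{c}$ (Lemma \ref{lemma:mK_characteristics}(b)) and $(\alpha,\pi)$ lies within an $\eta$-ball of $(\alpha^K,\pi_0)$ with $\eta$ small, misclassification of $z$ forces the local average $\frac{1}{N_z}\sum_{i:Z_i=z}(D_i - X_i^\top\pi - m_K(z))$ to deviate from zero by at least a fixed constant, say $\sqrt{c}/4$. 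Writing $D_i - X_i^\top\pi - m_K(z) = V_{iK} + X_i^\top(\pi_0-\pi) + (m_0(Z_i)-m_K(Z_i)) $ and noting the $m_0-m_K$ term is uniformly bounded, the event reduces (up to the uniformly small $\|\pi_0-\pi\|\le\eta$ contribution, which can be absorbed) to a large-deviation event for the average of the exponential-tailed variables $V_{iK}$ and $X_i$ within category $z$.

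First I would fix $K$ and decompose $\En \mathbbm{1}\{\hat\gamma^K(Z;\alpha,\pi)\neq\gamma^K(Z)\} = \sum_{z\in\mathcal{Z}} \En\mathbbm{1}_z(Z)\,\mathbbm{1}\{\hat\gamma^K(z;\alpha,\pi)\neq\gamma^K(z)\}$, which is at most $\sum_{z\in\mathcal{Z}}(\En\mathbbm{1}_z(Z))\,\mathbbm{1}\{z \text{ misclassified for some }(\alpha,\pi)\in\mathcal{N}\}$. Second, for a fixed $z$ I would bound the probability of the misclassification event: by the reduction above it is contained in $\{|\frac{1}{N_z}\sum_{i:Z_i=z}V_{iK}| > \sqrt{c}/8\} \cup \{\frac{1}{N_z}\sum_{i:Z_i=z}\|X_i\| > C\}$ for constants not depending on $z$ or $n$, after choosing $\eta$ small. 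Conditioning on $N_z$ and applying a Bernstein-type exponential inequality valid for the sub-exponential (generalized) tails in Assumption \ref{assumption:Pn_setup} — exactly as in \citet{bonhomme2015grouped} — gives a bound of the form $\exp\{-a N_z^{\,b}\}$ for positive constants $a,b$, with $N_z$ the (random) category count. Third, I would use Assumption \ref{assumption:rate_condition}: $\Pr(Z=z)\sim a_z n^{\lambda_z - 1}$, so $N_z$ concentrates around $n^{\lambda_z}$, and a Chernoff/binomial bound (the "Lemma \ref{lemma:binom_bound}" referenced in the proof sketch) ensures $N_z \ge \frac12 a_z n^{\lambda_z}$ except on an event of probability $\exp\{-c' n^{\lambda_z}\}$. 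Hence each $z$ is correctly classified with probability at least $1 - C''\exp\{-a'' n^{\lambda_{\min}}\}$ where $\lambda_{\min} = \min_z \lambda_z > 0$. Taking a union bound over the $K_Z = |\mathcal{Z}^{(n)}| = o(n)$ categories (polynomially many, or at most $O(n)$), $\Pr(\text{any }z\text{ misclassified for some }(\alpha,\pi)\in\mathcal{N}) \le K_Z\cdot C''\exp\{-a'' n^{\lambda_{\min}}\} \to 0$ faster than any polynomial, since an exponential in $n^{\lambda_{\min}}$ beats $n\cdot n^{\delta}$. On the complement the displayed average is exactly zero; combining, $\En\mathbbm{1}\{\hat\gamma^K\neq\gamma^K\} = o_p(n^{-\delta})$ for every $\delta>0$.

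The main obstacle is handling the supremum over the neighborhood $\mathcal{N}_{(\alpha^K,\pi_0)}(\eta)$ rather than a single point. The decision boundary defining $\hat\gamma^K(z;\alpha,\pi)$ is the midpoint between consecutive $\alpha_k$'s shifted by $X_i^\top(\pi_0-\pi)$-type terms, so one must verify that shrinking $\eta$ makes the "bad region" for the local average uniformly bounded away from zero across all $(\alpha,\pi)\in\mathcal{N}$ simultaneously — this is where the separation constant $c$ and compactness of $\mathcal{M},\Pi$ (Assumption \ref{assumption:asymptotics_setup}(b)) enter, and where one needs $\eta$ smaller than, e.g., $\sqrt{c}/16$ times a constant controlling $\En\|X\|$. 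A secondary technical point is that $\pi$ ranges over a continuum, so the $X_i^\top(\pi_0-\pi)$ term must be controlled uniformly; since $X$ has exponential tails and $\pi-\pi_0$ is deterministically bounded by $\eta$, $\sup_{\|\pi-\pi_0\|\le\eta}|\frac{1}{N_z}\sum_{i:Z_i=z}X_i^\top(\pi_0-\pi)| \le \eta\cdot\frac{1}{N_z}\sum_{i:Z_i=z}\|X_i\|$, which the same exponential inequality bounds by $\eta\cdot(\E\|X\| + o(1))$ with overwhelming probability — so the continuum causes no real trouble once this factorization is used. Everything else is a bookkeeping exercise combining the per-category exponential bound with the union bound over $o(n)$ categories.
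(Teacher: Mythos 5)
Your overall architecture matches the paper's in its main ingredients: dominate the misclassification event uniformly over the neighborhood by exploiting the separation constant and the factorization $\vert X_i^\top(\pi_0-\pi)\vert\le\eta\|X_i\|$, apply the \citet{bonhomme2015grouped} exponential inequality to within-category averages conditional on the category count, and control the count with the binomial Chernoff bound of Lemma \ref{lemma:binom_bound}. The genuine gap is in your reduction step when $K<K_0$, which is exactly the case the lemma must cover. You claim that misclassification of a category $z$ forces the local average of $D_i-X_i^\top\pi-m_K(z)$ to deviate from zero by at least $\sqrt{c}/4$, and that since $m_0-m_K$ is ``uniformly bounded'' this is a large-deviation event for $V_{iK}$ and $X_i$. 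But the conditional mean of $D_i-X_i^\top\pi-m_K(z)$ given $Z_i=z$ is $m_0(z)-m_K(z)$ (plus an $O(\eta)$ term), not zero, and $\vert m_0(z)-m_K(z)\vert$ can easily exceed $\sqrt{c}/4$: when $K<K_0$ several support points of $Z^{(0)}$ are merged into one cluster, and a merged point can lie far from its cluster center while still being strictly closest to it. For such a $z$ your ``bad event'' occurs with probability close to one, so the bound is vacuous and does not yield $o_p(n^{-\delta})$. (A smaller slip: $V_{iK}$ already contains $m_0(Z_i)-m_K(Z_i)$ by definition, so your decomposition double counts the bias.) The correct reduction --- which is what the paper's dominating indicator $\tilde{M}^K_{zk}$ accomplishes by keeping only $V$ in the comparison statistic --- centers the local average at $m_0(z)$ rather than at $m_K(z)$: misclassification forces $\frac{1}{N_z}\sum_{i:Z_i=z}V_i + \bar{X}_z^\top(\pi_0-\pi)$ past the strictly positive, population-level margin by which $m_0(z)$ is closer to its own center $\alpha^K_{\gamma^K(z)}$ than to any other center $\alpha^K_{\tilde{k}}$, and $\eta$ must be taken small relative to that margin, not relative to $\sqrt{c}$ alone. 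This nearest-center margin is an additional structural property of the population $K$CMeans solution (related to, but not implied by, the separation and mass bounds in Lemma \ref{lemma:mK_characteristics}); your proposal never establishes or invokes it, and boundedness of $m_0-m_K$ is not a substitute. For $K=K_0$ your reduction is fine because the bias vanishes.

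A secondary issue is the aggregation step. Your union bound over all categories, together with the conclusion that no category is misclassified with probability $1-K_Z\exp\{-a'' n^{\lambda_{\min}}\}$, requires a uniform lower bound $\lambda_{\min}=\min_z\lambda_z>0$ (and uniform constants $a_z$ and tail constants across $z$). Assumption \ref{assumption:rate_condition} is a pointwise-in-$z$ condition over a collection of categories that grows with $n$, so no such uniform bound is supplied. The paper instead applies Markov's inequality to the expected misclassified \emph{share}, $\E[\sum_k\En\tilde{M}^K_{Zk}]$, where each category enters weighted by its point mass, and then only needs the per-category bounds $\Pr(\tilde{M}^K_{zk}=1)=o(n^{-\delta})$ for each fixed $z$. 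Your ``exactly zero misclassification with overwhelming probability'' conclusion is stronger than what the lemma needs and is the part of your argument that leans hardest on this missing uniformity; if you retain the union-bound route you should either assume $\inf_z\lambda_z>0$ explicitly or revert to the expectation/Markov aggregation.
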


\begin{proof}

Fix an arbitrary $K \in \{2, \ldots, K_0\}$.

From the definition of $\hat{\gamma}$ in \eqref{eq:definition_alphagamma_hat}, we have for all $k \in \{1,\ldots, K\}, z \in \mathcal{Z}$, \begin{align*}
   \mathbbm{1}\{\hat{\gamma}^K(z; \alpha, \pi) = k\} 
   \leq\mathbbm{1}\left\{\En \mathbbm{1}_z(Z)(Y - \alpha_k - X^\top \pi)^2 \leq \En \mathbbm{1}_z(Z)(Y - \alpha_{\gamma^K(z)}- X^\top \pi)^2\right\}.
\end{align*}
As a consequence, \begin{align}\label{eq:proof_g_rate_0}
\begin{aligned}
    &\En \mathbbm{1}\{\hat{\gamma}^K(Z; \alpha, \pi) \not= \gamma^K(Z)\}\\
    =& \sum_{k=1}^{K}\En \mathbbm{1}\{\gamma^K(Z) \not= k\} \mathbbm{1}\{\hat{\gamma}^K(Z; \alpha, \pi) = k\} \\
    \leq& \En \sum_{k=1}^{K}M^K_{Zk}(\alpha, \pi),
\end{aligned}
\end{align}
where for all $z\in \mathcal{Z}$ and $k \in \{1, \ldots, K\}$\begin{align*}
    M^K_{zk}(\alpha, \pi) &\equiv \mathbbm{1}\{\gamma^K(Z) \not= k\} \mathbbm{1}\left\{\En \mathbbm{1}_z(Z)(D - \alpha_k- X^\top \pi)^2 \leq \En \mathbbm{1}_z(Z)(D - \alpha_{\gamma^K(z)}- X^\top \pi)^2\right\}\\
    &=\mathbbm{1}\{\gamma^K(Z) \not= k\} \\ & \qquad \times \mathbbm{1}\left\{\En \mathbbm{1}_z(Z)\left(2D(\alpha_{\gamma^K(z)} - \alpha_{k}) - (\alpha_{\gamma^K(z)} - \alpha_k)(\alpha_{\gamma^K(z)} + \alpha_k + 2X^\top\pi)\right)\leq 0\right\}\\
    &=\mathbbm{1}\{\gamma^K(Z) \not= k\}  \\ & \qquad \times \mathbbm{1}\left\{\En\mathbbm{1}_z(Z)(\alpha_{\gamma^K(z)} - \alpha_k)\left(V_K + \alpha_{\gamma^K(z)}^K - \frac{(\alpha_{\gamma^K(z)} + \alpha_k)}{2} + X^\top(\pi^0 - \pi))\right)\leq 0\right\},
\end{align*}
where I have substituted for $D = \alpha^K_{\gamma^K(Z)} + X^\top\pi^0+ V_K$ and rearranged terms for simplification.

Now, whenever $(\alpha, \pi) \in \mathcal{N}_{(\alpha^K, \pi^0)}(\eta),$ it is possible to bound $M^K_{zk}(\alpha, \pi)$ by a quantity that does not depend on $(\alpha, \pi)$. In particular, note that \begin{align*}
    M^K_{zk}(\alpha, \pi) &\leq \max_{\tilde{k} \not = k} \mathbbm{1}\left\{\En \mathbbm{1}_z(Z) (\alpha_{\tilde{k}} - \alpha_k)\left(V_K + \alpha_{\tilde{k}}^K - \frac{ \alpha_{\tilde{k}} + \alpha_k}{2}+ X^\top(\pi^0 - \pi)\right)\leq 0\right\}.
    \end{align*}

Further, using that $V_K = V + m_0(Z) - m_K(Z)=V+\alpha^0_{\gamma^0(Z)} - \alpha^K_{\gamma^K(Z)}$, we have by simple application of the triangle inequality \begin{align}\label{eq:proof_g_rate_1}
\begin{aligned}
     &\bigg\vert\En \mathbbm{1}_z(Z) (\alpha_{\Tilde{k}} - \alpha_k)\left(V  + \alpha_{\tilde{k}}^K - \frac{\alpha_{\tilde{k}} + \alpha_k}{2}+ X^\top(\pi^0 - \pi)+ m_0(z) - m_K(z)\right) \\
     &\quad - \En \mathbbm{1}_z(Z)(\alpha^K_{\Tilde{k}} - \alpha^K_k)\left(V+ \alpha_{\tilde{k}}^K - \frac{ \alpha^K_{\tilde{k}} + \alpha^K_k}{2}\right) \bigg\vert\\
    \leq & \left\vert \En\mathbbm{1}_z(Z) \left[(\alpha_{\Tilde{k}} - \alpha_k) - (\alpha^K_{\Tilde{k}} - \alpha^K_k)\right]V \right\vert \\
    &\quad + \left\vert \En\mathbbm{1}_z(Z)\left[(\alpha_{\Tilde{k}} - \alpha_k)\left(\alpha_{\tilde{k}}^K - \frac{ \alpha_{\tilde{k}} + \alpha_k}{2}\right) - (\alpha^K_{\Tilde{k}} - \alpha^K_k)\left(\alpha_{\tilde{k}}^K - \frac{ \alpha^K_{\tilde{k}} + \alpha^K_k}{2}\right)\right] \right\vert\\
    &\quad + \left\vert \En\mathbbm{1}_z(Z) (\alpha_{\Tilde{k}} - \alpha_k)X^\top(\pi^0-\pi) \right\vert\\
    &\quad + \left\vert \En\mathbbm{1}_z(Z) (\alpha_{\Tilde{k}} - \alpha_k)\top(m_0(z) - m_K(z)) \right\vert.
\end{aligned}
\end{align}
For the first term in \eqref{eq:proof_g_rate_1}, it holds that \begin{align*}
     &\left\vert \En\mathbbm{1}_z(Z) \left[(\alpha_{\tilde{k}} - \alpha_k) - (\alpha^K_{\tilde{k}} - \alpha^K_k)\right]V \right\vert \\
     \overset{[1]}{\leq}& \left(\En\mathbbm{1}_z(Z)\right)\left\vert(\alpha_{\tilde{k}} - \alpha_k) - (\alpha^K_{\tilde{k}} - \alpha^K_k)\right\vert\left(\frac{1}{\En\mathbbm{1}_z(Z)}\En\mathbbm{1}_z(Z)V^2\right)^{\frac{1}{2}}\\
     \overset{[2]}{\leq}&2\left(\En\mathbbm{1}_z(Z)\right)\sqrt{\eta}\left(\frac{1}{\En\mathbbm{1}_z(Z)}\En\mathbbm{1}_z(Z)V^2\right)^{\frac{1}{2}},
\end{align*}
where {\footnotesize [1]} follows from Jensen's inequality, and {\footnotesize [2]} follows from $\alpha \in \mathcal{N}_{\alpha^K}(\eta)$. Similarly, for the second term in \eqref{eq:proof_g_rate_1}, it holds that
\begin{align*}
&\left\vert \En\mathbbm{1}_z(Z)\left[(\alpha_{\tilde{k}} - \alpha_k)\left(\alpha_{\tilde{k}}^K - \frac{ \alpha_{\tilde{k}} + \alpha_k}{2}\right) - (\alpha^K_{\tilde{k}} - \alpha^K_k)\left(\alpha_{\tilde{k}}^K - \frac{ \alpha^K_{\tilde{k}} + \alpha^K_k}{2}\right)\right] \right\vert\\
   =&\frac{1}{2}\left( \En\mathbbm{1}_z(Z)\right) \left\vert \left[\left(\alpha_{\tilde{k}} - \alpha_k\right)-\left(\alpha^K_{\tilde{k}} - \alpha^K_k\right)\right]\left(\alpha_{\tilde{k}}^K -   \alpha^K_k \right) + \left(\alpha_{\tilde{k}} - \alpha_k\right)\left[(\alpha^K_{\tilde{k}} + \alpha^K_k) - (\alpha_{\tilde{k}} + \alpha_k) \right] \right\vert\\
   \leq&\frac{1}{2}\left( \En\mathbbm{1}_z(Z)\right) \bigg(\left\vert \left[\left(\alpha_{\tilde{k}} - \alpha_k\right)-\left(\alpha^K_{\tilde{k}} - \alpha^K_k\right)\right]\left(\alpha_{\tilde{k}}^K -   \alpha^K_k \right) + \left(\alpha^K_{\tilde{k}} - \alpha^K_k\right)\left[(\alpha^K_{\tilde{k}} + \alpha^K_k) - (\alpha_{\tilde{k}} + \alpha_k) \right] \right\vert \\
   &\quad + \left\vert \left(\alpha_{\tilde{k}} - \alpha_k\right)\left[(\alpha^K_{\tilde{k}} + \alpha^K_k) - (\alpha_{\tilde{k}} + \alpha_k) \right] - \left(\alpha^K_{\tilde{k}} - \alpha^K_k\right)\left[(\alpha^K_{\tilde{k}} + \alpha^K_k) - (\alpha_{\tilde{k}} + \alpha_k) \right] \right\vert\bigg)\\
   \overset{[1]}{\leq}& \left( \En \mathbbm{1}_z(Z)\right) \left(\vert \alpha^K_{\tilde{k}} - \alpha^K_k\vert \sqrt{\eta} + 2\eta\right)\\
   \overset{[2]}{\leq}& \left( \En\mathbbm{1}_z(Z)\right) C_1\sqrt{\eta},
\end{align*}
with $C_1\equiv \vert \alpha^K_{\tilde{k}} - \alpha^K_k\vert + 2$ a constant independent of $\eta, n$ and $(\alpha, \pi)$, and where {\footnotesize [1]} follows from $\alpha \in \mathcal{N}_{\alpha^K}(\eta)$, and {\footnotesize [2]} follows from $\sqrt{\eta} \geq \eta$ for $\eta \in [0,1]$. Finally, using analogous arguments as above, the third term in \eqref{eq:proof_g_rate_1} can be bounded by \begin{align*}
    \left\vert \En\mathbbm{1}_z(Z) (\alpha_{\Tilde{k}} - \alpha_k)X^\top(\pi^0-\pi) \right\vert \leq \left( \En\mathbbm{1}_z(Z)\right) C_2 \sqrt{\eta} \left(\frac{1}{\En\mathbbm{1}_z(Z)}\En\mathbbm{1}_z(Z)\|X\|^2\right),
\end{align*}
where $C_2$ is a constant independent of $\eta, n$ and $(\alpha, \pi)$. And finally, the fourth term in  \eqref{eq:proof_g_rate_1} can be bounded by  \begin{align*}
   \left\vert \En\mathbbm{1}_z(Z) (\alpha_{\Tilde{k}} - \alpha_k)\top(m_0(z) - m_K(z)) \right\vert \leq \left( \En\mathbbm{1}_z(Z)\right) C_3 \sqrt{\eta} \|m_0 - m_K\|_{\infty},
\end{align*}
where $C_3$ is a constant independent of $\eta, n$ and $(\alpha, \pi)$.

Therefore,\begin{align}\label{eq:proof_g_rate_2}
\begin{aligned}
    M^K_{zk}(\alpha, \pi) & \leq \max_{\tilde{k} \not= k}\mathbbm{1}\bigg\{\En\mathbbm{1}_z(Z)(\alpha^K_{\tilde{k}} - \alpha^K_k)\left(V+ \alpha_{\tilde{k}}^K - \frac{ \alpha^K_{\tilde{k}} + \alpha^K_k}{2}\right) \\ 
    & \qquad \quad  \leq2\left(\En\mathbbm{1}_z(Z)\right)\sqrt{\eta}\left(\frac{1}{\En\mathbbm{1}_z(Z)}\En\mathbbm{1}_z(Z)V^2\right)^{\frac{1}{2}} +\left( \En\mathbbm{1}_z(Z)\right) C_1\sqrt{\eta}\\
    &\qquad \qquad + \left( \En\mathbbm{1}_z(Z)\right) C_2 \sqrt{\eta} \left(\frac{1}{\En\mathbbm{1}_z(Z)}\En\mathbbm{1}_z(Z)\|X\|^2\right) \\
    & \qquad \qquad + \left( \En\mathbbm{1}_z(Z)\right) C_3 \sqrt{\eta} \|m_0 - m_K\|_{\infty} \bigg\}\\
    & = \max_{\tilde{k} \not= k}\mathbbm{1}\bigg\{\frac{1}{\En\mathbbm{1}_z(Z)}\En\mathbbm{1}_z(Z)V(\alpha^K_{\tilde{k}} - \alpha^K_k)\\ 
    & \qquad \quad \leq 2\sqrt{\eta}\left(\frac{1}{\En\mathbbm{1}_z(Z)}\En\mathbbm{1}_z(Z)V^2\right)^{\frac{1}{2}} +C_1\sqrt{\eta}  \\
    &\qquad \qquad +  C_2 \sqrt{\eta} \left(\frac{1}{\En\mathbbm{1}_z(Z)}\En\mathbbm{1}_z(Z)\|X\|^2\right) \\
    & \qquad \qquad + C_3\sqrt{\eta} \|m_0 - m_K\|_{\infty} - \frac{1}{2}\left(\alpha^K_{\tilde{k}} - \alpha^K_k\right)^2\bigg\},
\end{aligned}
\end{align}
where the right-hand side does not depend on $(\alpha, \pi)$. 

Hence, for $\eta < 1$ we have \begin{align*}
    \sup_{(\alpha, \pi) \in \mathcal{N}_{(\alpha^K, \pi^0)}(\eta)} M^K_{zk}(\alpha, \pi) \leq \tilde{M}^K_{zk},
\end{align*}
where $\tilde{M}^K_{zk}$ denotes the final term in \eqref{eq:proof_g_rate_2}. Combining with \eqref{eq:proof_g_rate_0} then implies\begin{align*}
    &\sup_{(\alpha, \pi) \in \mathcal{N}_{(\alpha^K, \pi^0)}(\eta)}\En \mathbbm{1}\{\hat{\gamma}^K(Z; \alpha, \pi) \not= \gamma^K(Z)\} \leq \sum_{k=1}^{K} \En \Tilde{M}^K_{Zk} ,
\end{align*}
and therefore for any $\epsilon>0$ and $\delta>0$,\begin{align*}
    &\Pr\left(\sup_{(\alpha, \pi) \in \mathcal{N}_{(\alpha^K, \pi^0)}(\eta)}\En \mathbbm{1}\{\hat{\gamma}^K(Z; \alpha, \pi) \not= \gamma^K(Z)\}>\epsilon n^{-\delta}\right)\\
    \leq&\Pr\left(\sum_{k=1}^{K} \En \Tilde{M}^K_{Zk}>\epsilon n^{-\delta} \right) \\
    \overset{[1]}{\leq}&\frac{\E\left[ \sum_{k=1}^{K} \En \Tilde{M}^K_{Zk}\right]}{\epsilon n^{-\delta}}\\
    \leq&\frac{\sum_{k=1}^{K} \En \Pr(\Tilde{M}^K_{Zk} = 1)}{\epsilon n^{-\delta}}
\end{align*}
where {\footnotesize [1]} follows from Markov's inequality. It thus suffices to show that $\forall z \in \mathcal{Z}$, $k\in \{1, \ldots, K\}$, and $\delta>0$,\begin{align}\label{eq:final_eq_proof_g}
   \Pr\left(\Tilde{M}^K_{zk} = 1\right) = o(n^{-\delta}).
\end{align}

Note that by the exponential tail Assumption \ref{assumption:Pn_setup}, there exists a bounded constant $ \Tilde{L} < \infty$ such that for all $n\in \mathbbm{N}$, $P_n$ satisfies $\E[V^2\vert Z^{(n)}] \overset{a.s.}{<} \tilde{L}$ and $\E[\|X\|^2\vert Z^{(n)}] \overset{a.s.}{<} \tilde{L}$. Then, let  $L^K=\max\{\tilde{L}, \sqrt{\tilde{L}}, \|m_0 - m_K\|_\infty\}$ and consider 

\begin{align}\label{eq:proof_opdelta_Mtilde}
\begin{aligned}
    \Pr\left(\Tilde{M}^K_{zk}\right) &\overset{[1]}{\leq} \sum_{\Tilde{k} \neq k} \Pr\bigg(\frac{1}{\En\mathbbm{1}_z(Z)}\En\mathbbm{1}_z(Z)V(\alpha^K_{\tilde{k}} - \alpha^K_k)\\ 
    & \qquad \quad \leq 2\sqrt{\eta}\left(\frac{1}{\En \mathbbm{1}_z(Z)}\En\mathbbm{1}_z(Z)V^2\right)^{\frac{1}{2}} +C_1\sqrt{\eta}  \\
    & \qquad \qquad + C_2 \sqrt{\eta} \left(\frac{1}{\En\mathbbm{1}_z(Z)}\En\mathbbm{1}_z(Z)\|X\|^2\right) \\
    & \qquad \qquad + C_3\sqrt{\eta} \|m_0 - m_K\|_{\infty} - \frac{1}{2}\left(\alpha^K_{\tilde{k}} - \alpha^K_k\right)^2\bigg)\\
    &\overset{[2]}{\leq} \sum_{\Tilde{k} \neq k} \Pr\bigg(\frac{1}{\En\mathbbm{1}_z(Z)}\En\mathbbm{1}_z(Z)V(\alpha^K_{\tilde{k}} - \alpha^K_k) \leq \sqrt{\eta}(C_1 + L^K(2 + JC_2 + C_3))  - \frac{1}{2}\left(\alpha^K_{\tilde{k}} - \alpha^K_k\right)^2\bigg)\\
    & \qquad \quad  + K\Pr\left(\frac{1}{\En \mathbbm{1}_z(Z)}\En\mathbbm{1}_z(Z)V^2 > L^K\right) \\
    & \qquad \quad  + K\Pr\left(\frac{1}{\En \mathbbm{1}_z(Z)}\En\mathbbm{1}_z(Z)\|X\|^2 > JL^K\right)\\
    & \qquad \quad  + K\Pr\left( \|m_0 - m_K\|_\infty > L^K\right),
    \end{aligned}
\end{align}
where {\footnotesize [1]} follows from the union bound, and {\footnotesize [2]} follows from the triangle inequality. I now consider each term separately.

Focusing on the first term, fix $\eta\geq 0$ such that $\eta \leq \min\{1, \Tilde{\eta}^K\}$ where \begin{align*}
    \Tilde{\eta}^K <\left(\frac{c}{C_1 + L^K(2 + JC_2 + C_3)}\right)^2
\end{align*}
with $c$ defined by \eqref{eq:Z0_nonvanishing}. Denote \begin{align*}
    \Tilde{c}_{\Tilde{k}, k} \equiv \sqrt{\eta}(C_1 + L^K(2 + JC_2 + C_3))  - \frac{1}{2}\left(\alpha^K_{\tilde{k}} - \alpha^K_k\right)^2.
\end{align*}
Note that the choice of $\eta$ above implies that $\Tilde{c}_{\Tilde{k}, k}<0$ for all combinations $\Tilde{k}\neq k$. Further, fix $\Tilde{\lambda}>0$ such that $\tilde{\lambda}< \lambda_z$ as defined by Assumption \ref{assumption:rate_condition} and consider \begin{align}\label{eq:proof_opdelta_0}
\begin{aligned}
    &\Pr\bigg(\frac{1}{\sum_{i=1}^n\mathbbm{1}_z(Z_i)}\sum_{i=1}^n\mathbbm{1}_z(Z_i)V_i(\alpha^K_{\tilde{k}} - \alpha^K_g) \leq \Tilde{c}_{\Tilde{k}, k}\bigg)\\
    =&\Pr\bigg(\frac{1}{\sum_{i=1}^n\mathbbm{1}_z(Z_i)}\sum_{i=1}^n\mathbbm{1}_z(Z_i)V_i(\alpha^K_{\tilde{k}} - \alpha^K_k) \leq \Tilde{c}_{\Tilde{k}, k}\bigg\vert \sum_{i=1}^n\mathbbm{1}_z(Z_i) > n^{\tilde{\lambda}} \bigg)\Pr\left(\sum_{i=1}^n\mathbbm{1}_z(Z_i) > n^{\tilde{\lambda}}\right)\\
    &+\Pr\bigg(\frac{1}{\sum_{i=1}^n\mathbbm{1}_z(Z_i)}\sum_{i=1}^n\mathbbm{1}_z(Z_i)V_i(\alpha^K_{\tilde{k}} - \alpha^K_k) \leq \Tilde{c}_{\Tilde{k}, k}\bigg\vert \sum_{i=1}^n\mathbbm{1}_z(Z_i) \leq n^{\tilde{\lambda}} \bigg)\Pr\left(\sum_{i=1}^n\mathbbm{1}_z(Z_i) \leq n^{\tilde{\lambda}}\right)\\
    \leq &\Pr\bigg(\frac{1}{\sum_{i=1}^n\mathbbm{1}_z(Z_i)}\sum_{i=1}^n\mathbbm{1}_z(Z_i)V_i(\alpha^K_{\tilde{k}} - \alpha^K_k) \leq \Tilde{c}_{\Tilde{k}, k}\bigg\vert \sum_{i=1}^n\mathbbm{1}_z(Z_i) > n^{\tilde{\lambda}} \bigg)\\
    &+\Pr\left(\sum_{i=1}^n\mathbbm{1}_z(Z_i) \leq n^{\tilde{\lambda}}\right),
    \end{aligned}
\end{align}
where the inequality follows from probabilities being bounded by 1. For the first term, it holds for any $\delta >0$ that \begin{align}\label{eq:proof_opdelta_1}
\begin{aligned}
    &\Pr\bigg(\frac{1}{\sum_{i=1}^n\mathbbm{1}_z(Z_i)}\sum_{i=1}^n\mathbbm{1}_z(Z_i)V_i(\alpha^K_{\tilde{k}} - \alpha^K_k) \leq \Tilde{c}_{\Tilde{k}, k}\: \bigg\vert\sum_{i=1}^n\mathbbm{1}_z(Z_i) > n^{\tilde{\lambda}} \bigg)\\
    \overset{[1]}{=}&\Pr\bigg(\frac{1}{N_z}\sum_{i=1}^{N_z}V_{iz}(\alpha^K_{\tilde{k}} - \alpha^K_k) \leq \Tilde{c}_{\Tilde{k}, k}\: \bigg\vert N_z > n^{\tilde{\lambda}} \bigg)\\
     \overset{[2]}{\leq} &\Pr\bigg(\left\vert\frac{1}{N_z}\sum_{i=1}^{N_z}V_{iz}\right\vert \geq \frac{\vert \Tilde{c}_{\Tilde{k}, k}\vert }{\vert \alpha^K_{\tilde{k}} - \alpha^K_k\vert} \:\bigg\vert N_z > n^{\tilde{\lambda}} \bigg)\\
    \overset{[3]}{\leq} & \Pr\bigg(\left\vert\frac{1}{\lfloor n^{\tilde{\lambda}}\rfloor}\sum_{i=1}^{\lfloor n^{\tilde{\lambda}}\rfloor}V_{iz} \right\vert \geq \frac{\vert \Tilde{c}_{\Tilde{k}, k}\vert }{\vert \alpha^K_{\tilde{k}} - \alpha^K_k\vert}  \: \bigg)\\
    \overset{[4]}{=}&o(n^{-\delta}),
\end{aligned}
\end{align}
where {\footnotesize [1]} takes $V_{iz} \equiv (V_i\vert Z_i=z)$ and $N_z \equiv \sum_{i=1}^n\mathbbm{1}_z(Z_i)$, {\footnotesize [2]} follows from $\Tilde{c}_{\Tilde{k}, k}<0$ and the fact that $\vert \alpha^K_{\tilde{k}} - \alpha^K_k\vert>0, \forall \Tilde{k}\neq k$ by Lemma \ref{lemma:mK_characteristics} (b), and {\footnotesize [3]} follows from $\Pr(\vert\frac{1}{n}\sum_{i=1}^n V_{iz}\vert \geq b ) \leq \Pr(\vert\frac{1}{\Tilde{n}}\sum_{i=1}^{\Tilde{n}} V_{iz}\vert \geq b), \forall \Tilde{n} \leq n, b>0$, and $N_z\independent (\sum_{i=1}^{\lfloor n^{\tilde{\lambda}}\rfloor}V_{iz})$. Finally, {\footnotesize [4]} follows by application of Lemma B.5 in \citet{bonhomme2015grouped} where I take $T\equiv \lfloor n^{\tilde{\lambda}}\rfloor$, $z_t \equiv V_{iz}$, and $z \equiv \frac{\vert \Tilde{c}_{\Tilde{k}, k}\vert }{\vert \alpha^K_{\tilde{k}} - \alpha^K_k\vert}.$\footnote{Note that Lemma B.5 implies the rate $o(\lfloor n^{(\tilde{\lambda})}\rfloor^{-\delta})$, but since  this holds for any $\delta >0$ and $\tilde{\lambda} >0$ is a fixed constant, the stated result follows.} The lemma applies by Assumption \ref{assumption:Pn_setup} (a) and Assumption \ref{assumption:asymptotics_setup} (d).

To bound the second term in \eqref{eq:proof_opdelta_0}, I prove a simple concentration inequality in Lemma \ref{lemma:binom_bound}, whose application implies for any $\delta >0$, \begin{align}\label{eq:proof_opdelta_2}
    \Pr\left(\sum_{i=1}^n\mathbbm{1}_z(Z_i) \leq n^{a_z\tilde{\lambda}}\right) = o(n^{-\delta}), \quad \forall z \in \mathcal{Z}.
\end{align}


\begin{lemma}\label{lemma:binom_bound}
    Let $X_n$ be a Binomial random variable with $n$ trials and success probability $p_n = a n^{\lambda - 1}$ for fixed $a> 0$ and $\lambda \in (0,1)$. Then, $\forall \delta >0$ and $\Tilde{\lambda}>0: \lambda > \Tilde{\lambda},$\begin{align*}
        \Pr\left(X_n \leq an^{\tilde{\lambda}}\right) = o(n^{-\delta}).
    \end{align*}
\end{lemma}

\begin{proof}
    By Chernoff's inequality,\begin{align*}
        &\Pr\left(X_n \leq an^{\tilde{\lambda}}\right)\\
        \leq &\exp\left\{-n\left[an^{\tilde{\lambda} -1}\log\left(\frac{an^{\tilde{\lambda} - 1}}{an^{\lambda - 1}}\right) + \left(1-an^{\tilde{\lambda} -1}\right)\log\left(\frac{1 - an^{\tilde{\lambda} - 1}}{1 - an^{\lambda - 1}}\right)  \right]\right\}\\
        =&\exp\left\{-n\left[ -an^{\Tilde{\lambda} - 1}(\lambda-\tilde{\lambda})\log(n) + \left(1-an^{\tilde{\lambda} -1}\right)\left(\log\left(1-an^{\tilde{\lambda} -1}\right) - \log\left(1 - an^{\lambda - 1}\right)\right) \right]\right\}.
    \end{align*}
It is possible to bound the terms involving the logarithm via the following simple inequalities:\begin{align*}
    \log(n) \leq \gamma(n^{1/\gamma} - 1), \quad &\forall n, \gamma >0,\\
    \frac{-x}{1-x} \leq \log(1-x) \leq -x, \quad &\forall x \in [0, 1).
\end{align*}
Therefore, fixing $\gamma>0: \lambda > \tilde{\lambda} + \frac{1}{\gamma}$, \begin{align*}
     &\Pr\left(X_n \leq an^{\tilde{\lambda}}\right)\\
        \leq & \exp\left\{-n\left[ -an^{\Tilde{\lambda} - 1}(\lambda-\tilde{\lambda})\gamma\left(n^{1/\gamma} - 1\right) + \left(1-an^{\tilde{\lambda} -1}\right)\left( an^{\lambda - 1} - \frac{an^{\Tilde{\lambda} - 1}}{1-an^{\Tilde{\lambda} - 1}} \right) \right]\right\}\\
        =&\exp\left\{ -\left[an^{\lambda} + an^{\tilde{\lambda}}(\lambda - \Tilde{\lambda})\gamma - an^{\Tilde{\lambda} + 1/\gamma}(\lambda - \Tilde{\lambda})\gamma - a^2n^{\lambda + \Tilde{\lambda} -1} - an^{\tilde{\lambda}}\right] \right\}\\
        =&\exp\left\{ -n^{\lambda}\left[a + an^{\tilde{\lambda} - \lambda}(\lambda - \Tilde{\lambda})\gamma - an^{\Tilde{\lambda} + 1/\gamma - \lambda}(\lambda - \Tilde{\lambda})\gamma - a^2n^{\Tilde{\lambda} -1} - an^{\tilde{\lambda} - \lambda}\right] \right\},
\end{align*}
where the term in brackets tends to $a>0$ as $n\to\infty$. Finally, note that for any $\delta, \lambda >0$,\begin{align*}
    \exp\left\{-n^{\lambda}\right\} = o(n^{-\delta}),
\end{align*}
which completes the proof.

\end{proof}

Combining \eqref{eq:proof_opdelta_0}-\eqref{eq:proof_opdelta_2} then implies for any $\delta > 0$, \begin{align}\label{eq:proof_opdelta_3}
    \Pr\bigg(\frac{1}{\sum_{i=1}^n\mathbbm{1}_z(Z_i)}\sum_{i=1}^n\mathbbm{1}_z(Z_i)V_i(\alpha^K_{\tilde{k}} - \alpha^K_k) \leq \Tilde{c}_{\Tilde{k}, k}\bigg) = o(n^{-\delta}).
\end{align}

Focusing now on the second term in \eqref{eq:proof_opdelta_Mtilde} and following similar arguments as before, we have for any $\delta >0$\begin{align*}
    &\Pr\left(\frac{1}{\sum_{i=1}^n\mathbbm{1}_z(Z_i)}\sum_{i=1}^n\mathbbm{1}_z(Z_i)V_i^2 > L^K\right)\\
    \leq & \Pr\left(\frac{1}{\lfloor n^{\lambda - \tilde{\lambda}}\rfloor}\sum_{i=1}^{\lfloor n^{\lambda - \tilde{\lambda}}\rfloor}V_{iz}^2 > L^K\right) + \Pr\left(\sum_{i=1}^n\mathbbm{1}_z(Z_i) \leq n^{\lambda - \tilde{\lambda}} \right)\\
    \leq&\Pr\left(\left\vert\frac{1}{\lfloor n^{\lambda - \tilde{\lambda}}\rfloor}\sum_{i=1}^{\lfloor n^{\lambda - \tilde{\lambda}}\rfloor}V_{iz}^2 - \E V_{z}^2\right\vert > L^K- \E V_{z}^2\right) + o(n^{-\delta}).
\end{align*}
We can again apply Lemma B.5 in \citet{bonhomme2015grouped} where I take $T\equiv \lfloor n^{\tilde{\lambda}}\rfloor$, $z_t \equiv V^2_{iz} - \E V^2_{z}$, and $z \equiv L^K - \E V^2_{z}$. Note that $L^K - \E V^2_{z}>0$ is implied by Assumption \ref{assumption:Pn_setup} (a). As a consequence, for any $\delta >0$\begin{align}\label{eq:proof_opdelta_4}
    \Pr\left(\frac{1}{\sum_{i=1}^n\mathbbm{1}_z(Z_i)}\sum_{i=1}^n\mathbbm{1}_z(Z_i)V_i^2 > L^K\right) = o(n^{-\delta}).
\end{align}

Similarly for the third term in \eqref{eq:proof_opdelta_Mtilde} we have for any $\delta > 0$\begin{align*}
    &\Pr\left(\frac{1}{\sum_{i=1}^n\mathbbm{1}_z(Z_i)}\sum_{i=1}^n\mathbbm{1}_z(Z_i)\|X_i\|^2 > JL^K\right)\\
    \leq & \Pr\left(\frac{1}{\lfloor n^{\lambda - \tilde{\lambda}}\rfloor}\sum_{i=1}^{\lfloor n^{\lambda - \tilde{\lambda}}\rfloor}\|X_{iz}\|^2 > JL^K\right) + o(n^{-\delta})\\
    \leq & \Pr\left(\exists j \in \{1,\ldots, J\}:\: \frac{1}{\lfloor n^{\lambda - \tilde{\lambda}}\rfloor}\sum_{i=1}^{\lfloor n^{\lambda - \tilde{\lambda}}\rfloor}X_{ijz}^2 >  L^K\right) + o(n^{-\delta})\\
    \leq & \sum_{j=1}^J \Pr\left(\frac{1}{\lfloor n^{\lambda - \tilde{\lambda}}\rfloor}\sum_{i=1}^{\lfloor n^{\lambda - \tilde{\lambda}}\rfloor}X_{ijz}^2 > L^K \right) + o(n^{-\delta})\\
    \leq& \sum_{j=1}^J \Pr\left(\left\vert \frac{1}{\lfloor n^{\lambda - \tilde{\lambda}}\rfloor}\sum_{i=1}^{\lfloor n^{\lambda - \tilde{\lambda}}\rfloor}X_{ijz}^2 - \E X_{jz}^2\right\vert > L^K - \E X_{jz}^2\right)
    + o(n^{-\delta}).
\end{align*}
We can then again apply Lemma B.5 in \citet{bonhomme2015grouped} where I take $T\equiv \lfloor n^{\tilde{\lambda}}\rfloor$, $z_t \equiv X_{ijz}^2 - \E X_{jz}^2$, and $z \equiv  L^K - \E X_{jz}^2$. Note that  $L^K - \E X_{jz}^2>0$ is implied by Assumption \ref{assumption:Pn_setup} (b). The lemma applies by Assumption \ref{assumption:Pn_setup} (b) and Assumption \ref{assumption:asymptotics_setup} (d). As a consequence, for any $\delta >0$\begin{align}\label{eq:proof_opdelta_5}
    \Pr\left(\frac{1}{\sum_{i=1}^n\mathbbm{1}_z(Z_i)}\sum_{i=1}^n\mathbbm{1}_z(Z_i)\|X_i\|^2 > JL^K\right) = o(n^{-\delta}).
\end{align}

Finally, not that for the fourth term in \eqref{eq:proof_opdelta_Mtilde} we simply have $\Pr\left( \|m_0 - m_K\|_\infty > L^K\right) = 0$ by definition of $L^K$. Combining with \eqref{eq:proof_opdelta_Mtilde}, \eqref{eq:proof_opdelta_3}, \eqref{eq:proof_opdelta_4}, and  \eqref{eq:proof_opdelta_5} shows \eqref{eq:final_eq_proof_g} and thus completes the proof.

\end{proof}

Next, I show that $\hat{m}_K$ converges at exponential rate to the infeasible least squares estimator of $D - X^\top \hat{\pi}$ on the set of indicators $\{\mathbbm{1}_k(\gamma^K(Z)\}_{k=1}^K$. In particular, for $K\in \mathbbm{N}$, define 
\begin{align}\label{eq:definition_alpha_tilde}
    \tilde{\alpha}^K \equiv \argmin_{\alpha \in \mathcal{M}^K } \: \En(D - X^\top \hat{\pi} - \alpha_{\gamma^K(Z)} )^2, 
\end{align}
with $\hat{\pi}$ being a consistent first-step estimator for $\theta_0$.

\begin{lemma}\label{lemma:rate_hat_tilde}
    Let the assumptions of Theorem \ref{theorem:CIV_w_covariates_mk} hold. Then, for all $K\in\{2, \ldots, K_0\}$ and $\delta > 0$, \begin{align*}
        \En (\hat{\alpha}^K_{\hat{\gamma}^K(Z)} - \tilde{\alpha}^K_{\gamma^K(Z)})^2 = o_p(n^{-\delta}).
    \end{align*} 
\end{lemma}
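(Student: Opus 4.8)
The plan is to represent both $\hat{\alpha}^K$ and $\tilde{\alpha}^K$ as (projected) within-cluster means of $S \equiv D - X^\top\hat{\pi}$ and to control the discrepancy between the partitions $\hat{\gamma}^K$ and $\gamma^K$ via Lemma \ref{lemma:rate_g}. First, $\En S^2 = O_p(1)$: writing $S = m_0(Z) + V + X^\top(\pi_0 - \hat{\pi})$, each of $\En m_0(Z)^2$, $\En V^2$, $\En\|X\|^2$ is $O_p(1)$ under assumptions \ref{assumption:Z0_setup}, \ref{assumption:Pn_setup}, and \ref{assumption:asymptotics_setup}(b), while $\|\hat{\pi}-\pi_0\| = O_p(n^{-1/2})$ by \ref{assumption:asymptotics_setup}(c). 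Second, by Lemma \ref{lemma:consistent_alpha} (under the relabeling convention adopted thereafter) together with \ref{assumption:asymptotics_setup}(c), the pair $(\hat{\alpha}^K, \hat{\pi})$ lies in $\mathcal{N}_{(\alpha^K, \pi^0)}(\eta)$ with probability approaching one for any fixed small $\eta>0$; since $\hat{\gamma}^K(\cdot)$ is the pointwise best response to $(\hat{\alpha}^K,\hat{\pi})$, Lemma \ref{lemma:rate_g} yields $\En\mathbbm{1}\{\hat{\gamma}^K(Z)\neq\gamma^K(Z)\} = o_p(n^{-\delta})$ for every $\delta>0$.

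I then split according to whether $\hat{\gamma}^K(Z)=\gamma^K(Z)$:
\begin{align*}
    \En(\hat{\alpha}^K_{\hat{\gamma}^K(Z)} - \tilde{\alpha}^K_{\gamma^K(Z)})^2 \leq \max_{1\leq k\leq K}(\hat{\alpha}^K_k - \tilde{\alpha}^K_k)^2 + C\,\En\mathbbm{1}\{\hat{\gamma}^K(Z)\neq\gamma^K(Z)\},
\end{align*}
where $C<\infty$ bounds $(\alpha - \tilde{\alpha})^2$ over $\alpha,\tilde{\alpha}\in\mathcal{M}$ (finite by compactness of $\mathcal{M}$). The second term is $o_p(n^{-\delta})$ by the previous step, so it remains to show $\max_k(\hat{\alpha}^K_k - \tilde{\alpha}^K_k)^2 = o_p(n^{-\delta})$.

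For the coefficients, note that the $k$-th coordinate of the minimizer of the separable criteria in \eqref{eq:definition_alphagamma_hat} and \eqref{eq:definition_alpha_tilde} equals the Euclidean projection $\Pi_\mathcal{M}$ onto $\mathcal{M}$ (which, taking $\mathcal{M}$ to be an interval, is non-expansive) of the corresponding within-cluster mean of $S$. With $\hat{N}_k = \En\mathbbm{1}\{\hat{\gamma}^K(Z)=k\}$, $N_k = \En\mathbbm{1}\{\gamma^K(Z)=k\}$, $\hat{T}_k = \En\mathbbm{1}\{\hat{\gamma}^K(Z)=k\}S$, $T_k = \En\mathbbm{1}\{\gamma^K(Z)=k\}S$, this gives $\hat{\alpha}^K_k = \Pi_\mathcal{M}(\hat{T}_k/\hat{N}_k)$, $\tilde{\alpha}^K_k = \Pi_\mathcal{M}(T_k/N_k)$, hence
\begin{align*}
    |\hat{\alpha}^K_k - \tilde{\alpha}^K_k| \leq \frac{|\hat{T}_k - T_k|}{\hat{N}_k} + \frac{|T_k|}{N_k}\cdot\frac{|\hat{N}_k - N_k|}{\hat{N}_k}.
\end{align*}
Now $N_k\to\Pr(\gamma^K(Z)=k)\geq\underline{p}>0$ by Lemma \ref{lemma:mK_characteristics}(a); since $|\hat{N}_k - N_k|\leq\En\mathbbm{1}\{\hat{\gamma}^K(Z)\neq\gamma^K(Z)\} = o_p(n^{-\delta})$, also $\hat{N}_k\geq\underline{p}/2$ with probability approaching one (so every cluster is nonempty); $|T_k|/N_k = |\tilde{\alpha}^K_k| \leq \sup_{\alpha\in\mathcal{M}}|\alpha|<\infty$; and, by Cauchy--Schwarz, $|\hat{T}_k - T_k| \leq (\En\mathbbm{1}\{\hat{\gamma}^K(Z)\neq\gamma^K(Z)\})^{1/2}(\En S^2)^{1/2}$, which is $o_p(n^{-\delta})$ for every $\delta>0$ because $\En S^2 = O_p(1)$ and a sequence that is $o_p(n^{-\delta/2})$ for all $\delta>0$ is also $o_p(n^{-\delta})$ for all $\delta>0$. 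Combining, $|\hat{\alpha}^K_k - \tilde{\alpha}^K_k| = o_p(n^{-\delta})$ for every $k$ and $\delta$, and since $K$ is finite this yields $\max_k(\hat{\alpha}^K_k - \tilde{\alpha}^K_k)^2 = o_p(n^{-\delta})$, completing the proof.

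The step I expect to be most delicate is the interlocking dependence of $\hat{\alpha}^K$ on $\hat{\gamma}^K$ and vice versa: the within-cluster-mean representation and the misclassification bound each presuppose the other. This is resolved sequentially --- Lemma \ref{lemma:consistent_alpha} first places $\hat{\alpha}^K$ in a fixed neighborhood, which activates Lemma \ref{lemma:rate_g}, after which the coefficients can be compared directly --- so no circularity arises. A minor technical wrinkle is the possibility that an unconstrained within-cluster mean falls outside $\mathcal{M}$, handled by non-expansiveness of $\Pi_\mathcal{M}$.
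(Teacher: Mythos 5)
Your proof is correct, but it reaches the key intermediate bound by a different mechanism than the paper. The paper never writes the coefficients in closed form: it compares the objective values $\Bar{Q}(\alpha,\pi)=\En(D-X^\top\pi-\alpha_{\gamma^K(Z)})^2$ and $\hat{Q}(\alpha,\pi)=\En(D-X^\top\pi-\alpha_{\hat{\gamma}^K(Z;\alpha,\pi)})^2$ uniformly over the neighborhood $\mathcal{N}_{(\alpha^K,\pi^0)}(\eta)$ (using Lemma \ref{lemma:rate_g} plus Cauchy--Schwarz), sandwiches $\Bar{Q}(\hat{\alpha}^K,\hat{\pi})-\Bar{Q}(\tilde{\alpha}^K,\hat{\pi})$ between $0$ and $o_p(n^{-\delta})$ via the two minimization properties, and then converts this objective gap into $\En(\hat{\alpha}^K_{\gamma^K(Z)}-\tilde{\alpha}^K_{\gamma^K(Z)})^2$ by the exact least-squares orthogonality of $\tilde{\alpha}^K$; the final split over $\{\hat{\gamma}^K(Z)\neq\gamma^K(Z)\}$ is the same in both arguments. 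You instead bound $\max_k\vert\hat{\alpha}^K_k-\tilde{\alpha}^K_k\vert$ directly through the within-cluster-mean representation, controlling numerators and denominators by the misclassification rate, $\En S^2=O_p(1)$, and the lower bound on cluster frequencies from Lemma \ref{lemma:mK_characteristics}(a). Your route is more elementary and delivers a slightly stronger coordinate-wise statement, but it needs the cluster counts bounded away from zero and convexity of $\mathcal{M}$ for non-expansiveness of the projection (you flag this; note the paper's own proof likewise implicitly treats $\tilde{\alpha}^K_k$ as the unconstrained within-cluster average when invoking orthogonality), whereas the paper's objective-sandwich argument never touches denominators or projections. Two small blemishes in your write-up, neither fatal: the identity $\vert T_k\vert/N_k=\vert\tilde{\alpha}^K_k\vert$ only holds when the constraint is slack, but $\vert T_k\vert/N_k\leq(\En S^2)^{1/2}N_k^{-1/2}=O_p(1)$ by Cauchy--Schwarz suffices; and since the law $P_n$ drifts with $n$, the claim ``$N_k\to\Pr(\gamma^K(Z)=k)$'' should be stated as $N_k\geq\underline{p}-o_p(1)$, which is all you use.
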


\begin{proof}

Fix an arbitrary $K\in \{2, \ldots, K_0\}$. Define \begin{align*}
    \Bar{Q}(\alpha, \pi) \equiv \En (D - X^\top \pi - \alpha_{\gamma^K(Z)})^2, \qquad \text{and} \qquad \hat{Q}(\alpha, \pi) \equiv \En (D - X^\top \pi - \alpha_{\hat{\gamma}^K(Z; \alpha, \pi)})^2.
\end{align*}

For $\eta$ satisfying the condition of Lemma \ref{lemma:rate_g}, it holds for any $\delta >0$ that \begin{align}\label{eq:uniformconvergence_new}
\begin{aligned}
    &\sup_{(\alpha, \pi) \in \mathcal{N}_{(\alpha^K, \pi^0)}(\eta)} \: \left\vert  \Bar{Q}(\alpha, \pi) - \hat{Q}(\alpha, \pi)  \right\vert \\
    =&\sup_{(\alpha, \pi) \in \mathcal{N}_{(\alpha^K, \pi^0)}(\eta)} \: \bigg\vert \En \mathbbm{1}\{\hat{\gamma}^K(Z; \alpha, \pi) \not= \gamma^K(Z)\} (D - X^\top \pi - \alpha_{\hat{\gamma}^K(Z; \alpha, \pi)})^2 \\
    &\qquad + \En \mathbbm{1}\{\hat{\gamma}^K(Z; \alpha, \pi) = \gamma^K(Z)\} (D - X^\top \pi - \alpha_{\gamma^K(Z)})^2 - \En (D - X^\top \pi - \alpha_{\gamma^K(Z)})^2\bigg\vert
    \\
    =&\sup_{(\alpha, \pi) \in \mathcal{N}_{(\alpha^K, \pi^0)}(\eta)} \: \bigg\vert \En \mathbbm{1}\{\hat{\gamma}^K(Z; \alpha, \pi) \neq \gamma^K(Z)\} \\
    &\qquad \times \left[(D - X^\top \pi - \alpha_{\hat{\gamma}^K(Z; \alpha, \pi)})^2 - (D - X^\top \pi - \alpha_{\gamma^K(Z)})^2\right] \bigg\vert\\
    \leq&\sup_{(\alpha, \pi) \in \mathcal{N}_{(\alpha^K, \pi^0)}(\eta)} \:  \En \mathbbm{1}\{\hat{\gamma}^K(Z; \alpha, \pi) \neq \gamma^K(Z)\} \\
    &\qquad \times \left\vert\left[(D - X^\top \pi - \alpha_{\hat{\gamma}^K(Z; \alpha, \pi)})^2 - (D - X^\top \pi - \alpha_{\gamma^K(Z)})^2\right] \right\vert\\
    \overset{[1]}{\leq}&\sup_{(\alpha, \pi) \in \mathcal{N}_{(\alpha^K, \pi^0)}(\eta)} \:  \left(\En \mathbbm{1}\{\hat{\gamma}^K(Z; \alpha, \pi) \neq \gamma^K(Z)\}\right)^{1/2} \\
    &\qquad \times \left(\En \left\vert\left[(D - X^\top \pi - \alpha_{\hat{\gamma}^K(Z; \alpha, \pi)})^2 - (D - X^\top \pi - \alpha_{\gamma^K(Z)})^2\right] \right\vert^2\right)^{1/2}\\
    \leq&  \left(\sup_{(\alpha, \pi) \in \mathcal{N}_{(\alpha^K, \pi^0)}(\eta)} \: \En \mathbbm{1}\{\hat{\gamma}^K(Z; \alpha, \pi) \neq \gamma^K(Z)\}\right)^{1/2} \\
    &\qquad \times \left(\sup_{(\alpha, \pi) \in \mathcal{N}_{(\alpha^K, \pi^0)}(\eta)} \: \En \left\vert 2(D - X^\top \pi)(\alpha_{\gamma^K(Z)} - \alpha_{\hat{\gamma}^K(Z; \alpha, \pi)}) + \alpha_{\hat{\gamma}^K(Z; \alpha, \pi)}^2 - \alpha_{\gamma^K(Z)}^2 \right\vert^2\right)^{1/2}\\
    \overset{[2]}{=}& o_p(n^{-\delta}),
\end{aligned}
\end{align}
where {\footnotesize [1]} follows from Cauchy-Schwarz, and {\footnotesize [2]} follows from Lemma \ref{lemma:rate_g} and the fact that the second term is $O_p(1)$ under Assumption \ref{assumption:Pn_setup} and Assumption \ref{assumption:asymptotics_setup} (b).

Now, for both $\hat{\alpha}^K$ and the infeasible least squares coefficients $\tilde{\alpha}^K$, it holds for any $\delta>0$ that \begin{align}\label{eq:rate_Q}
    \Bar{Q}(\hat{\alpha}^K, \hat{\pi}) - \hat{Q}(\hat{\alpha}^K, \hat{\pi}) = o_p(n^{-\delta}),\qquad
    \Bar{Q}(\tilde{\alpha}^K, \hat{\pi}) - \hat{Q}(\tilde{\alpha}^K, \hat{\pi}) = o_p(n^{-\delta}).
\end{align}
To see this, fix $\varepsilon >0$ and consider \begin{align*}
    &\Pr\left(\left\vert\Bar{Q}(\hat{\alpha}^K, \hat{\pi}) - \hat{Q}(\hat{\alpha}^K, \hat{\pi})\right\vert>\varepsilon n^{-\delta}\right)\\
\overset{[1]}{=}&\Pr\left(\left\vert\Bar{Q}(\hat{\alpha}^K, \hat{\pi}) - \hat{Q}(\hat{\alpha}^K, \hat{\pi})\right\vert>\varepsilon n^{-\delta}\big\vert (\hat{\alpha}^K,\hat{\pi}) \in \mathcal{N}_{(\alpha^K, \pi^0)}(\eta)\right)\Pr\left((\hat{\alpha}^K,\hat{\pi}) \in \mathcal{N}_{(\alpha^K, \pi^0)}(\eta)\right) \\
&\qquad + \Pr\left(\left\vert\Bar{Q}(\hat{\alpha}^K,\hat{\pi}) - \hat{Q}(\hat{\alpha}^K,\hat{\pi})\right\vert>\varepsilon n^{-\delta}\big\vert (\hat{\alpha}^K,\hat{\pi}) \not\in \mathcal{N}_{(\alpha^K, \pi^0)}(\eta)\right)\Pr\left((\hat{\alpha}^K,\hat{\pi}) \not\in \mathcal{N}_{(\alpha^K, \pi^0)}(\eta)\right)\\
\overset{[2]}{\leq}&\Pr\left(\left\vert\Bar{Q}(\hat{\alpha}^K,\hat{\pi}) - \hat{Q}(\hat{\alpha}^K,\hat{\pi})\right\vert>\varepsilon n^{-\delta}\big\vert (\hat{\alpha}^K,\hat{\pi}) \in \mathcal{N}_{(\alpha^K, \pi^0)}(\eta)\right) + \Pr\left((\hat{\alpha}^K,\hat{\pi}) \not\in \mathcal{N}_{(\alpha^K, \pi^0)}(\eta)\right)\\
\overset{[3]}{\leq} & \Pr\left(\sup_{(\alpha, \pi) \in \mathcal{N}_{(\alpha^K, \pi^0)}(\eta)}\left\vert\Bar{Q}(\alpha, \pi) - \hat{Q}(\alpha, \pi)\right\vert>\varepsilon n^{-\delta} \right) + o(1)\\
\overset{[4]}{=}&o(1),
\end{align*}
where {\footnotesize [1]} follows from the law of total probability, {\footnotesize [2]} follows from probabilities being bounded by one, {\footnotesize [3]} follows from consistency of $\hat{\alpha}^K$ by Lemma \ref{lemma:consistent_alpha} and consistency of $\hat{\pi}$ by Assumption \ref{assumption:asymptotics_setup} (c), and {\footnotesize [4]} follows from \eqref{eq:uniformconvergence_new}. The arguments for the infeasible least squares coefficients are analogous. 

As a consequence, \begin{align}\label{eq:diff_in_Q}
    0 \leq& \Bar{Q}(\hat{\alpha}^K,\hat{\pi}) - \Bar{Q}(\tilde{\alpha}^K,\hat{\pi}) = \hat{Q}(\hat{\alpha}^K,\hat{\pi}) - \hat{Q}(\tilde{\alpha}^K,\hat{\pi}) + o_p(n^{-\delta}) 
    \leq o_p(n^{-\delta}),
\end{align}
where the inequalities follow from the definition of $\hat{\alpha}^K$ and $\tilde{\alpha}^K$ (minimizing $\hat{Q}$ and $\Bar{Q}$, respectively), and the equality follows from \eqref{eq:rate_Q}.

Note further that\begin{align}\label{eq:diff_in_alpha}
\begin{aligned}
   &\Bar{Q}(\hat{\alpha}^K, \hat{\pi}) - \Bar{Q}(\tilde{\alpha}^K, \hat{\pi}) \\
   =& \En\left(\hat{\alpha}^K_{\gamma^K(Z)} - \tilde{\alpha}^K_{\gamma^K(Z)}\right)^2 + 2\En \left(D - X^\top \hat{\pi} - \tilde{\alpha}^K_{\gamma^K(Z)}\right)\left(\hat{\alpha}^K_{\gamma^K(Z)} - \tilde{\alpha}^K_{\gamma^K(Z)}\right)\\
   =&\En\left(\hat{\alpha}^K_{\gamma^K(Z)} - \tilde{\alpha}^K_{\gamma^K(Z)}\right)^2,
\end{aligned}
\end{align}
where the equality follows from \begin{align*}
    &\En \left(D - X^\top \hat{\pi} - \tilde{\alpha}^K_{\gamma^K(Z)}\right)\left(\hat{\alpha}^K_{\gamma^K(Z)} - \tilde{\alpha}^K_{\gamma^K(Z)}\right)\\
    =&\En\sum_{k=1}^{K}\mathbbm{1}\{\gamma^K(Z) = k\}\left(D - X^\top \hat{\pi} - \tilde{\alpha}^K_{k}\right)\left(\hat{\alpha}^K_{k} - \tilde{\alpha}^K_{k}\right)\\
	=&\sum_{k=1}^{K}\left(\hat{\alpha}^K_{k} - \tilde{\alpha}^K_{k}\right)\En\mathbbm{1}\{\gamma^K(Z) = k\}\left(D - X^\top \hat{\pi} - \tilde{\alpha}^K_{k}\right)\\
	=&\sum_{k=1}^{K}\left(\hat{\alpha}^K_{k} - \tilde{\alpha}^K_{k}\right)\times 0
\end{align*}
because the infeasible least squares coefficients $(\tilde{\alpha}^K_k)_{k=1}^{K}$ correspond to the sample average of $D - X^\top\hat{\pi}$ with $\gamma^K(Z) = k$. Combining \eqref{eq:diff_in_Q} and \eqref{eq:diff_in_alpha} then implies \begin{align}\label{eq:diff_alpha_rate}
    \En\left(\hat{\alpha}^K_{\gamma^K(Z)} - \tilde{\alpha}^K_{\gamma^K(Z)}\right)^2 = o_p(n^{-\delta}).
\end{align}

Now, for any $\delta >0$,\begin{align}\label{eq:diff_m_rate}
\begin{aligned}
&\En\left(\hat{\alpha}^K_{\hat{\gamma}^K(Z)} - \tilde{\alpha}^K_{\gamma^K(Z)}\right)^2 \\
     =& \En\mathbbm{1}\{\hat{\gamma}^K(Z; \hat{\alpha}^K, \hat{\pi}) \not= \gamma^K(Z)\}\left(\hat{\alpha}^K_{\hat{\gamma}^K(Z; \alpha, \pi)} - \tilde{\alpha}^K_{\gamma^K(Z)}\right)^2 \\
     &\quad + \En\mathbbm{1}\{\hat{\gamma}^K(Z; \hat{\alpha}^K, \hat{\pi}) = \gamma^K(Z)\}\left(\hat{\alpha}^K_{\gamma^K(Z)} - \tilde{\alpha}^K_{\gamma^K(Z)}\right)^2\\
     \overset{[1]}{\leq}& \left(\En \mathbbm{1}\{\hat{\gamma}^K(Z; \hat{\alpha}^K, \hat{\pi}) \not= \gamma^K(Z)\}\right)^{1/2}\left(\En \left(\hat{\alpha}^K_{\hat{\gamma}^K(Z; \hat{\alpha}^K, \hat{\pi})} - \tilde{\alpha}^K_{\gamma^K(Z)}\right)^4\right)^{1/2} + o_p(n^{-\delta})
\end{aligned}
\end{align}
where {\footnotesize [1]} follows from Cauchy-Schwarz and \eqref{eq:diff_alpha_rate}. By Assumption \ref{assumption:asymptotics_setup} (b), $\En\left(\hat{\alpha}^K_{\hat{\gamma}^K(Z; \hat{\alpha}^K, \hat{\pi})} - \tilde{\alpha}^K_{\gamma^K(Z)}\right)^4 = O_p(1)$. Finally, taking $\eta$ satisfying the condition of Lemma \ref{lemma:rate_g} and fixing $\varepsilon>0$, it holds that\begin{align*}
    &\Pr\left(\En \mathbbm{1}\{\hat{\gamma}^K(Z; \hat{\alpha}^K, \hat{\pi}) \not= \gamma^K(Z)\}>\varepsilon n^{-\delta}\right)\\
    \overset{[1]}{\leq}&\Pr\left(\En \mathbbm{1}\{\hat{\gamma}^K(Z; \hat{\alpha}^K, \hat{\pi}) \not= \gamma^K(Z)\}>\varepsilon n^{-\delta}\vert (\hat{\alpha}^K,\hat{\pi}) \in \mathcal{N}_{(\alpha^K, \pi^0)}(\eta)\right) + \Pr\left((\hat{\alpha}^K,\hat{\pi}) \not\in \mathcal{N}_{(\alpha^K, \pi^0)}(\eta)\right)\\
    \overset{[2]}{\leq} & \Pr\left(\sup_{(\alpha, \pi) \in \mathcal{N}_{(\alpha^K, \pi^0)}(\eta)} \En \mathbbm{1}\{\hat{\gamma}^K(Z; \alpha, \pi) \not= \gamma^K(Z)\}\right) + o(1) \\
    \overset{[3]}{\leq}& o(1),
\end{align*} 
where {\footnotesize [1]} follows from the law of total probability and bounding probabilities by one, {\footnotesize [2]} follows from Lemma \ref{lemma:consistent_alpha} and Assumption \ref{assumption:asymptotics_setup} (c), and {\footnotesize [3]} follows from Lemma \ref{lemma:rate_g}. Combining with \eqref{eq:diff_m_rate} then completes the proof.

\end{proof}


\subsection{Asymptotic Distribution of $\hat{\theta}_K$}
It is now possible to proof Theorem \ref{theorem:CIV_w_covariates_mk}. As before, fix an arbitrary $K \in \{2, \ldots, K_0 \}$.

Note that \begin{align}\label{eq:diff_F}
    \hat{F}_K - F_K = \left(\Delta_n(Z, X),\: \mathbf{0}_J^\top\right)^\top,
\end{align}
where \begin{align*}
    \Delta_n(Z, X) \equiv \hat{m}_K(Z) - m_K(Z) + X^\top (\hat{\pi} - \pi_0)
\end{align*}
so that it suffices to focus on the first component of the difference $\hat{F}_K - F_K$.

By Assumption \ref{assumption:iv_setup} (a), we have \begin{align}\label{eq:Gn_term}
    \sqrt{n}\left(\left(\En\hat{F}_K W^\top\right)^{-1} \En \hat{F}_K Y - \theta_0\right) = \left(\En\hat{F}_K W^\top\right)^{-1} \sqrt{n}\En \hat{F}_K U.
\end{align}

For the first component of the first term, note that \begin{align}\label{eq:Deltan_W_op1}
\begin{aligned}
\left\|\En\Delta_n(Z, X)W^\top\right\| & \overset{[1]}{\leq} \left\|\En(\hat{m}_K(Z) - m_K(Z))W^\top\right\| + \left\|(\hat{\pi} - \pi_0)^\top \En XW^\top\right\|\\
    & \overset{[2]}{\leq} (\En (\hat{m}_K(Z) - m_K(Z))^2)^{1/2}(\En \|W\|^2)^{1/2} +\sum_{j=1}^J\left\vert\hat{\pi}_j - \pi_{j0}\right\vert \left\|\En X_jW\right\|\\
    & \overset{[3]}{=} O_p(1)\left(\En (\hat{m}_K(Z) - m_K(Z))^2)^{1/2} + \sum_{j=1}^J\left\vert\hat{\pi}_j - \pi_{j0}\right\vert\right)\\
    & \overset{[4]}{=} o_p(1),
\end{aligned}
\end{align}
where {\footnotesize [1]} follows from the triangle inequality, {\footnotesize [2]} follows from Cauchy-Schwarz, {\footnotesize [3]} follows from Assumption \ref{assumption:Pn_setup} and Assumption \ref{assumption:asymptotics_setup} (b), and {\footnotesize [4]} follows from Lemma \ref{lemma:consistency_m_covariates} and Assumption \ref{assumption:asymptotics_setup} (c). Hence, \begin{align*}
    \En\hat{F}_K W^\top = \En F_KW^\top + o_p(1) = \E F_KW^\top + o_p(1) 
\end{align*}
and consequently \begin{align}\label{eq:FKW_inv_consistency}
    \left(\En\hat{F}_K W^\top\right)^{-1} = \left( \E F_KW^\top\right)^{-1}+ o_p(1)
\end{align}
by the assumption of non-singularity of $\E F_KW^\top$ (as assumed in the statement of the theorem).

Let $\Tilde{m}_K(Z) \equiv \Tilde{\alpha}^K_{\gamma^K(Z)}$ be the infeasible least squares estimator defined in \eqref{eq:definition_alpha_tilde}. For the first component of the second term in \eqref{eq:Gn_term}, note that \begin{align*}
    \left\vert\sqrt{n}\En\Delta_n(Z, X)U\right\vert & \overset{[1]}{\leq} \left\vert \sqrt{n}\En (\hat{m}_K(Z) - \Tilde{m}_K(Z))U \right\vert +\left\vert \sum_{k=1}^{K} (\Tilde{\alpha}^K_k - \alpha^K_k)\Gn \mathbbm{1}_k(\gamma^K(Z))U\right\vert \\
    & \overset{[2]}{=}  \left\vert\sqrt{n}\En(\hat{m}_K(Z) - \Tilde{m}_K(Z))U \right\vert + \left(\max_{k \in \{1, \ldots, K\}} \vert \Tilde{\alpha}^K_k - \alpha^K_k\vert \right) \left\vert \sum_{k=1}^K \Gn \mathbbm{1}_k(\gamma^K(Z))U\right\vert\\
    & \overset{[3]}{=}  \left\vert\sqrt{n}\En(\hat{m}_K(Z) - \Tilde{m}_K(Z))U \right\vert + \left(\max_{k \in \{1, \ldots, K\}} \vert \Tilde{\alpha}^K_k - \alpha^K_k\vert \right)  O_p(1)\\
    & \overset{[4]}{=}  \left\vert\sqrt{n}\En (\hat{m}_K(Z) - \Tilde{m}_K(Z))U \right\vert + o_p(1)\\
    & \overset{[5]}{\leq}  (n\En (\hat{m}_K(Z) - \Tilde{m}_K(Z))^2)^{1/2} (\En U^2)^{1/2} + o_p(1) \\
    & \overset{[6]}{=} o_p(1),
\end{align*}
where {\footnotesize [1]} follows from the triangle inequality, {\footnotesize [2]} follows from Hölder's inequality, {\footnotesize [3]} follows from $\sum_{k=1}^K \Gn \mathbbm{1}_k(\gamma^K(Z))U = \Gn U$ and application of the central limit theorem, {\footnotesize [4]} follows from consistency of the infeasible least squares estimator, {\footnotesize [5]} follows from Cauchy-Schwarz, and {\footnotesize [6]} follows from Lemma \ref{lemma:rate_hat_tilde} and Assumption \ref{assumption:asymptotics_setup} (a). Hence, \begin{align*}
    \sqrt{n}\En \hat{F}_K U = \Gn F_K U + o_p(1).
\end{align*}

Combining, we have by Slutsky and the central limit theorem \begin{align}\label{eq:dist_convergence}
    \left(\En\hat{F}_K W^\top\right)^{-1} \sqrt{n}\En \hat{F}_K U = \left(\En F_K W^\top\right)^{-1}\Gn F_K U + o_p(1) \overset{d}{\to} N(0, \Sigma_K),
\end{align}
where $\Sigma_K$ is given in Theorem \ref{theorem:CIV_w_covariates_mk}. Pre-multiplication with $\Sigma_K^{-1/2}$ then gives the first desired result.

The proof of Theorem \ref{theorem:CIV_w_covariates_mk} concludes with Lemma \ref{lemma:SimgaK_consistency} which states that $\hat{\Sigma}_K$ is a consistent estimator for the covariance matrix $\Sigma_K$.

\subsection{Consistency of $\hat{\Sigma}_K$}

\begin{lemma}\label{lemma:SimgaK_consistency}
    Let the assumptions of Theorem \ref{theorem:CIV_w_covariates_mk} hold. Then, $\forall K \in \{2, \ldots, K_0 \}$, \begin{align*}
        \hat{\Sigma}_K = \Sigma_K + o_p(1).
    \end{align*}
\end{lemma}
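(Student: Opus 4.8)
The plan is to show that each of the three factors in $\hat\Sigma_K$ converges to its population analogue and then to combine them by Slutsky's theorem, the key structural fact being that the three limiting matrices $(\E F_KW^\top)^{-1}$, $\E[U^2F_KF_K^\top]$, and $(\E WF_K^\top)^{-1}$ are bounded uniformly in $n$ (the outer two by the non-singularity hypothesis of Theorem \ref{theorem:CIV_w_covariates_mk}, already exploited in \eqref{eq:FKW_inv_consistency}, and the middle one by the moment bounds discussed below). The two outer factors are immediate: \eqref{eq:FKW_inv_consistency} gives $\big(\En\hat F_KW^\top\big)^{-1} = \big(\E F_KW^\top\big)^{-1} + o_p(1)$, and the identical argument applied to the transpose gives $\big(\En W\hat F_K^\top\big)^{-1} = \big(\E WF_K^\top\big)^{-1} + o_p(1)$. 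It therefore remains to establish the ``meat'' convergence $\En[\hat U^2\hat F_K\hat F_K^\top] = \E[U^2F_KF_K^\top] + o_p(1)$, after which multiplying out the three factors and discarding products of $o_p(1)$ terms against the $O_p(1)$ matrices delivers $\hat\Sigma_K = \Sigma_K + o_p(1)$.

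For the meat I would decompose
\begin{align*}
  \En[\hat U^2\hat F_K\hat F_K^\top] - \E[U^2F_KF_K^\top]
   = \underbrace{\En\big[\hat U^2(\hat F_K\hat F_K^\top - F_KF_K^\top)\big]}_{(I)}
   + \underbrace{\En\big[(\hat U^2 - U^2)F_KF_K^\top\big]}_{(II)}
   + \underbrace{\En[U^2F_KF_K^\top] - \E[U^2F_KF_K^\top]}_{(III)} .
\end{align*}
By \eqref{eq:diff_F}, only the first component of $\hat F_K - F_K$ is nonzero and equals $\Delta_n(Z,X)$, so $\hat F_K\hat F_K^\top - F_KF_K^\top$ is $\Delta_n$ times a matrix built from $\hat F_K$ and $F_K$; a Cauchy--Schwarz split that assigns the large exponents to the $F_K$/$\hat F_K$ side, together with $\En\Delta_n^2 = o_p(1)$ (from Lemma \ref{lemma:consistency_m_covariates}, Lemma \ref{lemma:rate_hat_tilde}, and Assumption \ref{assumption:asymptotics_setup} (c), exactly as in the derivation of \eqref{eq:Deltan_W_op1}) and $\En\hat U^4 = O_p(1)$, gives $(I) = o_p(1)$. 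For $(II)$, write $\hat U^2 - U^2 = -2UW^\top(\hat\theta^K-\theta_0) + (W^\top(\hat\theta^K-\theta_0))^2$; since $\hat\theta^K - \theta_0 = O_p(n^{-1/2})$ by the distributional result \eqref{eq:dist_convergence}, bounding $\|(II)\| \le 2\|\hat\theta^K-\theta_0\|\,\En[|U|\,\|W\|\,\|F_K\|^2] + \|\hat\theta^K-\theta_0\|^2\,\En[\|W\|^2\|F_K\|^2]$ and using that both empirical averages are $O_p(1)$ yields $(II) = o_p(1)$. Term $(III)$ is a weak law of large numbers for the triangular array $\{U_i^2F_{Ki}F_{Ki}^\top\}$: with a uniform-in-$n$ $L^{1+\delta}$ bound on $U^2\|F_K\|^2$ (or a uniform second-moment bound, if the moment budget allows) one gets $(III) = o_p(1)$ by Chebyshev/truncation for triangular arrays.

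The whole argument thus collapses onto a single moment-bookkeeping step, which I expect to be the only delicate part. The exponential-tail Assumption \ref{assumption:Pn_setup} and the compactness in Assumption \ref{assumption:asymptotics_setup} (b) imply that $V$ and the components of $X$, hence $D = Z^{(0)} + X^\top\pi_0 + V$, hence $W$, $F_K$ and $\hat F_K$ (the last differing from $F_K$ only through $\Delta_n$, which is bounded plus $X^\top(\hat\pi-\pi_0)$) all have polynomial moments of every order, bounded uniformly in $n$; meanwhile $U$ has the moments furnished by Assumption \ref{assumption:asymptotics_setup} (a), and $\hat U = U - W^\top(\hat\theta^K-\theta_0)$ then has $\En\hat U^4 = O_p(1)$. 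Applying H\"older's inequality so that the scarce-moment factor $U$ is always paired with high-moment factors of $W,X,F_K$ gives all the cross-moment bounds needed above (e.g.\ $\En[|U|\|W\|\|F_K\|^2] = O_p(1)$, using $\|F_K\|^2 \lesssim 1 + \|X\|^2$). The main obstacle is keeping this accounting straight -- in particular, choosing each H\"older/Cauchy--Schwarz split so that the $o_p(1)$ factors coming from Lemmas \ref{lemma:consistency_m_covariates} and \ref{lemma:rate_hat_tilde} absorb the least-integrable pieces and the moment budget for $U$ is never exceeded in the terms that pit powers of $U$ against $\|X\|^2$.
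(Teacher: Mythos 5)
Your proposal is correct and follows essentially the paper's own strategy: reduce to consistency of the ``meat'' via \eqref{eq:FKW_inv_consistency} and its transpose, split the meat into an error-in-$\hat U^2$ piece, an error-in-$\hat F_K$ piece, and a law-of-large-numbers piece, and close everything with Cauchy--Schwarz/H\"older moment bookkeeping. The differences are in the bookkeeping: the paper routes the $\hat F_K$-error terms through the infeasible estimator $\tilde m_K$, invoking the $o_p(n^{-\delta})$ rate of Lemma \ref{lemma:rate_hat_tilde} together with maximal-term bounds such as $n^{-1/2}\max_{i\le n}\vert W_{ij}U_i\vert = o_p(1)$, whereas your splits need only $\En\Delta_n(Z,X)^2 = o_p(1)$ (Lemma \ref{lemma:consistency_m_covariates} plus Assumption \ref{assumption:asymptotics_setup}~(c)) and root-$n$ consistency of $\hat\theta^K$ from \eqref{eq:dist_convergence} -- a somewhat more elementary route, legitimate here because only consistency, not a rate, is required. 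The one point to make explicit is the direction of the Cauchy--Schwarz split in your term $(I)$: since Assumption \ref{assumption:asymptotics_setup}~(a) gives only $\E U^4 \le L$ and no joint moment such as $\E[U^4\|X\|^2]$, you cannot bound $(I)$ by $(\En \hat U^4 \|F_K\|^2)^{1/2}(\En\Delta_n^2)^{1/2}$; you must instead use $(\En\hat U^4)^{1/2}(\En\Delta_n^2\|F_K\|^2)^{1/2}$ (and its $\hat F_K$ analogue), with $\En\Delta_n^2\|F_K\|^2 = o_p(1)$ following from the compactness of $\mathcal{M}$ (so $\hat m_K - m_K$ is uniformly bounded), the exponential tails in Assumption \ref{assumption:Pn_setup}, and $\|\hat\pi-\pi_0\| = o_p(1)$ -- your closing remark about never exceeding the $U$-moment budget shows you see this, but the phrase ``assigns the large exponents to the $F_K/\hat F_K$ side'' should be resolved in this direction when writing the argument out.
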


\begin{proof}
    Fix an arbitrary $K \in \{2, \ldots, K_0 \}$. Given \eqref{eq:FKW_inv_consistency} and non-singularity of $\E U^2 F_K F_K^\top$ by Assumption \ref{assumption:iv_setup} (a) and relevance (as assumed in the statement of Theorem \ref{theorem:CIV_w_covariates_mk}), it suffices to show
\begin{align*}
    \En \hat{U}^2 \hat{F}_K\hat{F}_K^\top = \En U^2 F_K F_K^\top + o_p(1) = \E U^2 F_K F_K^\top + o_p(1) 
\end{align*}
to establish consistency of the covariance estimator $\hat{\Sigma}_K$. Note that by the triangle inequality, \begin{align}\label{eq:UFF_twoterms}
\begin{aligned}
    &\left\| \En \hat{U}^2 \hat{F}_K\hat{F}_K^\top - \En U^2 F_K F_K^\top\right\| \\
    \leq& \left\| \En \hat{U}^2 \hat{F}_K\hat{F}_K^\top - \En U^2 \hat{F}_K \hat{F}_K^\top\right\|  + \left\| \En U^2 \hat{F}_K\hat{F}_K^\top - \En U^2 F_K F_K^\top\right\|.
\end{aligned}
\end{align}
I begin by showing that both of the terms in \eqref{eq:UFF_twoterms} are $o_p(1)$.

For the first term in \eqref{eq:UFF_twoterms}, consider \begin{align}\label{eq:UFF_1}
    \left\| \En \hat{U}^2\hat{F}_K\hat{F}_K^\top - \En U^2\hat{F}_K\hat{F}_K^\top\right\| \leq \left\| \En\left((\hat{\theta}^K - \theta_0)^\top W\right)^2\hat{F}_K\hat{F}_K^\top \right\| + \left\|\En\left((\hat{\theta}^K - \theta_0)^\top W\right)U\hat{F}_K\hat{F}_K^\top \right\|,
\end{align}
by the triangle inequality. For the first term in \eqref{eq:UFF_1}, \begin{align*}
    \left\| \En\left((\hat{\theta}^K - \theta_0)^\top W\right)^2\hat{F}_K\hat{F}_K^\top \right\| &\leq \sum_{j=1}^{J + 1}   (\hat{\theta}^K_j - \theta_{0j})^2 \left\|\En W_j^2 \hat{F}_K\hat{F}_K^\top \right\|\\
    &\leq  \sum_{j=1}^{J + 1} \left(\sqrt{n}(\hat{\theta}^K_j - \theta_{0j})\right)^2  \left(\frac{1}{n}\max_{i\leq n} W_{ij}^2\right) \left\|\En  \hat{F}_K\hat{F}_K^\top \right\|\\
    &\overset{[1]}{=} O_p(1) \sum_{j=1}^{J + 1} \left(\frac{1}{n}\max_{i\leq n} W_{ij}^2\right)  \left\|\En \hat{F}_K\hat{F}_K^\top \right\|,
\end{align*}
where {\footnotesize [1]} follows from \eqref{eq:dist_convergence}. Similarly for the second term in \eqref{eq:UFF_1}, \begin{align*}
    \left\| \En\left((\hat{\theta}^K - \theta_0)^\top W\right)U\hat{F}_K\hat{F}_K^\top \right\| & \leq \sum_{j=1}^{J + 1} \left\vert\sqrt{n}(\hat{\theta}^K_j - \theta_{0j})\right\vert  \left(\frac{1}{\sqrt{n}}\max_{i\leq n} \vert W_{ij}U_i\vert\right) \left\| \En  \hat{F}_K\hat{F}_K^\top \right\|\\
    &=O_p(1)\sum_{j=1}^{J + 1} \left(\frac{1}{\sqrt{n}}\max_{i\leq n} \vert W_{ij}U_i\vert\right) \left\| \En \hat{F}_K\hat{F}_K^\top \right\|.
\end{align*}
To show \eqref{eq:UFF_1} is $o_p(1)$, it thus suffices to show that for all $j\in\{1, \ldots, J+1\}$\begin{align}\label{eq:Uhat_ops}
    \frac{1}{n}\max_{i\leq n} W_{ij}^2 = o_p(1), \quad \frac{1}{\sqrt{n}}\max_{i\leq n} \vert W_{ij}U_i\vert = o_p(1),
\end{align}
and \begin{align}\label{eq:hatFF_is_Op}
    \quad \left\| \En \hat{F}_K\hat{F}_K^\top \right\|=O_p(1).
\end{align}

To show \eqref{eq:Uhat_ops}, I leverage a simple inequality: For random variables $S_1, \ldots, S_n$ that for $r > 1$ satisfy $\En \vert S \vert^r = O_p(1)$, we have \begin{align}\label{eq:max_Wiinequality}
    \max_{i \leq n} \vert S_i\vert \leq n\En \vert S\vert \leq n^{1/r}\left(\En\vert S\vert^r\right)^{1/r} = O_p(n^{1/r}),
\end{align}
where the second inequality follows from Jensen's inequality. By Assumption \ref{assumption:Pn_setup} and Assumption \ref{assumption:asymptotics_setup} (a), application of \eqref{eq:max_Wiinequality} implies \eqref{eq:Uhat_ops}. 

To show \eqref{eq:hatFF_is_Op}, I prove \begin{align*}
    \En \hat{F}_K\hat{F}_K^\top = \En F_K F_K^\top + o_p(1) = \E F_K F_K^\top + o_p(1)
\end{align*}
so that \eqref{eq:hatFF_is_Op} follows from boundedness of $\| \E F_K F_K^\top \|$ by Assumption \ref{assumption:Pn_setup} and Assumption \ref{assumption:asymptotics_setup} (b). Note that by triangle inequality\begin{align*}
   \left\| \En \hat{F}_K\hat{F}_K^\top - \En \hat{F}_KF_K^\top + \En \hat{F}_KF_K^\top - \En F_KF_K^\top\right\| \leq \left\| \En \hat{F}_K(\hat{F}_K-F_K)^\top\right\| + \left\| \En (\hat{F}_K-F_K)F_K^\top\right\|.
\end{align*}
By \eqref{eq:diff_F}, it suffices to consider $\left\|\En \hat{F}_K\Delta_n(Z,X)\right\|$ and $\left\|\En F_K\Delta_n(Z,X)\right\|$. Note that by \eqref{eq:Deltan_W_op1}, $\left\|\En X\Delta_n(Z,X)\right\| = o_p(1)$, and by analogous arguments, $\left\|\En F_K\Delta_n(Z,X)\right\|=o_p(1)$. It thus suffices to consider only the first component of $\left\|\En \hat{F}_K\Delta_n(Z,X)\right\|$. In particular, \begin{align*}
    &\left\vert\En (\hat{m}_K(Z) + X^\top\hat{\pi}) \Delta_n(Z,X)\right\vert \\
    \overset{[1]}{\leq} & \left(\En \hat{m}_K(Z)^2\right)^{1/2}\left(\left(\En (\hat{m}_K(Z) - m_K(Z))^2\right)^{1/2} + \sum_{j=1}^J\vert\hat{\pi}_j - \pi_{0j}\vert\left(\En X_j^2\right)^{1/2}\right)\\
    &\qquad +\left(\En (\hat{m}_K(Z) - m_K(Z))^2\right)^{1/2}\left(\sum_{j=1}^J \vert\hat{\pi}_j\vert\left(\En X_j^2\right)^{1/2}\right) + \sum_{j,\ell=1}^J \vert\hat{\pi}_j - \hat{\pi}_{0j}\vert  \vert\hat{\pi}_\ell\vert \En \vert X_jX_\ell\vert\\
    \overset{[2]}{=} & O_p(1)\left(\left(\En (\hat{m}_K(Z) - m_K(Z))^2\right)^{1/2} + \sum_{j=1}^J\vert\hat{\pi}_j - \pi_{0j}\vert\right)\\
    \overset{[3]}{=}& o_p(1),
\end{align*}
where {\footnotesize [1]} applies the triangle inequality and Cauchy-Schwarz, {\footnotesize [2]} follows from Assumption \ref{assumption:Pn_setup} (b) and Assumption \ref{assumption:asymptotics_setup} (b), and {\footnotesize [3]} follows from Lemma \ref{lemma:consistency_m_covariates} and Assumption \ref{assumption:asymptotics_setup} (c). This shows \eqref{eq:UFF_1} is $o_p(1)$.

For the second term in \eqref{eq:UFF_twoterms}, consider \begin{align}\label{eq:UFF_2}
\begin{aligned}
        \left\| \En U^2 (\hat{F}_K\hat{F}_K^\top - F_KF_K^\top) \right\| &\leq 2\left\| \En U^2 F_K(\hat{F}_K - F_K)^\top \right\| + \left\| \En U^2 (\hat{F}_K - F_K)(\hat{F}_K - F_K)^\top \right\|\\
    &\leq 2\left\| \En U^2 F_K \Delta_n(Z,X) \right\| + \left\| \En U^2 \Delta_n(Z,X)^2 \right\|
\end{aligned}
\end{align}
which follows from the triangle inequality and \eqref{eq:diff_F}. For the first term in \eqref{eq:UFF_2}, we have \begin{align}
\begin{aligned}\label{eq:derivation_U2FDelta}
    &\left\| \En U^2 F_K \Delta_n(Z,X) \right\|\\
    \overset{[1]}{\leq} & \left\| \En U^2 F_K (\hat{m}_K(Z) - \tilde{m}_K(Z)) \right\| + \left\| \En U^2 F_K (\tilde{m}_K(Z) - m_K(Z)) \right\| +  \sum_{j=1}^{J}\vert\hat{\pi}_j - \pi_{j0}\vert\left\| \En U^2 F_K X_j \right\|\\
     \overset{[2]}{\leq} & \left(\frac{1}{\sqrt{n}}\max_{i\leq n} U_i^2\right)\left(n\En \left(\hat{m}_K(Z) - \tilde{m}_K(Z)\right)^2\right)^{1/2}\left\| \En F_KF_K^\top \right\|  + \sum_{k=1}^K\vert \tilde{\alpha}^K_k - \alpha^K_k \vert \left\| \En U^2 F_K \right\| \\
    &\qquad + \sum_{j=1}^{J}\vert(\sqrt{n}(\hat{\pi}_j - \pi_{j0})\vert\left(\frac{1}{\sqrt{n}}\max_{i\leq n} \vert X_{ij}\vert \right)\left\| \En U^2 F_K \right\|\\
     \overset{[3]}{=} & O_p(1) \left(\left(n\En \left(\hat{m}_K(Z) - \tilde{m}_K(Z)\right)^2\right)^{1/2} + \left(\frac{1}{\sqrt{n}}\max_{i\leq n} \vert X_{ij}\vert \right) + \sum_{k=1}^K\vert \tilde{\alpha}^K_k - \alpha^K_k \vert \right)\\
    \overset{[4]}{=}& o_p(1),
\end{aligned}
\end{align}
where {\footnotesize [1]} applies the triangle inequality, {\footnotesize [2]} applies Cauchy-Schwarz, {\footnotesize [3]} follows under Assumption \ref{assumption:Pn_setup} (b) and Assumption \ref{assumption:asymptotics_setup} (a)-(b) and application of \eqref{eq:max_Wiinequality}, and {\footnotesize [4]} follows from Lemma \ref{lemma:rate_hat_tilde} application of \eqref{eq:max_Wiinequality}, and consistency of the infeasible least squares estimator. For the second term in \eqref{eq:UFF_2}, we have by the triangle inequality\begin{align*}
    &\left\| \En U^2 \Delta_n(Z,X)^2 \right\|\\
    \leq & \left\| \En U^2 (\hat{m}_K(Z) - m_K(Z))^2 \right\| + 2 \left\| \En U^2 (\hat{m}_K(Z) - m_K(Z))X^\top (\hat{\pi}- \pi_0)\right\| \\
    & \qquad + \left\|(\hat{\pi}- \pi_0)^\top\left(\En U^2XX^\top \right)(\hat{\pi}- \pi_0)\right\|.
\end{align*}
Note that by arguments analogous to those in \eqref{eq:derivation_U2FDelta}, we have \begin{align*}
    &\left\| \En U^2 (\hat{m}_K(Z) - m_K(Z))^2 \right\|\\
    \leq& \left\| \En U^2 (\hat{m}_K(Z) - \tilde{m}_K(Z))^2 \right\| + 2\left\| \En U^2(\hat{m}_K(Z) - \tilde{m}_K(Z))(\tilde{m}_K(Z) - m_K(Z)) \right\| \\
    & \qquad + \left\| \En U^2 (\tilde{m}_K(Z) - m_K(Z))^2 \right\|\\
    \leq & \left(\frac{1}{\sqrt{n}}\max_{i\leq n} U_i^2\right)\left(n\En (\hat{m}_K(Z) - \Tilde{m}_K(Z))^2\right)^{1/2}\\
    &\qquad + 2\sum_{k=1}^K\vert \tilde{\alpha}_k^K - \alpha_k^K \vert \left(\En U^4\right)^{1/2}\left(\En (\hat{m}_K(Z) - \tilde{m}_K(Z))^2\right)^{1/2} \\
    & \qquad + \sum_{k=1}^K\vert \tilde{\alpha}_k^K - \alpha_k^K \vert  \En U^2\\
    =& O_p(1)\left(\left(n\En (\hat{m}_K(Z) - \Tilde{m}_K(Z))^2\right)^{1/2} + \sum_{k=1}^K\vert \tilde{\alpha}_k^K - \alpha_k^K \vert\right)\\
   =&o_p(1),
\end{align*}
and similarly \begin{align*}
    &\left\| \En U^2 (\hat{m}_K(Z) - m_K(Z))X^\top (\hat{\pi}- \pi_0)\right\|\\
    \leq& \sum_{j=1}^J \vert \sqrt{n}(\hat{\pi}_j - \pi_{0j})\vert\left(\frac{1}{\sqrt{n}} \max_{i\leq n} \vert X_{ij}\vert\right)\left(\En U^4\right)^{1/2}\left(\En (\hat{m}_K(Z) - m_K(Z))^2\right)^{1/2}\\
    =& O_p(1)\sum_{j=1}^J\left(\frac{1}{\sqrt{n}} \max_{i\leq n} \vert X_{ij}\vert\right)\\
    =& o_p(1).
\end{align*}
Finally, applying the same arguments again, we have\begin{align*}
    &\left\|(\hat{\pi}- \pi_0)^\top\left(\En U^2XX^\top \right)(\hat{\pi}- \pi_0)\right\|\\
    \leq &\sum_{j,\ell = 1}^J \vert \hat{\pi}_j- \pi_{0j}\vert\vert \hat{\pi}_\ell- \pi_{0\ell}\vert\left\vert\En U^2X_jX_\ell \right\vert\\
    \leq & J\sum_{j=1}^J\left\vert \sqrt{n}(\hat{\pi}_j- \pi_{0j}) \right\vert^2 \left(\frac{1}{\sqrt{n}} \max_{i\leq n} \vert X_{ij}\vert\right)^2 \left\|\En U^2\right\|\\
    = & O_p(1)\left(\frac{1}{\sqrt{n}} \max_{i\leq n} \vert X_{ij}\vert\right)^2\\
    =& o_p(1).
\end{align*}
This shows that \eqref{eq:UFF_2} is $o_p(1)$ and hence that \eqref{eq:UFF_twoterms} is $o_p(1)$.

The proof is concluded by noting that under Assumption \ref{assumption:asymptotics_setup} (a) and (b) we have \begin{align*}
    \En U^2F_KF_K = \E U^2 F_KF_K + o_p(1).
\end{align*}

\end{proof}

\subsection{Semiparametric Efficiency of $\hat{\theta}_{K_0}$}\label{app:semiparametric_proof}

Finally, I prove Corollary \ref{corollary:semiparametric_efficiency}:
 \begin{proof}
Consider the expression of the asymptotic covariance $\Sigma_{K}$ of Theorem \ref{theorem:CIV_w_covariates_mk} for $K = K_0$. Then,\begin{align}\label{eq:efficiency_proof}
\begin{aligned}
         \Sigma_{K_0} & = \E[F_{K_0}W^\top]^{-1}\E[U^2F_{K_0}F_{K_0}^\top]\E[WF_{K_0}^\top]^{-1}\\
         & \overset{[1]}{=} \E[h_0(Z^{(0)},X)h_0(Z^{(0)},X)^\top]^{-1}\E[U^2h_0(Z^{(0)},X)h_0(Z^{(0)},X)^\top]\E[h_0(Z^{(0)},X)h_0(Z^{(0)},X)^\top]^{-1}\\
         & \overset{[2]}{=} \sigma^2\E[h_0(Z^{(0)},X)h_0(Z^{(0)},X)^\top]^{-1},
\end{aligned}
     \end{align}
     where {\footnotesize [1]} follows from  the first component of $F_{K_0}$ being equal to $$m_0^{(n)}(Z^{(n)}) + X^\top \pi_0 = Z^{(0)} + X^\top \pi_0  = E[D\vert X, Z^{(0)}]$$ by Assumption \ref{assumption:Z0_setup} (b) and Assumption \ref{assumption:iv_setup} (b), and {\footnotesize [2]} follows from homoskedasticity. The proof concludes with the note that the final term in \eqref{eq:efficiency_proof} is the semiparametric efficiency bound for $\theta_0$ for a fixed law $P$ of $(Y, D, X^\top, Z^{(0)}, U)$. See, e.g.,  \citet{chamberlain1987asymptotic}.
 \end{proof}

\clearpage
\newpage
\section{Additional Simulation Results} \label{app:additional_sims}

This appendix presents additional results for the simulation described in Section \ref{sec:simulation}. Table \ref{tab:app_sim_resK2_constant} and Table \ref{tab:app_sim_resK4_constant} provide simulation results analogous to those presented in Table \ref{tab:app_sim_resK2} and \ref{tab:app_sim_resK4} but where there are no heterogeneous effects in the second stage -- i.e., $\pi(X) = 0$. Further,  Figure \ref{fig:simMK_2} provides results for the additional machine learning-based IV estimators for the same DGP as Figure \ref{fig:simMK_1} in the main text. Note that IJIVE and UJIVE are numerically nearly identical in the considered setting so that their lines largely overlap.

\begin{table}[!htbp]\small
        \begin{threeparttable}
  \caption{Simulation Results with Constant Effects ($K_0=2$)}\label{tab:app_sim_resK2_constant}
\begin{tabular}{lccccccccc}
\toprule
\midrule
$K_0=2$ & \multicolumn{4}{c}{$\E N_z=20$ } &       & \multicolumn{4}{c}{$\E N_z= 25$ } \\
\cmidrule{2-5}\cmidrule{7-10}      & Bias  & MAE   & rp(0.05) & iqr(10,90) &       & Bias  & MAE   & rp(0.05) & iqr(10,90) \\ \midrule
Oracle & -0.006 & 0.057 & 0.047 & 0.216 &       & -0.003 & 0.052 & 0.063 & 0.205 \\
CIV (K=2) & 0.032 & 0.064 & 0.081 & 0.213 &       & 0.016 & 0.052 & 0.070 & 0.193 \\
CIV (K=4) & 0.101 & 0.108 & 0.339 & 0.178 &       & 0.081 & 0.086 & 0.262 & 0.169 \\
Lasso-IV (cv) & 0.163 & 0.196 & 0.753 & 0.187 &       & 0.159 & 0.160 & 0.688 & 0.165 \\
Lasso-IV (plug-in) & 0.282 & 0.290 & 0.547 & 0.437 &       & 0.262 & 0.269 & 0.571 & 0.348 \\
Ridge-IV (cv) & 0.125 & 0.126 & 0.454 & 0.174 &       & 0.101 & 0.105 & 0.383 & 0.162 \\
xboost-IV & 0.121 & 0.126 & 0.455 & 0.173 &       & 0.101 & 0.106 & 0.384 & 0.162 \\
ranger-IV & 0.130 & 0.136 & 0.502 & 0.174 &       & 0.112 & 0.115 & 0.442 & 0.160 \\
TSLS  & 0.120 & 0.126 & 0.452 & 0.173 &       & 0.101 & 0.105 & 0.382 & 0.162 \\
JIVE  & -0.022 & 0.068 & 0.056 & 0.256 &       & -0.014 & 0.057 & 0.054 & 0.229 \\
IJIVE & -0.012 & 0.068 & 0.056 & 0.250 &       & -0.007 & 0.057 & 0.064 & 0.225 \\
UJIVE & -0.013 & 0.068 & 0.055 & 0.250 &       & -0.007 & 0.057 & 0.064 & 0.225 \\
LIML  & -0.006 & 0.061 & 0.048 & 0.237 &       & -0.004 & 0.054 & 0.063 & 0.210 \\
      &       &       &       &       &       &       &       &       &  \\
$K_0=2$ & \multicolumn{4}{c}{$\E N_z=100$ } &       & \multicolumn{4}{c}{$\E N_z= 150$ } \\
\cmidrule{2-5}\cmidrule{7-10}      & Bias  & MAE   & rp(0.05) & iqr(10,90) &       & Bias  & MAE   & rp(0.05) & iqr(10,90) \\
\midrule
Oracle & 0.000 & 0.025 & 0.055 & 0.094 &       & 0.000 & 0.021 & 0.058 & 0.077 \\
CIV (K=2) & 0.000 & 0.025 & 0.055 & 0.094 &       & 0.000 & 0.021 & 0.058 & 0.077 \\
CIV (K=4) & 0.020 & 0.029 & 0.097 & 0.092 &       & 0.014 & 0.023 & 0.080 & 0.076 \\
Lasso-IV (cv) & 0.041 & 0.042 & 0.237 & 0.089 &       & 0.027 & 0.030 & 0.167 & 0.074 \\
Lasso-IV (plug-in) & 0.068 & 0.068 & 0.446 & 0.091 &       & 0.025 & 0.029 & 0.154 & 0.076 \\
Ridge-IV (cv) & 0.029 & 0.033 & 0.143 & 0.090 &       & 0.019 & 0.026 & 0.112 & 0.076 \\
xboost-IV & 0.029 & 0.033 & 0.144 & 0.090 &       & 0.019 & 0.026 & 0.112 & 0.076 \\
ranger-IV & 0.037 & 0.038 & 0.197 & 0.089 &       & 0.025 & 0.029 & 0.159 & 0.075 \\
TSLS  & 0.029 & 0.033 & 0.144 & 0.090 &       & 0.019 & 0.026 & 0.112 & 0.076 \\
JIVE  & -0.002 & 0.026 & 0.053 & 0.097 &       & -0.002 & 0.021 & 0.054 & 0.080 \\
IJIVE & 0.000 & 0.026 & 0.055 & 0.096 &       & -0.001 & 0.021 & 0.054 & 0.080 \\
UJIVE & -0.001 & 0.026 & 0.055 & 0.096 &       & -0.001 & 0.021 & 0.054 & 0.080 \\
LIML  & 0.000 & 0.025 & 0.061 & 0.096 &       & 0.000 & 0.021 & 0.061 & 0.078 \\
\midrule\bottomrule
\end{tabular}
  \begin{tablenotes}[para,flushleft]
  \scriptsize
  \item \textit{Notes.} Simulation results are based on 1000 replications using the DGP described in Section \ref{sec:simulation} with $K_0=2$ with constant second-stage treatment effects. See the notes of Table \ref{tab:app_sim_resK2} for a description of the estimators.
  \end{tablenotes}
    \end{threeparttable}
\end{table}

\begin{table}[h]\small
        \begin{threeparttable}
  \caption{Simulation Results with Constant Effects ($K_0=4$)}\label{tab:app_sim_resK4_constant}
\begin{tabular}{lccccccccc}
\toprule
\midrule
$K_0=4$ & \multicolumn{4}{c}{$\E N_z=20$ } &       & \multicolumn{4}{c}{$\E N_z= 25$ } \\
\cmidrule{2-5}\cmidrule{7-10}      & Bias  & MAE   & rp(0.05) & iqr(10,90) &       & Bias  & MAE   & rp(0.05) & iqr(10,90) \\ \midrule
Oracle & 0.008 & 0.057 & 0.055 & 0.211 &       & 0.001 & 0.048 & 0.045 & 0.184 \\
CIV (K=2) & 0.095 & 0.099 & 0.251 & 0.207 &       & 0.071 & 0.079 & 0.172 & 0.194 \\
CIV (K=4) & 0.114 & 0.116 & 0.374 & 0.179 &       & 0.090 & 0.094 & 0.297 & 0.170 \\
Lasso-IV (cv) & 0.158 & 0.155 & 0.558 & 0.192 &       & 0.129 & 0.131 & 0.504 & 0.170 \\
Lasso-IV (plug-in) & 0.204 & 0.211 & 0.457 & 0.324 &       & 0.181 & 0.180 & 0.458 & 0.242 \\
Ridge-IV (cv) & 0.102 & 0.125 & 0.444 & 0.175 &       & 0.100 & 0.102 & 0.363 & 0.161 \\
xboost-IV & 0.124 & 0.125 & 0.445 & 0.174 &       & 0.101 & 0.103 & 0.363 & 0.161 \\
ranger-IV & 0.126 & 0.126 & 0.462 & 0.173 &       & 0.104 & 0.106 & 0.380 & 0.161 \\
TSLS  & 0.124 & 0.125 & 0.445 & 0.174 &       & 0.101 & 0.103 & 0.363 & 0.161 \\
JIVE  & -0.016 & 0.068 & 0.054 & 0.259 &       & -0.015 & 0.058 & 0.047 & 0.220 \\
IJIVE & -0.007 & 0.066 & 0.056 & 0.253 &       & -0.008 & 0.057 & 0.052 & 0.216 \\
UJIVE & -0.007 & 0.066 & 0.056 & 0.254 &       & -0.009 & 0.057 & 0.052 & 0.217 \\
LIML  & 0.000 & 0.061 & 0.058 & 0.235 &       & -0.005 & 0.051 & 0.055 & 0.200 \\
      &       &       &       &       &       &       &       &       &  \\
$K_0=4$ & \multicolumn{4}{c}{$\E N_z=100$ } &       & \multicolumn{4}{c}{$\E N_z= 150$ } \\
\cmidrule{2-5}\cmidrule{7-10}      & Bias  & MAE   & rp(0.05) & iqr(10,90) &       & Bias  & MAE   & rp(0.05) & iqr(10,90) \\
\midrule
Oracle & -0.001 & 0.026 & 0.049 & 0.096 &       & 0.002 & 0.020 & 0.061 & 0.081 \\
CIV (K=2) & 0.008 & 0.028 & 0.069 & 0.106 &       & 0.003 & 0.023 & 0.065 & 0.087 \\
CIV (K=4) & 0.010 & 0.026 & 0.064 & 0.095 &       & 0.005 & 0.020 & 0.067 & 0.079 \\
Lasso-IV (cv) & 0.032 & 0.035 & 0.162 & 0.089 &       & 0.025 & 0.029 & 0.149 & 0.077 \\
Lasso-IV (plug-in) & 0.021 & 0.031 & 0.096 & 0.094 &       & 0.018 & 0.026 & 0.110 & 0.081 \\
Ridge-IV (cv) & 0.026 & 0.031 & 0.138 & 0.091 &       & 0.021 & 0.025 & 0.127 & 0.076 \\
xboost-IV & 0.026 & 0.031 & 0.138 & 0.091 &       & 0.021 & 0.025 & 0.126 & 0.076 \\
ranger-IV & 0.028 & 0.032 & 0.149 & 0.090 &       & 0.022 & 0.026 & 0.136 & 0.076 \\
TSLS  & 0.026 & 0.031 & 0.138 & 0.091 &       & 0.021 & 0.025 & 0.126 & 0.076 \\
JIVE  & -0.005 & 0.027 & 0.047 & 0.099 &       & -0.001 & 0.021 & 0.060 & 0.080 \\
IJIVE & -0.003 & 0.027 & 0.042 & 0.098 &       & 0.000 & 0.021 & 0.065 & 0.080 \\
UJIVE & -0.003 & 0.027 & 0.042 & 0.098 &       & 0.000 & 0.021 & 0.065 & 0.080 \\
LIML  & -0.002 & 0.027 & 0.050 & 0.097 &       & 0.001 & 0.020 & 0.062 & 0.079 \\ \midrule \bottomrule
\end{tabular}%
  \begin{tablenotes}[para,flushleft]
  \scriptsize
  \item \textit{Notes.} Simulation results are based on 1000 replications using the DGP described in Section \ref{sec:simulation} with $K_0=4$ with constant second-stage treatment effects. See the notes of Table \ref{tab:app_sim_resK2} for a description of the estimators.
  \end{tablenotes}
    \end{threeparttable}
\end{table}
\clearpage

\begin{figure}[!h] 
\caption{Power Curves with and without Treatment Effect Heterogeneity (Contd.)}\label{fig:simMK_2}
    \centering
    \begin{subfigure}[b]{0.375\textwidth}
    \centering
    \includegraphics[width=1\textwidth]{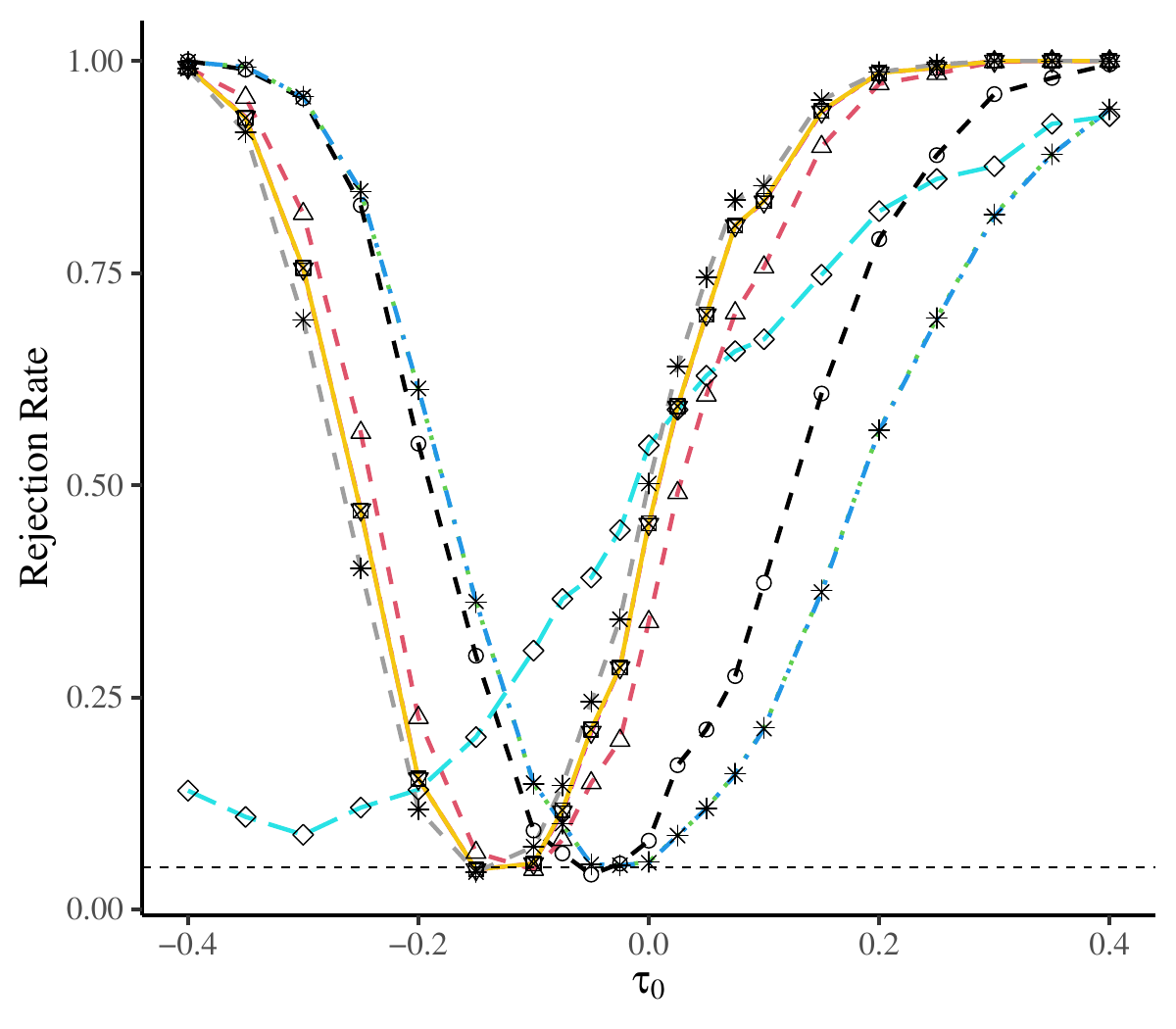}
    \subcaption{constant $\pi_0$, $\E N_z= 20$}
    \end{subfigure}
    \begin{subfigure}[b]{0.375\textwidth}
    \centering
    \includegraphics[width=1\textwidth]{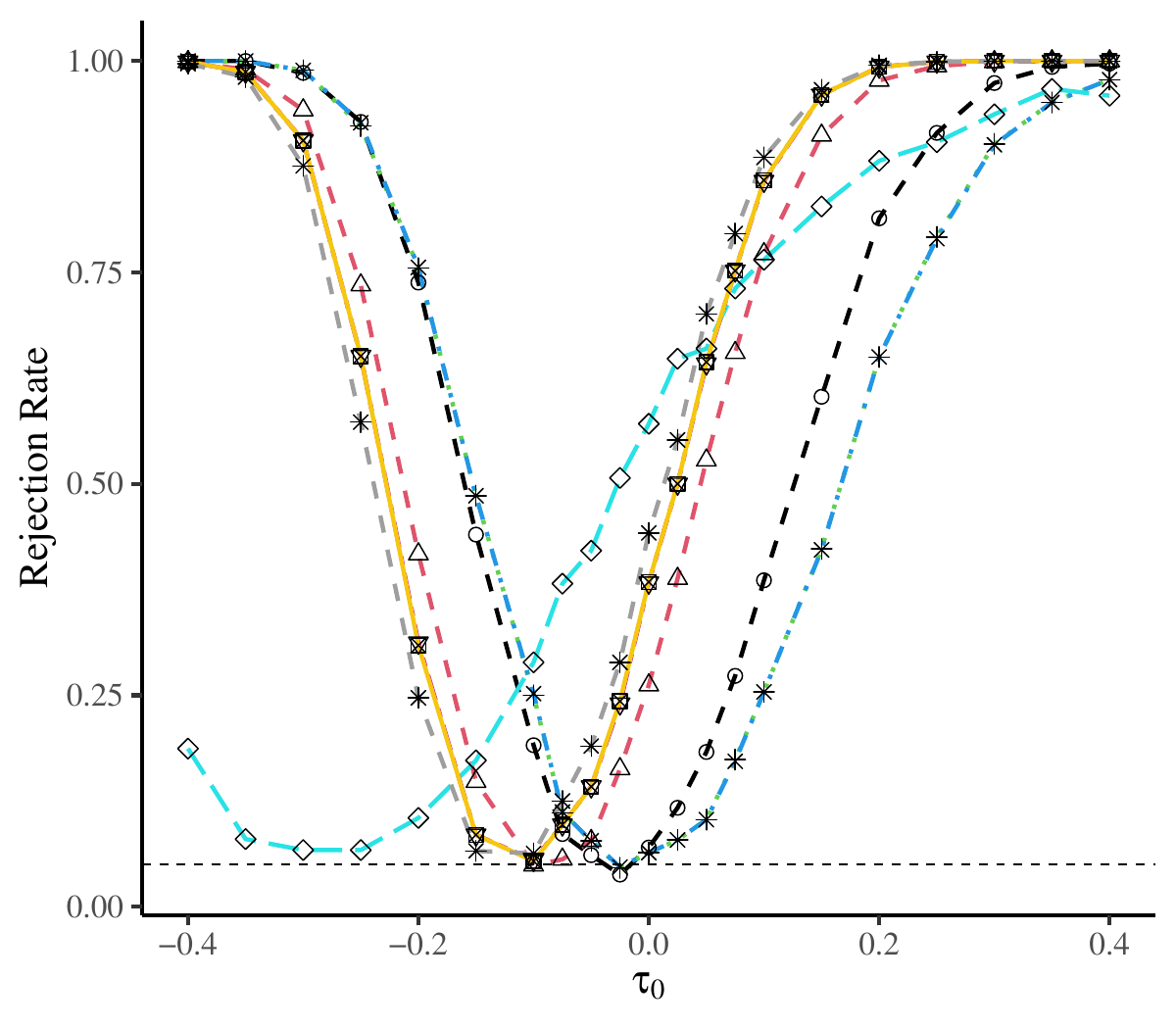}
    \subcaption{constant $\pi_0$, $\E N_z= 25$}
    \end{subfigure}
    \begin{subfigure}[b]{0.16\textwidth}
    \centering
    \includegraphics[width=1\textwidth]{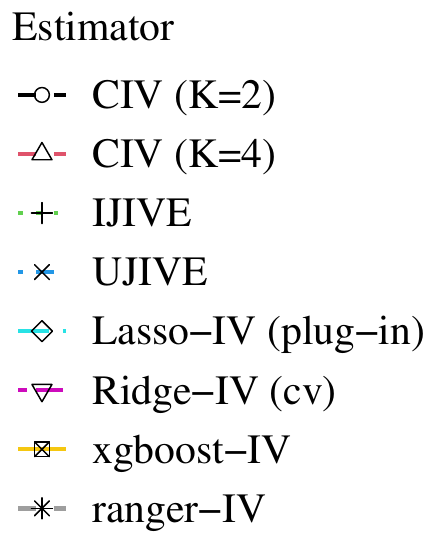}\vspace{6.5em}
    \end{subfigure}
    \\
    \begin{subfigure}[b]{0.375\textwidth}
    \centering
    \includegraphics[width=1\textwidth]{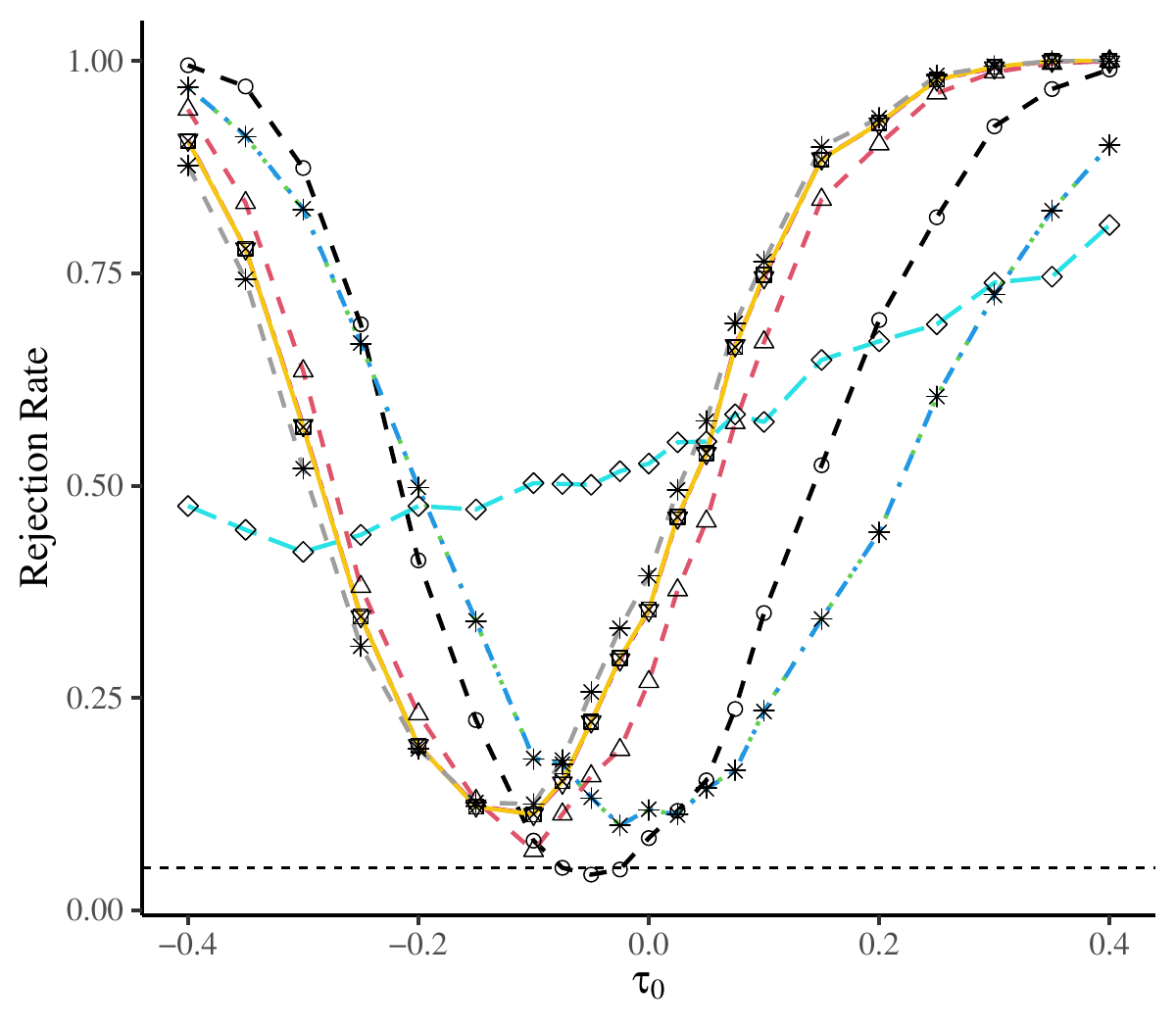}
    \subcaption{heterogeneous $\pi_0$, $\E N_z= 20$}
    \end{subfigure}
    \begin{subfigure}[b]{0.375\textwidth}
    \centering
    \includegraphics[width=1\textwidth]{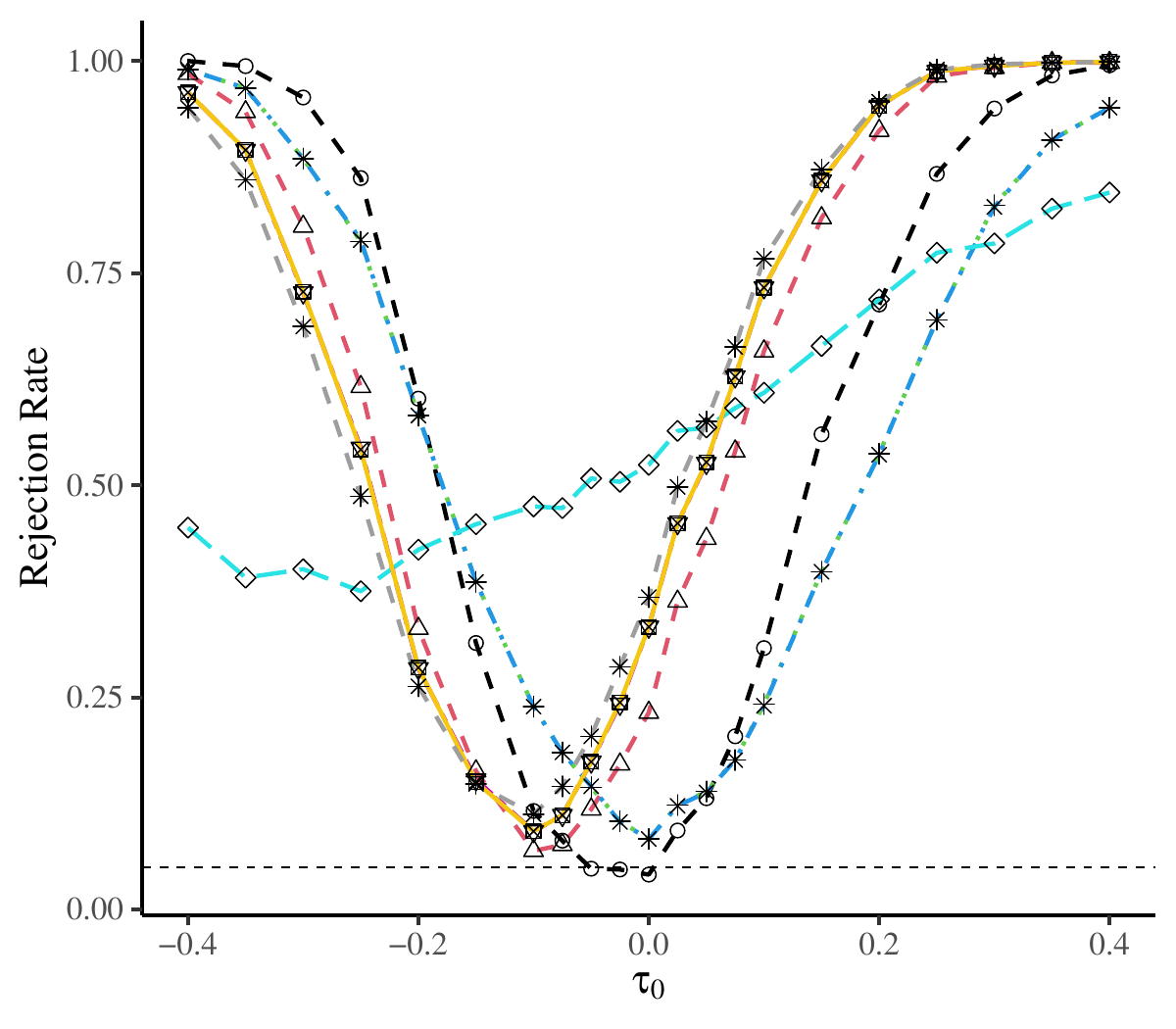}
    \subcaption{heterogeneous $\pi_0$, $\E N_z= 25$}
    \end{subfigure}\hspace{6em}
     \vskip0.25em
 \begin{minipage}{0.9\textwidth} 
 {\scriptsize \textit{Notes.} Simulation results are based on 1000 replications using the DGP described in Section \ref{sec:simulation} with $K_0=2$. Panels (a) and (b) are with constant effects so that $\pi_0(X_i) = \tau_0$. Panels (c) and (d) allow for covariate-dependent effects with $\pi_0(X_i) = 1 - 2X_i +  \tau_0$. The power curves plot the rejection rate of testing $H_0: \tau_0 = 0$ at significance level $\alpha=0.05$. ``CIV ($K=2$)'' and ``CIV ($K=4$)'' correspond to the proposed categorical IV estimators restricted to 2 and 4 support points in the first stage, ``IJIVE'' and ``UJIVE'' correspond to the jackknife-based estimators of \citet{ackerberg2009improved} and \citet{kolesar2013estimation}, ``Lasso-IV (plug-in)'' denotes IV estimator that uses lasso to estimate the optimal instrument using penalty parameters chosen via the plug-in rule of \citet{belloni2012sparse} , ``Ridge-IV (cv)'' denotes an IV estimator that uses ridge regression to estimate the optimal instrument using a penalty parameter chosen via 10-fold cross validation, and ``\texttt{xgboost}-IV'' and ``\texttt{ranger}-IV'' denote IV estimators that use gradient tree boosting as implemented by the \texttt{xgboost} package and random forests as implemented by the \texttt{ranger} package to estimate the optimal instrument, respectively. }
 \end{minipage}
\end{figure}

\clearpage
\section{Additional Plots for the Miami-Dade County Data}\label{app:additional_plots_judges}

Figure \ref{fig:hist_ncases_FE} plots the distribution of cases per court-by-time fixed effect in the Miami-Dade data considered in Section \ref{sec:empirical_example_judges}.

\begin{figure}[!h] 
\caption{Distribution of Observations per Court-by-Time fixed effect}\label{fig:hist_ncases_FE}
    \centering
    \includegraphics[width=0.5\textwidth]{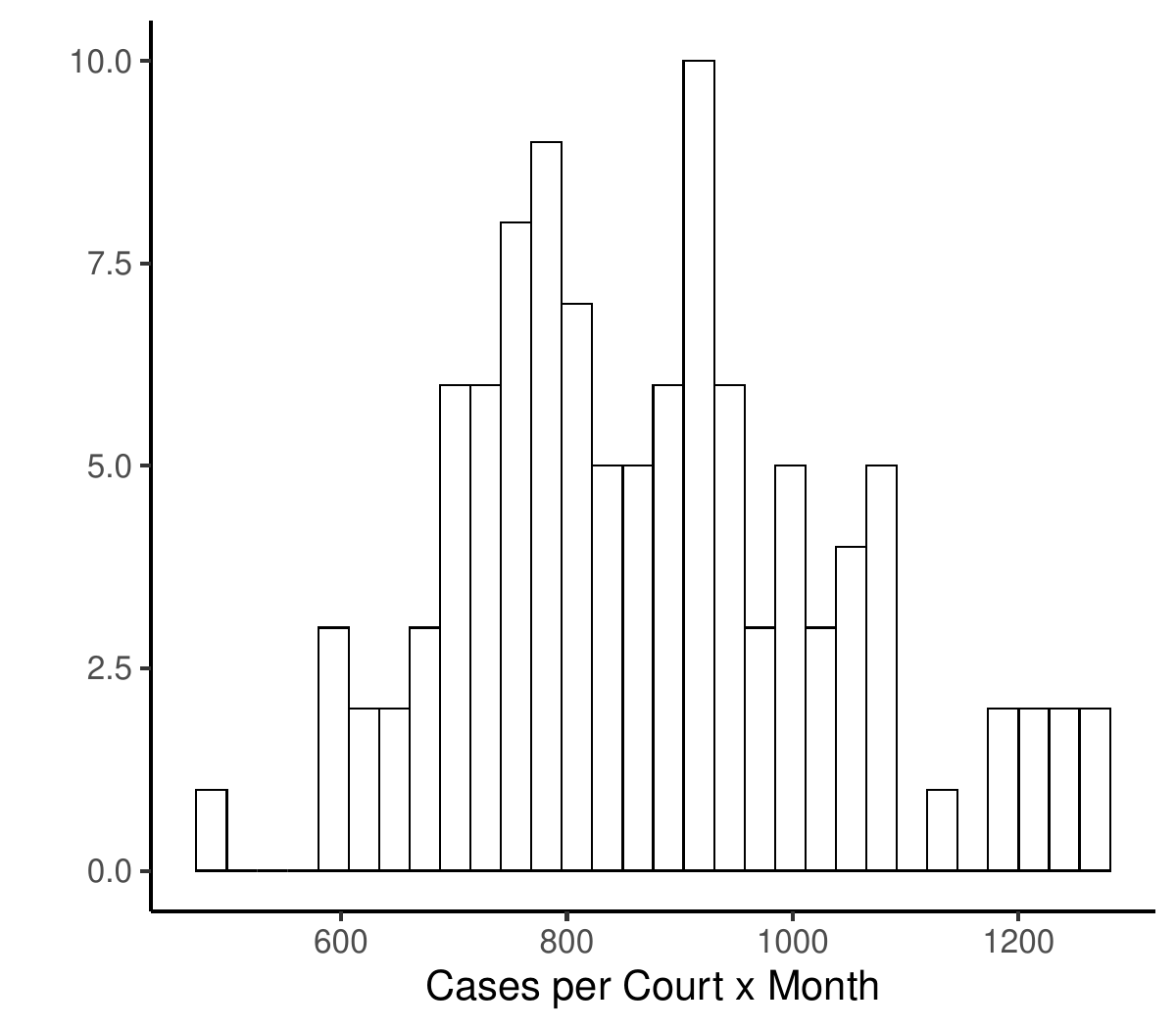}
     \vskip0.25em
 \begin{minipage}{0.9\textwidth} 
 {\scriptsize \textit{Notes.} The histogram plots the distribution of cases per court-by-time fixed effects in the Miami-Dade data. The vertical axis corresponds to the number of court-by-time cells (not the number of observations).}
 \end{minipage}
\end{figure} 

\clearpage
\section{Empirical Example: Returns to Schooling}\label{app:ak_app}

This appendix revisits the \citet[AK91 hereafter]{angrist1991does} application as an empirical illustration. The AK91 example has been the primary empirical example in the many and weak instruments literature \citep[among many others, see, e.g.,][]{bound1995problems,angrist1995split,angrist1999jackknife, donald2001choosing,hansen2008estimation,angrist2020machine, mikusheva2020inference}. Despite the many applications of IV estimators to the AK91 setting, the purely categorical nature of the considered instruments is not commonly taken advantage of. I revisit the returns to schooling analysis of AK91 in a sample of 329,509 American men born between 1930 and 1939. The authors use quarter of birth (QOB) indicators as instruments for the highest grade completed. This approach is motivated by two arguments. First, quarter of birth is plausibly exogenous with other determinants of wages. Second, children born in later quarters attain the minimum dropout age after having completed more schooling. While the QOB instrument thus appears a valid approach to instrument for years of schooling, it averages over potential heterogeneity in educational policy. A larger number of categorical instruments arises when interactions between QOB and indicators for year of birth and place of birth (POB) are formed. These interactions capture the fact that mandatory schooling laws differ across cohorts and states. In particular, the interactions account for the possibility that two students born in the same QOB may differ on whether they can dropout depending on the particular policy in place in their state. I focus on interactions of QOB with the 51 values of POB to compare estimators in random subsamples of the original data.\footnote{With the exception of \citet{angrist2020machine} and \citet{mikusheva2020inference}, most empirical analyzes of the \citet{angrist1991does} data consider disjoint interactions of QOB with YOB and POB, respectively, leading to 180 excluded instruments. As highlighted in \citet{blandhol2022tsls}, causal interpretations of such specifications likely violate the monotonicity correctness of the first stage. Monotonicity correctness is guaranteed mechanically by the saturated interaction specification of QOB and POB considered in this paper.} 

The setting of AK91 is an interesting application of the CIV estimator proposed here as the economic motivation of the QOB-instrument directly suggests existence of a latent \textit{binary} optimal instrument: A student's dropout decision either is or is not constrained by the mandatory attendance law in place in their state. While it would thus be ideal to know of the specific educational policies across all states, the CIV estimator provides an approach to estimate this map from interactions of QOB and POB to the latent binary instrument that captures whether or not a student was constrained in their schooling decision. In particular, consider a sample $\{(Y_i, D_i, \text{QOB}_i, \text{POB}_i)\}_{i=1}^n$ of $n$ students with log-weekly wage $Y_i$, $D_i$ years of education, and born in $\text{QOB}_i$ and $\text{POB}_i$. The CIV estimator with $K=2$ directly applies if \begin{align}\label{eq:equation_AK}
    \E[D_i \vert \text{QOB}_i, \text{POB}_i] = m_0(\text{QOB}_i, \text{POB}_i) + \sum_{j=1}^{51}\mathbbm{1}_j(\text{POB}_i)\pi_{0j},
\end{align}
where $m_0$ is the map to the optimal instrument in Assumption \ref{assumption:Z0_setup} (b), the optimal instrument captures the difference in schooling for a constrained or unconstrained student, and $(\pi_{0j})_{j=1}^{51}$ in the second term capture level-differences in years of schooling across states. Note that the restriction of $m_0$ to two support points implies that the expected increment in years of schooling implied by the constraint of a mandatory attendance law is constant across states. If there is additional heterogeneity in the first stage effect of being constrained, CIV with $K=2$ serves as an approximation to the optimal instrument estimator but remains root-$n$ normal at the cost of statistical efficiency (see Theorem \ref{theorem:CIV_w_covariates_mk}).

Table \ref{tab:AK_res} presents estimation results of the coefficient of log-weekly wage on years of schooling in the original AK91 data using ten estimators: TSLS, three CIV estimators with $K=2$, $K=3$, and $K=4$, LIML, JIVE, IJIVE, UJIVE, and two specifications of the post-Lasso TSLS estimator proposed by \citet{belloni2012sparse}. The two lasso-based IV estimators differ in how the indicators for the categorical instruments are constructed: While Lasso-IV (1) includes three QOB indicators (dropping the indicator for the first QOB) and 150 indicators for interactions between QOB and POB, Lasso-IV (2) directly includes 153 indicators for the interactions between the QOB and POB values.\footnote{In particular, the indicator specifications for Lasso-IV (1) and (2) are constructed in \textsf{R} using the commands \texttt{model.matrix($\sim$ QOB*POB)} and \texttt{model.matrix($\sim$ QOB:POB)}, respectively.} Lasso-IV (1) and Lasso-IV (2) thus differ in how they define the constant, or, ``base-case''. Of course, such differences in indicator specification results in numerically identical estimates for any of the other considered estimators. 

Column (5) of Table \ref{tab:AK_res} provides estimation results using the full sample of 329,509 observations. In this sample size, TSLS, LIML, all jackknife estimators, and all three CIV estimators provide qualitatively similar estimation results: All coefficient estimates are near 0.1 and statistically significant at a 5\% nominal level. Similar performances of TSLS and the CIV estimators at this very large sample size are reassuring since one may reasonably expect the many instruments problem to be avoided given only the 150 excluded instruments, and closeness of the LIML estimator may suggest only little heterogeneity in the second stage effects. Among this set of estimators, the qualitative conclusions thus seem robust. In contrast, the results of the lasso-based IV estimators depend strongly on the specification of the indicators in the first stage. While Lasso-IV (1) is statistically insignificant, Lasso-IV (2) returns the largest point estimate paired with the smallest standard error of all considered IV estimators. Since the choice of indicator specification is often arbitrary, lasso-based IV estimation in this setting leaves the researcher with two qualitatively very different results. Much akin the approach of \citet{donald2001choosing} who consider an ordered list of instruments in increasing importance from which to choose from, applications of lasso-based estimators for categorical variables require careful consideration of the constructed sets of indicators.\footnote{Note that the post-lasso estimator allows for multiple sets of indicators. However, to achieve the same first stage fitted values as CIV, researchers would be required to include the power set. In this AK91 application the power set corresponds to $2^{150}$ indicators.}

The true coefficient in the AK91 application is unknown. To nevertheless provide some insights into the finite-sample trade-offs between the different estimators, I re-compute the estimators on random subsamples of the original data. Column (1)-(4) provide the corresponding mean coefficient and mean standard error estimates across 250 replications along with the median absolute difference to the corresponding full-sample coefficient estimate.\footnote{This subsampling exercise is similar to those considered in \citet{wuthrich2020omitted}.} For comparison, the OLS estimator is 0.067 for all sample sizes with a standard error between 0.001 and 0.0004. 

Compared to the TSLS estimator, the CIV estimator with $K=2$ is slightly less biased towards OLS (0.067) for smaller sample sizes while maintaining roughly equal deviations from its corresponding full-sample estimate, suggesting that $K=2$ provides some well-suited regularization to the first stage. For CIV with $K=3$ and $K=4$, estimates are slightly closer to the TSLS estimate further suggesting that the first stage model in \eqref{eq:equation_AK} along with $K=2$ may indeed be a good approximation of the optimal instrument structure. In contrast to TSLS and the CIV estimators, the jackknife estimators and the LIML estimator are highly variable for small and moderate sample sizes. Indeed, the median absolute deviations of these estimators are substantially larger compared to CIV. Further, the jackknife-based estimators do not compute in 17\% of the replications when $n=32,950$ because of instrument categories with just a single observation. Finally, the two lasso-based estimators suffer from the differing qualitative conclusions for all sample sizes: While Lasso-IV (1) is highly variable in particular for small sample sizes, Lasso-IV (2) is very stable, statistically significant, and large. In addition, at smaller sample sizes, Lasso-IV (1) occasionally does not select any instruments in the first stage, resulting in non-defined estimates. In particular, Lasso-IV (1) does not select any instruments in 58\% of replications when $n=32,950$, in 20\% of replications when $n=98,852$, and in 11\% of replications when $n=131,803$.

\clearpage
\begin{table}[!h]\scriptsize
\centering
        \begin{threeparttable}
  \caption{\citet{angrist1991does} Estimation Results}\label{tab:AK_res}
\begin{tabular}{rlccccc}
\toprule
\midrule
      &       & $n=32,950$ & $n=98,852$ & $n=131,803$ & $n=197,705$ & Full Sample \\
            &       & (1) & (2) & (3) & (4) & (5) \\
\midrule
\multicolumn{1}{l}{TSLS} & Mean Coef. & 0.074 & 0.083 & 0.088 & 0.093 & 0.099 \\
      & Mean S.E. & 0.015 & 0.014 & 0.013 & 0.012 & 0.01 \\
      & SD Coef. & 0.016 & 0.012 & 0.011 & 0.009 & - \\
      & MAE   & 0.026 & 0.016 & 0.012 & 0.008 & - \\
      & & & & & & \\
\multicolumn{1}{l}{CIV ($K=2$)} & Mean Coef. & 0.078 & 0.088 & 0.092 & 0.096 & 0.102 \\
      & Mean S.E. & 0.023 & 0.019 & 0.017 & 0.015 & 0.013 \\
      & SD Coef. & 0.025 & 0.017 & 0.016 & 0.012 & - \\
      & MAE   & 0.024 & 0.016 & 0.013 & 0.009 & - \\
            & & & & & & \\
\multicolumn{1}{l}{CIV ($K=3$)} & Mean Coef. & 0.076 & 0.085 & 0.090 & 0.095 & 0.110 \\
      & Mean S.E. & 0.019 & 0.016 & 0.015 & 0.014 & 0.012 \\
      & SD Coef. & 0.020 & 0.015 & 0.014 & 0.011 & - \\
      & MAE   & 0.036 & 0.024 & 0.020 & 0.016 & - \\
            & & & & & & \\
\multicolumn{1}{l}{CIV ($K=4$)} & Mean Coef. & 0.076 & 0.085 & 0.089 & 0.093 & 0.095 \\
      & Mean S.E. & 0.017 & 0.015 & 0.014 & 0.013 & 0.011 \\
      & SD Coef. & 0.018 & 0.013 & 0.013 & 0.010 & - \\
      & MAE   & 0.021 & 0.013 & 0.010 & 0.007 & - \\
            & & & & & & \\
\multicolumn{1}{l}{Lasso-IV (1)} & Mean Coef. & 0.087 & 0.092 & 0.088 & 0.092 & 0.091 \\
      & Mean S.E. & 0.423 & 0.308 & 0.211 & 0.121 & 0.061 \\
      & SD Coef. & 0.092 & 0.057 & 0.044 & 0.026 & - \\
      & MAE   & 0.054 & 0.026 & 0.021 & 0.013 & - \\
            & & & & & & \\
\multicolumn{1}{l}{Lasso-IV (2)} & Mean Coef. & 0.131 & 0.137 & 0.137 & 0.137 & 0.134 \\
      & Mean S.E. & 0.008 & 0.004 & 0.003 & 0.002 & 0.002 \\
      & SD Coef. & 0.008 & 0.004 & 0.002 & 0.002 & - \\
      & MAE   & 0.007 & 0.002 & 0.002 & 0.001 & - \\
            & & & & & & \\
\multicolumn{1}{l}{JIVE} & Mean Coef. & -0.008 & 0.325 & 0.056 & 0.156 & 0.134 \\
      & Mean S.E. & -0.431 & 29.912 & -8.673 & 0.044 & 0.022 \\
      & SD Coef. & 0.544 & 3.504 & 1.471 & 0.036 & - \\
      & MAE   & 0.140 & 0.163 & 0.068 & 0.023 & - \\
            & & & & & & \\
\multicolumn{1}{l}{IJIVE} & Mean Coef. & -0.185 & 0.103 & 0.122 & 0.120 & 0.119 \\
      & Mean S.E. & -3.392 & 0.043 & 0.036 & 0.024 & 0.017 \\
      & SD Coef. & 1.655 & 0.218 & 0.037 & 0.018 & - \\
      & MAE   & 0.140 & 0.029 & 0.021 & 0.012 & - \\
            & & & & & & \\
\multicolumn{1}{l}{UJIVE} & Mean Coef. & 0.288 & 0.143 & 0.125 & 0.120 & 0.119 \\
      & Mean S.E. & 0.485 & 0.042 & 0.036 & 0.024 & 0.017 \\
      & SD Coef. & 0.553 & 0.102 & 0.044 & 0.018 & - \\
      & MAE   & 0.191 & 0.028 & 0.020 & 0.013 & - \\
            & & & & & & \\
\multicolumn{1}{l}{LIML} & Mean Coef. & 0.042 & 0.119 & 0.119 & 0.116 & 0.115 \\
      & Mean S.E. & 0.657 & 0.025 & 0.021 & 0.016 & 0.012 \\
      & SD Coef. & 0.846 & 0.044 & 0.028 & 0.017 & - \\
      & MAE   & 0.069 & 0.024 & 0.020 & 0.011 & - \\
\midrule
\bottomrule
\end{tabular}
  \begin{tablenotes}[para,flushleft]
  \scriptsize
  \item \textit{Notes.} Subsampling estimation results are based on 250 replications. ``Mean Coef.'' and ``Mean S.E.'' denote the mean coefficient and standard error estimate across subsampling replications. ``MAE'' denotes the median absolute difference to the full sample coefficient estimate. ``TSLS'' and ``LIML'' denote two-stage least squares, and limited information maximum likelihood, respectively. ``CIV ($K=2$)'', ``CIV ($K=3$)'' , ``CIV ($K=4$)''  denotes the proposed categorical IV estimators restricted to 2, 3, and 4 support points in the first stage. ``JIVE'', ``IJIVE'', and ``UJIVE'' denote the jackknife IV estimators of \citet{angrist1999jackknife}, \citet{ackerberg2009improved}, and \citet{kolesar2013estimation}, respectively. ``Lasso-IV (1)'' and ``Lasso-IV (2)'' denote post-lasso IV estimators proposed by \citet{belloni2012sparse} using two indicator constructions: ``Lasso-IV (1)'' uses the set of indicators for QOB and the interaction terms QOB$\times$POB, ``Lasso-IV (2)'' uses the set of indicators for the fully interacted set QOB$\times$POB. Lasso-IV (1) does not select any instruments in 58\% of replications when $n=32,950$, in 20\% of replications when $n=98,852$, and in 11\% of replications when $n=131,803$. Similarly, because of instruments with a single observation, the jackknife estimators do not compute in   17\% of replication when $n=32,950$, and in 1\% replications when $n=98,852$. The corresponding point estimates are omitted from computation of the summary statistics of Lasso-IV (1), JIVE, IJIVE, and UJIVE. For comparision: The OLS coefficient is 0.067 for all sample sizes, wiht a standard error between 0.001 and 0.0004.
  \end{tablenotes}
    \end{threeparttable}
\end{table}

\pagenumbering{gobble}

\end{document}\label{LastPage}